\documentclass{extarticle}
\usepackage{geometry}
\geometry{
    a4paper,
    left=3.5cm,
    right=3.5cm,
    top=3.5cm,
    bottom=3.5cm
}

\usepackage{titlesec}
\titleformat{\section}
  {\normalfont\fontsize{13}{13}\bfseries}{\thesection}{1em}{}
\titleformat{\subsection}
  {\normalfont\fontsize{11}{11}\bfseries}{\thesubsection}{1em}{}

\renewcommand\abstract{%
}

\usepackage{times}
\usepackage{soul}
\usepackage{url}
\usepackage[hidelinks]{hyperref}
\usepackage[utf8]{inputenc}
\usepackage[small]{caption}
\usepackage{graphicx}
\usepackage{amsmath}
\usepackage{amsthm}
\usepackage{booktabs}
\usepackage{algorithm}
\usepackage{algorithmic}
\usepackage[switch]{lineno}

\usepackage[T1]{fontenc}

\usepackage{amssymb}

\usepackage{caption}
\usepackage{subcaption}

\usepackage{xspace}

\usepackage{nicefrac}
\usepackage{bbm}
\usepackage{dsfont}
\newcommand{\1}{\mathds{1}}

\DeclareMathOperator{\poly}{poly}
\DeclareMathOperator{\ind}{\mathbbm{1}}

\usepackage{natbib}
\newcommand{\citespecial}[2]{\citeauthor{#1}~[\citeyear{#1},~#2]}

\newcommand{\np}{{{\mathrm{NP}}}}
\newcommand{\fpt}{{{\mathrm{FPT}}}}
\newcommand{\pref}{\succ}

\newcommand{\kmedian}{\textsc{$k$-Median}\xspace}
\newcommand{\kkemeny}{\textsc{$k$-Kemeny}\xspace}
\newcommand{\kkemenyamongvotes}{\textsc{$k$-Kemeny Among Votes}\xspace}
\newcommand{\maxkcover}{\textsc{Max $K$-Cover}\xspace}

\newcommand{\normphi}{{{\mathrm{norm}\hbox{-}\phi}}}

\newcommand{\swap}{\mathrm{swap}}
\newcommand{\ID}{ID\xspace}
\newcommand{\AN}{AN\xspace}
\newcommand{\UN}{UN\xspace}

\newcommand{\appUN}{\mathrm{UN}^*\xspace}
\newcommand{\appST}{\mathrm{ST}^*\xspace}

\newcommand{\calR}{{\mathcal{R}}}
\newcommand{\calS}{{\mathcal{S}}}
\newcommand{\calT}{{\mathcal{T}}}

\newcommand{\arxiv}[1]{#1}
\newcommand{\cameraready}[1]{}

\usepackage{ulem}
\normalem

\newtheorem{theorem}{Theorem}
\newtheorem{definition}{Definition}
\newtheorem{corollary}{Corollary}
\newtheorem{proposition}{Proposition}

\title{Diversity, Agreement, and Polarization in Elections}
\author {
    Piotr Faliszewski,\textsuperscript{\rm 1}
    Andrzej Kaczmarczyk,\textsuperscript{\rm 1}
    Krzysztof Sornat,\textsuperscript{\rm 2}\\
    Stanisław Szufa,\textsuperscript{\rm 1}
    Tomasz Wąs\textsuperscript{\rm 3}\vspace{0.3cm}\\
    \textsuperscript{\rm 1} AGH University, Poland\\
    \textsuperscript{\rm 2} IDSIA, USI-SUPSI, Switzerland\\
    \textsuperscript{\rm 3} Pennsylvania State University, PA, USA\vspace{0.3cm}\\
    {\small faliszew@agh.edu.pl, andrzej.kaczmarczyk@agh.edu.pl, krzysztof.sornat@idsia.ch,}\\
    {\small szufa@agh.edu.pl, twas@psu.edu}
}
\date{}

\begin{document}
\maketitle

\begin{abstract}
  We consider the notions of agreement, diversity, and polarization
  in ordinal elections (that is, in elections where voters rank the candidates).
  While (computational) social choice offers good measures of agreement
  between the voters, such measures for the other two notions are lacking.
  We attempt to rectify this issue by designing appropriate measures, providing
  means of their (approximate) computation, and arguing that they, indeed, capture
  diversity and polarization well. 
  In particular, we present ``maps of preference orders'' that
  highlight relations between the votes in a given election and which help in making arguments 
  about their nature.
\end{abstract}

\section{Introduction}

The notions of \emph{agreement}, \emph{diversity}, and
\emph{polarization} of a society with respect to some issue are
intuitively quite clear. In case of agreement, most members of the
society have very similar views regarding the issue, in case of
diversity there is a whole spectrum of opinions, and in case of
polarization there are two opposing camps with conflicting views and
with few people taking middle-ground positions (more generally, if
there are several camps, with clearly separated views, then we speak
of \emph{fragmentation}; see, for example, the collection of
\citet{dyn-tie:b:social-sciences}). We study these three notions for
the case of ordinal elections---that is, for elections where each voter
has a preference order (his or her vote) ranking the candidates from
the most to the least appealing one---and analyze ways of quantifying
them.%
\cameraready{\footnote{
More extensive analysis is available in the full version of our paper~\citep{arxiv_version}
and the code of our experiments is available at
\url{https://github.com/Project-PRAGMA/diversity-agreement-polarization-IJCAI23}.}}%
\arxiv{\footnote{
A conference version of this work appears in IJCAI 2023~\citep{ijcai-version}
and the code of our experiments is available at
\url{https://github.com/Project-PRAGMA/diversity-agreement-polarization-IJCAI23}.
}}

Interestingly, even though agreement, diversity, and polarization seem
rather fundamental concepts for understanding the state of a given society
(see, for example, the papers in a special issue edited
by~\citet{lev-mil-perr:dynamics-of-polit-polar}), so far
(computational) social choice mostly focused on
the agreement-disagreement spectrum.
Let us consider the following notion:
\begin{itemize}
\item[]%
  Given an election, the voters' agreement index for candidates $a$
  and $b$ is the absolute value of the difference between the fraction
  of the voters who prefer $a$ to $b$ and the fraction of those with
  the opposite view. Hence, if all voters rank $a$ over $b$ (or, all
  voters rank $b$ over $a$) then the agreement index for these
  candidates is equal to~$1$. On the other hand, if half of the voters
  report $a \pref b$ and half of them report $b \pref a$, then the
  index is equal to $0$. The agreement index of the whole election is
  the average over the agreement indices of all the candidate pairs.
\end{itemize}
For an election $E$, we denote its agreement index as $A(E)$.
\citet{alc:vor:j:cohesiveness} viewed this index as measuring voter
cohesiveness---which is simply a different term for voter
agreement---and provided its axiomatic characterization.
\citet{has-end:c:diversity-indices} focused on measuring diversity and
provided axiomatic and experimental analyses of a number of election
indices, including $1-A(E)$.
$1-A(E)$ was also characterized axiomatically by
\citet{can-ozk-sto:polarization}, who saw it as measuring
polarization; their point of view was that for each pair of candidates
one can measure polarization independently. (In
Section~\ref{sec:prelim} we briefly discuss other election indices
from the literature; generally, they are strongly interrelated with the
agreement one).

Our view is that $1-A(E)$ is neither a measure of diversity nor of
polarization, but of disagreement. Indeed, it has the same, highest
possible, value on both the antagonism election (AN), where half of
the voters report one preference order and the other half reports 
the
opposite one, and on the uniformity election (UN), where each possible
preference order occurs the same number of times.
Indeed, both these elections arguably represent extreme cases of
disagreement.
Yet, the nature of this disagreement is very different. In the former,
we see strong polarization, with the voters taking one of the two
opposing positions, and in the latter we see perfect diversity of
opinion.
The fundamental difference between these notions becomes clear in
the text of~\citet{lev-mil-perr:dynamics-of-polit-polar} which highlights
``the loss of diversity that extreme polarization creates'' as a central
theme of the related special issue.  Our main goal is to design
election indices that distinguish these notions.

Our new indices are based on what we call the $k$-\textsc{Kemeny}
problem. In the classic \textsc{Kemeny Ranking} problem (equivalent to
$1$-\textsc{Kemeny}), given an election we ask for a ranking whose sum
of swap distances to the votes is the smallest (a swap distance between
two rankings is the number of swaps of adjacent candidates needed to
transform one ranking into the other). The $k$-\textsc{Kemeny} problem
is defined analogously, but we ask for $k$ rankings that minimize the
sum of each vote's distance to the closest one (readers familiar with
multiwinner elections~\citep{fal-sko-sli-tal:b:multiwinner-voting} may
think of it as the Chamberlin--Courant rule~\citep{cha-cou:j:cc} for
committees of rankings rather than candidates). We refer to this value
as the $k$-Kemeny distance.
Unfortunately, the $k$-\textsc{Kemeny} problem is intractable---just like
\textsc{Kemeny
  Ranking}~\citep{bar-tov-tri:j:who-won,hem-spa-vog:j:kemeny}---so we
develop multiple ways (such as fast approximation algorithms) to
circumvent this issue.

Our polarization index is a normalized difference between the $1$-Kemeny
and $2$-Kemeny distances of an election, and our diversity index is a
weighted sum of the $k$-Kemeny distances for $k = 1, 2, 3,
\ldots$. The intuition for the former is that if a society is
completely polarized (that is, partitioned into two equal-sized groups
with opposing preference orders), then $1$-Kemeny distance is the
largest possible, but $2$-Kemeny distance is zero. The intuition for
the latter is that if a society is fully diverse (consists of all
possible votes) then each $k$-Kemeny distance is non-negligible (we use
weights for technical reasons). Since our agreement index can also be
seen as a variant of the \textsc{Kemeny Ranking} problem, where we
measure the distance to the majority relation, all these indices are
based on similar principles.

To evaluate our indices, we use the ``map of elections'' framework of
\citet{szu-fal-sko-sli-tal:c:map},
\citet{boe-bre-fal-nie-szu:c:compass}, and~\citet{boe-fal-nie-szu-was:c:map-measures},
applied to a dataset of randomly generated elections. In particular, we find
that our indices are correlated with the distances from several
characteristic points on the map and, hence, provide the map with a
semantic meaning.
Additionally, we develop a new form of a map that visualizes the
relations between the votes of a single election (the original maps
visualized relations between several elections from a given
dataset). We use this approach to get an insight regarding the
statistical cultures used to generate our dataset and to validate
intuitions regarding the agreement, diversity, and polarization of its
elections.
In our experiments, we focused on elections
with a relatively small number of candidates
(8 candidates and 96 voters).
While we believe that our main conclusions
extend to all sizes of elections,
it would be valuable to check this
(however, this would require
quite extensive computation that,
currently, is beyond our reach).

\section{Preliminaries}\label{sec:prelim}

For every number $k \in \mathbb{N}$, by $[k]$ we understand the set $\{1,\dots,k\}$.
For two sets $A$ and $B$ such that $|A|=|B|$,
by $\Pi(A,B)$ we mean the set of all bijections from $A$ to $B$.

\subsubsection*{Elections}  
An \emph{election} $E = (C,V)$ is a pair, 
where $C$ is a set of \emph{candidates} and $V$ is a collection of
\emph{voters} whose preferences (or, \emph{votes}) are represented as
linear orders over $C$ (we use the terms \emph{vote} and \emph{voter}
interchangeably, depending on the context).
For a vote~$v$, we write
$a \pref_v b$ (or, equivalently, $v \colon a \pref b$) to indicate
that $v$ \emph{prefers} candidate~$a$ over candidate~$b$. We also
extend this notation to more candidates. For example, for candidate
set $C = \{a,b,c\}$ by $v \colon a \pref b \pref c$ we mean that $v$
ranks $a$ first, $b$ second, and $c$~third.  For two candidates $a$
and $b$ from election $E$, by $p_E(a,b)$ we denote the fraction of
voters in $E$ that prefer $a$ over $b$.

We will often speak of the following
three characteristic elections, introduced by~\citet{boe-bre-fal-nie-szu:c:compass}
as ``compass elections'' 
(we assume candidate set $C = \{c_1, \ldots, c_m\}$ here;
\citet{boe-bre-fal-nie-szu:c:compass} also considered the fourth
election, i.e., \emph{stratification}, but it will not play an important role for
us):

\begin{description}
\item[Identity (\ID).] In an identity election all votes are
  identical. We view this election as being in perfect agreement.

\item[Antagonism (\AN).] In an antagonism election, exactly half of the voters
  have one preference order (for example,
  $c_1 \pref c_2 \pref \cdots \pref c_m)$ and the other half has the
  reversed one ($c_m \pref c_{m-1} \pref \cdots \pref c_1$).  We view
  this election as being perfectly polarized.

\item[Uniformity (\UN).] A uniformity election contains the same
  number of copies of every possible preference order. We view this
  election as being perfectly diverse.
\end{description}

\subsubsection*{Kemeny Rankings and Swap Distance}
For two votes $u$ and $v$ over a candidate set $C$, by $\swap(u,v)$ we
mean their \emph{swap distance}, that is, the minimal number of swaps of
consecutive candidates required to transform $u$ into~$v$. This value
is also known as \emph{Kendall's $\tau$} distance and is equal to the
number of candidate pairs $a,b \in C$ such that $a \pref_u b$ but
$b \pref_v a$.
A \emph{Kemeny ranking} of an election $E=(C,V)$ is a linear order
over $C$ that minimizes the sum of its swap distances to the votes
from $V$~\citep{kem:j:no-numbers}.  It is well known that computing a
Kemeny ranking is $\np$-hard \citep{bar-tov-tri:j:who-won} and, more
precisely, $\Theta_2^p$-complete~\citep{hem-spa-vog:j:kemeny}.

For two elections, $E = (C,V)$ and $F = (D,U)$, such that $|C|=|D|$,
$V = (v_1,\dots,v_n)$, and $U = (u_1,\dots,u_n)$, 
by $d_\swap(E,F)$ we denote their \emph{isomorphic swap
distance}~\citep{fal-sko-sli-szu-tal:c:isomorphism}, that is, the
(minimal) sum of swap distances between the votes in both elections,
given by optimal correspondences between their candidates and their
voters. Formally:
\[
  d_{\swap}(E, F)\hspace{-2pt} = \hspace{-12pt}\min_{\sigma \in \Pi([n],[n])}\min_{\pi \in \Pi(C, D)}
    \textstyle\sum_{i = 1}^n \swap( \pi(v_i), u_{\sigma(i)} ),
\]
where by $\pi(v_i)$ we denote vote $v_i$ with every candidate
$c \in C$ replaced by candidate $\pi(c)$.

\subsubsection*{Maps of Elections}

A \emph{map of elections} is a collection of elections
represented
on a 2D plane as points, so that the Euclidean distances between the
points reflect the similarity between the elections (the closer two
points are, the more similar should their elections be).  Maps of
elections were introduced by \citet{szu-fal-sko-sli-tal:c:map} (together with an
open-source Python library \emph{mapel}, which we use and build on) and
\citet{boe-bre-fal-nie-szu:c:compass}, who used the
distance based on position matrices of elections
as a measure of similarity.
We use the isomorphic swap distance instead. Indeed,
\citet{szu-fal-sko-sli-tal:c:map} and
\citet{boe-bre-fal-nie-szu:c:compass} admitted that isomorphic swap
distance would be more accurate but avoided it because it is hard to
compute (\citet{boe-fal-nie-szu-was:c:map-measures} analyzed the
consequences of using various distances). We are able to use the swap
distance because we focus on small candidate sets.
To present a set of elections as a map, we compute the distance
between each two elections and then run the multidimensional scaling
algorithm (MDS)\arxiv{\footnote{We use Python implementation from \emph{sklearn.manifold.MDS}.}} to find an embedding of points on a plane that
reflects the computed distances.  For an example of a map, see
Fig.~\ref{fig:swap-map:standard} at the end of the paper;
we describe its elections in Section~\ref{sec:cultures}.

\subsubsection*{Agreement and Other Election Indices}

\emph{Election index} is a function that given an election outputs a
real number.  The next index is among the most studied ones and
captures voter agreement. 

\begin{definition}
The \emph{agreement index} of an election $E=(C,V)$ is:
\begin{equation*}
\textstyle    A(E) = \left( \sum_{\{a,b\} \subseteq C} | p_E(a,b) - p_E(b,a) | \right) \Big/ \textstyle \binom{|C|}{2}.
\end{equation*}
\end{definition}
\noindent The agreement index takes values between $0$ and $1$,
where~$0$ means perfect disagreement and $1$ means perfect agreement.
Indeed, we have $A(\mathrm{ID}) = 1$ and
$A(\mathrm{UN}) = A(\mathrm{AN}) = 0$.

There is also a number of other election indices in the literature.
Somewhat disappointingly, they mostly fall into one or more of the
following categories: (1)~They are generalizations of the agreement
index (or its linear
transformation)~\citep{alc:vor:j:cohesiveness2,can-ozk-sto:polarization2};
(2)~They are highly correlated with the agreement index (at least on
  our datasets)~\citep{has-end:c:diversity-indices,kar:j:preference-diversity,alc-and-cas:consensus};
(3)~Their values come from a small set,
limiting their
  expressiveness and robustness~\citep{bos:consensus,has-end:c:diversity-indices}.

\section{Diversity and Polarization Indices}
In this section, we introduce our two new election indices, designed
to measure the levels of diversity and polarization in elections. Both of them
are defined on top of a generalization of the \textsc{Kemeny ranking}
problem (note that this generalization is quite different from that
studied by \citet{arr-fer-lok-oli-wol:c:diverse-kemeny} under a
related name).

\begin{definition}\label{def:k-kemeny-rankings}
  \emph{$k$-Kemeny rankings} of election $E=(C,V)$ are the elements of a set
  $\Lambda=\{\lambda_1,\dots,\lambda_k\}$ of $k$ linear orders over
  $C$ that minimize:
\[
    \textstyle \sum_{v \in V} \min_{i \in [k]} \swap(v, \lambda_i).
\]
The \emph{$k$-Kemeny distance},
$\kappa_k(E)$, is equal to
this minimum.
\end{definition}

We can think of finding $k$-Kemeny rankings as finding an optimal
split of votes into $k$ groups and minimizing the sum of each group's
distance to its Kemeny ranking.  Hence, $1$-Kemeny distance is simply
the distance of the voters from the (standard) Kemeny ranking. 
We will later argue that $\kappa_1(E)$ is closely related to the
agreement index. 

We want our diversity index to be high for \UN, but small for \AN and
\ID. For identity, $1$-Kemeny distance is equal to zero, but for
both \UN and \AN, $1$-Kemeny distance is equal to
$|V| \cdot \binom{|C|}{2} / 2$, which is the maximal possible value (as shown, for example,
by~\citet{boe-fal-nie-szu-was:c:map-measures}).
However, for $k \ge 2$ we observe a sharp difference between
$k$-Kemeny distances in these two elections. For \AN, we get distance
zero (it suffices to use the two opposing votes as the $k$-Kemeny
rankings),
and for \UN we get non-negligible positive distances (as long as $k$ is
smaller than the number of possible votes).
Motivated by this, we define the \emph{diversity} index as a
normalized sum of all $k$-Kemeny distances.
\begin{definition}
The \emph{diversity index} of an election $E=(C,V)$ is:
\begin{equation*}
\textstyle D(E) = \left( \sum_{k \in [|V|]} \kappa_k(E) / k \right) \Big/ \left( |V| \cdot \textstyle \binom{|C|}{2} \right).
\end{equation*}
\end{definition}
The sum in the definition is divided by the number of voters and the
maximal possible distance $\binom{|C|}{2}$ between two votes.
As a result, the values of the index are more consistent across
elections with different number of voters and candidates (for example,
diversity of \AN is always equal to $\nicefrac{1}{2}$).  Apart from
that, in the sum, each $k$-Kemeny distance is divided by $k$.  This
way, the values for large $k$ have lesser impact on the total value, and
it also improves scalability.  However, we note that even with
this division, diversity of \UN seems to grow slightly faster than
linearly with the growing number of candidates and there is a
significant gap between the value for \UN with all $m!$ possible votes and
even the most diverse election with significantly smaller number of voters.
The currently
defined diversity index works well on our datasets (see
Section~\ref{sec:results}), but finding a more robust normalization is
desirable (the obvious idea of dividing by the highest possible value of the sum
is challenging to implement
and does not prevent the vulnerability to changes in the voters count).

To construct the polarization index, we look at \AN and take advantage
of
the sudden drop from the maximal possible value of the $1$-Kemeny distance to
zero for the $2$-Kemeny distance. We view this drop as characteristic for polarized
elections because they include two opposing, but coherent, factions.
Consequently, we have the following definition (we divide by
$|V| \cdot \binom{C}{2}/2$ for normalization; the index takes
values between $0$, for the lowest polarization, and $1$, for the highest).

\begin{definition}
The \emph{polarization index} of an election $E = (C, V )$ is:
\begin{equation*}
\textstyle    P(E) = 2 \left( \kappa_1(E) - \kappa_2(E) \right) \Big/ \left( |V| \cdot \textstyle \binom{|C|}{2} \right).
\end{equation*}
\end{definition}

For \AN polarization is
one, while for \ID
it is
zero. For
\UN with 8~candidates, it is $0.232$. This is intuitive as in \UN
every vote
also
has its reverse. However, we have experimentally
checked that with a growing number of candidates the polarization of
\UN seems to
approach zero (e.g.,\ it is
$0.13$, $0.054$, and $0.024$ for, respectively, 20, 100, and 500~candidates).

We note that there is extensive literature on polarization measures in different settings, such us regarding distributions over continuous intervals~\citep{EstRay-1994-Polarization} or regarding opinions on networks~\citep{MusMusTso-2018-Polarization,HurOzk-2022-Polarization,TuNeu-2022-Polarization,ZhuZha-2022-Polarization} (we only mention a few example references).
On the technical level, this literature is quite different from our setting,
but finding meta connections could be very inspiring.

Concluding our discussion of the election indices, 
we note a connection between the agreement index and the $1$-Kemeny
distance.  Let $\mu$ be the \emph{majority relation} of an election
$E = (C,V)$, that is, a relation such that for candidates $a, b \in C$,
$a \succeq_\mu b$ if and only if $p_E(a,b) \ge p_E(b,a)$.  If $E$ does
not have a \emph{Condorcet cycle}, that is, there is no
cycle within $\mu$, 
then $\mu$ is identical to the Kemeny ranking.
As noted by~\citet{can-ozk-sto:polarization},
the agreement index can be expressed as a linear transformation of the
sum of the swap distances from all the votes to $\mu$
(we also formally prove it in Appendix~\ref{app:agr-1kemeny}).
Hence, if there is no Condorcet cycle,
the agreement index is strictly linked to
$\kappa_1(E)$ and all three of our indices are related.%

\section[Computation of k-Kemeny Distance]{Computation of $\boldsymbol{k}$-Kemeny Distance}
\label{sec:multikemeny}
We define an optimization problem \kkemeny
in which the goal is to find the $k$-Kemeny distance of a given election (see Definition~\ref{def:k-kemeny-rankings}).
In a decision variant of \kkemeny, we check if the $k$-Kemeny distance is at most a given value.
We note that \kkemeny is NP-hard~\citep{bar-tov-tri:j:who-won},
even for $k=1$ and $n=4$~\citep{dwo-kum-nao-siv:c:rank-aggregation}.
Hence, we seek polynomial-time approximation algorithms.

\subsection{Approximation Algorithms}\label{sec:aprox-algs}

While there is a \arxiv{polynomial-time approximation scheme (PTAS)}\cameraready{PTAS} for {\sc $1$-Kemeny}~\citep{KenyonMathieuS07}, it
is not obvious how to approximate even {\sc $2$-Kemeny}.  Yet, we
observe that \kkemeny is related to the classic facility location
problem \kmedian~\citep{WilliamsonShmoys11}.
In this problem we are given a set of clients~$X$, a set of potential facility
locations $F$, a natural number $k$, and a metric $d$ defined over
$X \cup F$.  The goal is to find a subset
$W = \{f_1, f_2, \dots, f_k\}$ of facilities which minimizes \emph{the
total connection cost} of the clients, that is,
$\sum_{x \in X} \min_{f \in W} d(x,f)$. We see that \kkemeny is equivalent to
\kmedian in which the set of clients are the votes from the input
election, the set of facilities is the set of all possible votes, and
the metric is the swap distance.
Hence, to approximate \kkemeny we can use approximation algorithms
designed for \kmedian.  The issue is that there are $m!$ possible
Kemeny rankings and the algorithms for \kmedian run in polynomial
time with respect to the number of facilities so they would need
exponential time.

We tackle the above issue by reducing the search space from all
possible rankings to those appearing in the input.  We call this
problem \kkemenyamongvotes and provide the following result.
\footnote{We note
that the special case of Theorem~\ref{thm:2-apx-by-among-votes}
for $k=1$ and $\alpha=1$ was proved by \citet{EndGra-2014-VoteKemeny}.}
\begin{theorem}\label{thm:2-apx-by-among-votes}
  An $\alpha$-approximate solution for \kkemenyamongvotes is a $2\alpha$-approximate solution for \kkemeny.
\end{theorem}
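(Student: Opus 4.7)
The plan is to follow the standard ``restrict to clients'' reduction from approximate $k$-median: show that if we force the $k$ rankings to come from the input votes, the optimum cost only inflates by a factor of at most $2$. Then any $\alpha$-approximate solution for the restricted problem, applied verbatim as a (feasible) solution to the unrestricted problem, has cost at most $2\alpha$ times the true $k$-Kemeny optimum.

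Concretely, let $E = (C,V)$ be the input election, let $\Lambda^* = \{\lambda_1^*, \ldots, \lambda_k^*\}$ be an optimal unrestricted $k$-Kemeny solution with cost $\kappa_k(E)$, and partition $V$ into clusters $V_1, \ldots, V_k$ where $V_i$ consists of the votes assigned to $\lambda_i^*$ (breaking ties arbitrarily). For each $i$, pick a representative vote
\[
  v_i^{*} \in \arg\min_{v \in V_i} \swap(v, \lambda_i^{*}).
\]
The key step is the triangle inequality for swap distance: for every $v \in V_i$,
\[
  \swap(v, v_i^{*}) \leq \swap(v, \lambda_i^{*}) + \swap(v_i^{*}, \lambda_i^{*}) \leq 2\,\swap(v, \lambda_i^{*}),
\]
where the second inequality uses that $v_i^{*}$ is the cluster member closest to $\lambda_i^{*}$. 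Summing over $i$ and over $v \in V_i$, the in-votes set $\{v_1^{*}, \ldots, v_k^{*}\}$ is a feasible \kkemenyamongvotes solution of cost at most $2\kappa_k(E)$. Thus the optimum $\kappa_k^{\mathrm{votes}}(E)$ of \kkemenyamongvotes satisfies $\kappa_k^{\mathrm{votes}}(E) \leq 2\kappa_k(E)$.

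Now let $\Lambda$ be an $\alpha$-approximate solution to \kkemenyamongvotes, so its cost is at most $\alpha \cdot \kappa_k^{\mathrm{votes}}(E) \leq 2\alpha \cdot \kappa_k(E)$. Since the elements of $\Lambda$ are linear orders over $C$, $\Lambda$ is also a feasible solution to \kkemeny with the exact same objective value. Hence it is a $2\alpha$-approximate solution to \kkemeny, as claimed.

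There is no real obstacle here: the whole argument relies on (i) the fact that swap distance is a genuine metric (so the triangle inequality applies), and (ii) that feasibility is inherited downward from \kkemeny to \kkemenyamongvotes solutions. The only thing to double-check is that the $\swap$ distance indeed satisfies the triangle inequality, which follows directly from its characterization as Kendall's $\tau$ (or, equivalently, from viewing swaps as edges in the Cayley graph of $S_{|C|}$ generated by adjacent transpositions).
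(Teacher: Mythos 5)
Your argument is correct and is essentially the paper's own proof: both construct a feasible \kkemenyamongvotes solution by replacing each optimal $k$-Kemeny ranking with a nearby input vote, then apply the triangle inequality together with the minimality of that representative to bound the detour by a factor of $2$, and finally observe that restricted solutions are feasible for the unrestricted problem at the same cost. The only cosmetic difference is that you pick each representative as the closest vote within its cluster, while the paper picks the closest vote over all of $V$; both choices yield the same bound.
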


This allows us to use the rich literature on approximation algorithms
for \kmedian~\citep{WilliamsonShmoys11}.  For example,
using the (currently best) $2.7$-approximation algorithms for
\kmedian~\citep{ByrkaPRST17,CohenAddadGLS23,GowdaPST2023} we get the following.
\begin{corollary}
  There is a polynomial-time $5.4$-approximation algorithm for \kkemeny.
\end{corollary}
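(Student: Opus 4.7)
The plan is to show that the optimal cost of \kkemenyamongvotes is at most twice the optimal cost of \kkemeny, and then to observe that any feasible solution of \kkemenyamongvotes is also a feasible solution of \kkemeny (since input votes are, themselves, rankings). Chaining these two facts with the $\alpha$-approximation guarantee immediately yields a $2\alpha$-approximation for \kkemeny.

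The heart of the argument is a standard triangle-inequality swap. Fix an optimal solution $\Lambda^* = \{\lambda_1,\dots,\lambda_k\}$ for \kkemeny on $E=(C,V)$, and partition $V$ into groups $V_1,\dots,V_k$, where $V_i$ consists of the voters assigned to $\lambda_i$ (breaking ties arbitrarily). For each $i$, let $v_i^\star \in V_i$ be a vote minimizing $\swap(v,\lambda_i)$ among $v \in V_i$; note $v_i^\star$ exists whenever $V_i \neq \emptyset$ (empty groups can be ignored since they do not contribute, and the number of rankings we actually need is at most $k$). The key step is then to bound, using the triangle inequality of the swap distance,
\[
\sum_{v \in V_i} \swap(v, v_i^\star) \;\le\; \sum_{v \in V_i} \bigl(\swap(v, \lambda_i) + \swap(\lambda_i, v_i^\star)\bigr)
\;\le\; 2 \sum_{v \in V_i} \swap(v, \lambda_i),
\]
where the last inequality uses that $\swap(\lambda_i, v_i^\star) \le \swap(\lambda_i, v)$ for every $v \in V_i$ by the choice of $v_i^\star$. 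Summing over $i$ and recalling that $\sum_i \sum_{v \in V_i} \swap(v, \lambda_i) = \kappa_k(E)$, we conclude that the collection $\{v_1^\star,\dots,v_k^\star\}$, which is a feasible solution to \kkemenyamongvotes, has cost at most $2 \kappa_k(E)$.

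Let $\kappa_k^{AV}(E)$ denote the optimal value of \kkemenyamongvotes on $E$. The bound above gives $\kappa_k^{AV}(E) \le 2\kappa_k(E)$. Now take any $\alpha$-approximate solution $\Lambda'$ for \kkemenyamongvotes; by definition its cost is at most $\alpha \cdot \kappa_k^{AV}(E) \le 2\alpha \cdot \kappa_k(E)$. Since $\Lambda'$ consists of linear orders over $C$, it is also a feasible solution for \kkemeny, and hence it is a $2\alpha$-approximation for \kkemeny, as claimed.

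The argument is essentially frictionless once the triangle-inequality step is in place; the only mild subtlety is the handling of voter-to-ranking ties and of possibly empty groups $V_i$ when defining $v_i^\star$, both of which are cosmetic. Conceptually, this is the natural lift to $k>1$ of the folklore observation (attributed in the footnote to Endriss and Grandi for $k=1$) that the best input vote is a $2$-approximation of a Kemeny ranking, with the partition $V_1,\dots,V_k$ playing the role of the single group in the classical case.
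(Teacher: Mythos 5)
Your triangle-inequality argument is correct and is essentially the paper's own proof of Theorem~\ref{thm:2-apx-by-among-votes}: the paper picks, for each optimal \kkemeny ranking $\gamma$, the input vote closest to it and charges each voter $v$ via $\swap(v,\gamma(v))+\swap(\gamma(v),v(\gamma(v)))\le 2\,\swap(v,\gamma(v))$, which is the same computation you organize by explicit clusters $V_1,\dots,V_k$. Your handling of ties and empty groups is indeed cosmetic. So the factor-$2$ reduction is in order.

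However, what you have proved is the theorem, not the corollary. The corollary asserts a concrete \emph{polynomial-time $5.4$-approximation algorithm}, and your argument only says that an $\alpha$-approximate solution to \kkemenyamongvotes yields a $2\alpha$-approximate solution to \kkemeny; nowhere do you exhibit a polynomial-time algorithm achieving any particular $\alpha$ for \kkemenyamongvotes, so the constant $5.4$ and the running-time claim are never reached. The missing ingredient is the observation that \kkemenyamongvotes is an instance of metric \kmedian in which both the clients and the candidate facility locations are the $n$ input votes and the metric is the swap distance (which satisfies the triangle inequality); since the facility set now has polynomial size, the $2.7$-approximation algorithms for \kmedian of \citet{ByrkaPRST17}, \citet{CohenAddadGLS23}, and \citet{GowdaPST2023} run in polynomial time on it, and combining $\alpha=2.7$ with your factor-$2$ loss gives $2\cdot 2.7 = 5.4$. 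Without invoking such an algorithm (this is precisely why the paper restricts the search space to the input votes in the first place), the corollary does not follow from your reduction alone.
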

The algorithms of \citet{ByrkaPRST17}, \citet{CohenAddadGLS23} and \citet{GowdaPST2023} are
based on a complex procedure for rounding a solution of a linear
program, which is difficult to implement.  Moreover, there are large
constants hidden in the running time.  Fortunately, there is a simple
local search algorithm for \kmedian which achieves
$(3+\frac{2}{p})$-approximation in time $|F|^p \cdot \poly(|F|,|X|)$,
where $p$ is the \emph{swap size} (as a basic building block, the
algorithm uses a swap operation which replaces $p$ centers with $p$
other ones, to locally minimize the connection
cost)~\citep{AryaGKMP01}.
\begin{corollary}\label{cor:kkem-local-search}
  There is a local search $(6+4/p)$-approximation algorithm for \kkemeny,
  where $p$ is the swap size.
\end{corollary}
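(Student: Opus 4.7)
The plan is to compose two ingredients already in hand: the $2$-approximation afforded by restricting to input votes (Theorem~\ref{thm:2-apx-by-among-votes}) and the local-search $(3+2/p)$-approximation of~\citet{AryaGKMP01} for \kmedian. The only real work is to verify that \kkemenyamongvotes fits cleanly into the metric \kmedian framework on which the local-search analysis relies.

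First, I would encode an instance $E=(C,V)$ of \kkemenyamongvotes, with $|V|=n$ and $|C|=m$, as a \kmedian instance by taking both the client set $X$ and the facility set $F$ to equal $V$ (so $|F|=n$, polynomial in the input), and by taking the distance function to be the swap distance $\swap$ on linear orders over $C$. Since $\swap$ is Kendall's~$\tau$, it is well known to be a metric (non-negative, symmetric, zero exactly on identical rankings, and satisfying the triangle inequality), so the Arya et al.\ analysis applies verbatim. Under this encoding, the \kmedian objective $\sum_{x \in X} \min_{f \in W} \swap(x,f)$ is, by construction, exactly the value of a solution $W \subseteq V$ of size $k$ for \kkemenyamongvotes.

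Next, I would run the local-search algorithm of~\citet{AryaGKMP01} with swap size $p$ on this instance. Starting from any $k$-subset of $V$, the procedure repeatedly examines every exchange of up to $p$ chosen facilities with up to $p$ unchosen ones and commits to any exchange that strictly decreases the total connection cost; each iteration takes $\binom{n}{p}^2 \cdot \poly(n,m) = n^{O(p)}\cdot\poly(n,m)$ time, and the number of iterations is polynomial because costs are integers bounded by $n\binom{m}{2}$. The termination guarantee is that the returned $W^\star$ is a $(3+2/p)$-approximation to the optimal \kkemenyamongvotes solution.

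Finally, I would invoke Theorem~\ref{thm:2-apx-by-among-votes}: since $W^\star$ is a $(3+2/p)$-approximation to \kkemenyamongvotes, when viewed as a collection of $k$ rankings it is a $2(3+2/p) = 6+4/p$ approximation to \kkemeny, which is the claimed bound. No step appears particularly delicate; the only thing to be careful about is ensuring that restricting facilities to the input votes (which is what makes $|F|$ polynomial, hence the local search efficient) is compatible with the $2$-factor loss absorbed by Theorem~\ref{thm:2-apx-by-among-votes}, and this is precisely the content of that theorem.
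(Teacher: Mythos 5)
Your proposal is correct and matches the paper's intended argument exactly: cast \kkemenyamongvotes as metric \kmedian with clients and facilities both equal to the input votes under the swap distance, run the \citet{AryaGKMP01} local search to get a $(3+2/p)$-approximation, and multiply by the factor $2$ from Theorem~\ref{thm:2-apx-by-among-votes}. The paper treats this composition as immediate and gives no separate proof, so your write-up (including the observation that integrality of swap distances bounds the number of improving iterations) fills in precisely the details the paper leaves implicit.
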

We implemented the local search algorithm for $p=1$ and used it in
our experiments (see Section~\ref{sec:results}).  We note that there
is a recent result~\citep{CohenAddadGHOS22} which shows that the same
local search algorithm actually has an approximation ratio
$2.83+\epsilon$, but at the cost of an enormous swap size (hence also the
running time)---for example, for approximation ratio below $3$ one needs
swap size larger than $10^{10000}$.

In our experiments in Section~\ref{sec:results}, we also use a greedy
algorithm, which constructs a solution for \kkemenyamongvotes
iteratively: It starts with an empty set of rankings and then, in each
iteration, it adds a ranking (from those appearing among the votes)
that decreases the $k$-Kemeny distance most.  It is an open question
if this algorithm achieves a bounded approximation ratio.

We also point out that using the PTAS for {\sc $1$-Kemeny}, we can
obtain an approximation scheme in parameterized time for \kkemeny
(parameterized by the number of voters;
note that an exact parameterized algorithm is unlikely as
{\sc $1$-Kemeny} is already $\np$-hard for four
voters~\citep{dwo-kum-nao-siv:c:rank-aggregation}). The idea is to
guess the partition of the voters and solve {\sc $1$-Kemeny} for
each group.

\begin{theorem}\label{thm:fpt-as-n}
  For every $\epsilon > 0$, there is a $(1+\epsilon)$-approximation
  algorithm for \kkemeny which runs in time $\fpt$ w.r.t.~$n$.
\end{theorem}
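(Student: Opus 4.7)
The plan is to exploit the structural observation that any $k$-Kemeny solution $\Lambda = \{\lambda_1,\dots,\lambda_k\}$ naturally induces a partition of the voters $V = V_1 \cup \cdots \cup V_k$, where each $V_i$ consists of the voters closest to $\lambda_i$ (ties broken arbitrarily). Conversely, once such a partition is fixed, the optimal contribution of group $V_i$ is achieved by its standard Kemeny ranking. So if we could afford to enumerate every partition and solve \textsc{1-Kemeny} exactly on each block, we would solve \kkemeny exactly. Since exact \textsc{1-Kemeny} is out of reach even for four voters, the idea is to replace the exact subroutine with the PTAS of \citet{KenyonMathieuS07}.

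Concretely, I would proceed as follows. First, enumerate all partitions of $V$ into at most $k$ nonempty parts; there are at most $k^n \le n^n$ such partitions, so the enumeration runs in FPT time with respect to $n$. Second, for each candidate partition $(V_1,\dots,V_k)$, invoke the $(1+\epsilon)$-approximation PTAS for \textsc{1-Kemeny} on each block $V_i$, obtaining a ranking $\hat{\lambda}_i$ with
\[
\textstyle \sum_{v \in V_i} \swap(v, \hat{\lambda}_i) \;\le\; (1+\epsilon)\, \min_{\lambda} \sum_{v \in V_i} \swap(v, \lambda).
\]
Third, compute the actual $k$-Kemeny cost of the resulting set $\hat{\Lambda} = \{\hat{\lambda}_1,\dots,\hat{\lambda}_k\}$ (allowing each voter to be reassigned to its nearest ranking), and output the best $\hat{\Lambda}$ across all enumerated partitions.

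For correctness, let $\Lambda^* = \{\lambda_1^*,\dots,\lambda_k^*\}$ be an optimal $k$-Kemeny solution with cost $\kappa_k(E)$, and let $V_1^*,\dots,V_k^*$ be its induced partition. When the enumeration reaches $(V_1^*,\dots,V_k^*)$, the PTAS on each block returns $\hat{\lambda}_i$ with $\sum_{v \in V_i^*}\swap(v,\hat{\lambda}_i) \le (1+\epsilon)\sum_{v \in V_i^*}\swap(v,\lambda_i^*)$. Summing over $i$ yields total cost at most $(1+\epsilon)\kappa_k(E)$, and the true $k$-Kemeny cost of $\hat{\Lambda}$ can only be smaller, since reassigning voters to their nearest ranking in $\hat{\Lambda}$ never increases the objective. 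For running time, the enumeration contributes $n^{O(n)}$, and each of the at most $k \le n$ PTAS calls runs in time polynomial in $n$, $m$, and a function of $1/\epsilon$, yielding an overall FPT bound in $n$.

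There is no serious obstacle here: the only point requiring slight care is making sure the approximation error does not compound across blocks, which is immediate because the guarantees are additive over the disjoint voter groups, and the final ``reassignment'' step can only help. The running-time bookkeeping is also routine once one accepts the $n^{O(n)}$ factor inherent in partition enumeration.
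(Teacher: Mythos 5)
Your proposal is correct and follows essentially the same route as the paper's proof: enumerate the (at most $k^n \le n^n$) clusterings of the voters, run the Kenyon-Mathieu--Schudy PTAS on each block, and argue that at the clustering induced by an optimal solution the additive per-block guarantees combine into a $(1+\epsilon)$ factor overall. The only cosmetic difference is that the paper precomputes the PTAS on all $2^n$ subsets before enumerating clusterings, whereas you invoke it per block inside the enumeration; both yield the same FPT bound in $n$.
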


All algorithms in this section, besides solving the decision
problem, also output the sought $k$-Kemeny rankings.

\subsection[Hardness of k-Kemeny Among Votes]{Hardness of $\boldsymbol{k}$-Kemeny Among Votes}\label{sec:hardness}
The reader may wonder why we use \kmedian algorithms
instead of solving \kkemenyamongvotes directly.
Unfortunately, even this restricted variant is intractable.
\begin{theorem}\label{thm:w2-kkemeny}
  \kkemenyamongvotes is $\np$-complete and $\mathrm{W}$[2]-hard when parameterized by~$k$.
\end{theorem}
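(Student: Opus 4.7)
The plan is to reduce from \textsc{Set Cover}, which is both $\np$-complete and $\mathrm{W}[2]$-hard when parameterized by the cover size $k$. Membership in $\np$ is immediate: a candidate solution consists of $k$ votes selected from the input, and the sum of each vote's swap distance to its closest representative can be computed in polynomial time.

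Given a Set Cover instance with universe $U = \{u_1,\dots,u_n\}$, family $\mathcal{F} = \{S_1,\dots,S_m\}$, and parameter $k$, I will construct in polynomial time an election $E = (C,V)$ and a threshold $T$ such that $\mathcal{F}$ contains $k$ sets whose union is $U$ if and only if $E$ admits $k$ representative votes (in the \kkemenyamongvotes sense) with total cost at most $T$. Since the parameter $k$ is preserved, the same reduction will yield both $\np$-hardness and $\mathrm{W}[2]$-hardness.

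For the construction, I will partition the candidate set into $n$ ``blocks'' $B_1,\dots,B_n$ (one per element), each of size $t$ for a sufficiently large $t$. For each set $S_j$, I add a ``set vote'' $v_j$ that lists the blocks in the fixed order $B_1 \pref B_2 \pref \cdots \pref B_n$ and orders the candidates inside block $B_i$ canonically if $u_i \in S_j$ and reversed otherwise. For each element $u_i$, I add $L$ identical copies of a ``demand vote'' $d_i$ that orders block $B_i$ canonically and every other block reversed. A direct count shows that the swap distance between $d_i$ and $v_j$ equals $(|S_j| - 1)\binom{t}{2}$ when $u_i \in S_j$ and $(|S_j| + 1)\binom{t}{2}$ otherwise, so the block size $t$ amplifies the covered-versus-uncovered gap. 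To prevent the $|S_j|$ baseline from distorting comparisons, I will first pad each set to a common size using dummy blocks that are ignored by the demand votes. The multiplicity $L$ is chosen large enough that any optimal solution must pick only set votes as representatives, because otherwise some demand vote would be left far from every chosen representative.

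Given a cover of size $k$, choosing the corresponding set votes yields cost at most $T$. Conversely, if $k$ chosen set votes fail to cover $U$, some element's demand vote (with its $L$ copies) remains far from every chosen representative, pushing the cost above $T$. The main obstacle will be the calibration of the gadgets: choosing $t$, $L$, the padding, and $T$ so that (i) the covered-versus-uncovered gap survives after the padding normalization, (ii) the pairwise distances among set votes themselves (which also contribute to the $k$-Kemeny cost) do not swamp the signal from the demand votes, and (iii) selecting a demand vote as a representative is strictly suboptimal. Once these parameters are calibrated, the equivalence between feasible set covers and low-cost representative sets becomes a simple accounting argument, and both hardness claims follow.
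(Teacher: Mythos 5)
Your high-level plan follows the same skeleton as the paper's proof (a parameterized reduction from a covering problem, with ``set votes'' and element gadgets, a multiplicity parameter to force the representatives to be set votes, and a threshold accounting argument), but the concrete gadget is broken, and not merely in the ``calibration'' sense you flag at the end. The fatal issue is the geometry of your demand votes. Two demand votes $d_i$ and $d_{i'}$ differ only inside blocks $B_i$ and $B_{i'}$, so their swap distance is exactly $2\binom{t}{2}$, whereas by your own computation every demand vote is at distance at least $(s-1)\binom{t}{2}$ from every set vote (with $s$ the common padded set size). Hence the $nL$ demand copies form a tight cluster, and a single representative equal to some $d_{i_0}$ serves all of them at total cost at most $2nL\binom{t}{2}$, while the intended cover solution pays $(s-1)nL\binom{t}{2}$ on the demand side. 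Since $s\ge\max_j|S_j|$ and you cannot assume bounded set sizes (\textsc{Set Cover} with sets of size at most $d$ is FPT in $k$, so $\mathrm{W}[2]$-hardness would be lost), for $s\ge 4$ the degenerate solution ``one demand vote plus $k-1$ arbitrary set votes'' is strictly cheaper than any cover-based solution, and it exists whether or not a cover exists; no threshold $T$ can separate yes- from no-instances. Your forcing argument is also inverted: placing the large multiplicity $L$ on the demand votes makes them \emph{more} attractive as representatives (a demand-vote representative absorbs its own $L$ copies at cost zero and the whole cluster cheaply), not less. The paper does the opposite, giving each \emph{set}-voter $L=N(M+4)$ copies and each element-voter a single copy, so that omitting even one set-voter type from the representative set already costs $2L(M-k+1)$, which exceeds the budget.

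The second problem, which you do name, is also structural rather than a matter of tuning: in your block encoding the distance between two set votes is $|S'_j\triangle S'_{j'}|\binom{t}{2}$, so the connection cost of the unchosen set votes depends on which sets are picked in a way unrelated to coverage, and it cannot be made both dominant enough to force set-vote representatives and small enough not to swamp the coverage signal. The paper's encoding avoids this by representing membership of $x_i$ in $S_j$ through the relative order of a single adjacent pair $\{c_{S_j},d_{S_j}\}$ on a fixed common backbone, plus three ``pivot'' candidates that put every element-voter at distance at least $3$ from every set-voter. This makes any two distinct set-voters lie at distance exactly $2$, so the unchosen set-voters contribute the constant $2L(M-K)$ regardless of the choice, and each covered element saves exactly $2$ swaps. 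To repair your proof you would need to replace the block gadget with an encoding having these two properties (uniform set-to-set distances, and element-to-set distances that do not grow with the set sizes), at which point you essentially arrive at the paper's construction.
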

\begin{proof}
  We give a 
  reduction from the \maxkcover problem (which is equivalent to the
  well-known Approval Chamberlin-Courant voting
  rule~\citep{ProcacciaRZ08}).  In \maxkcover we are given a set of
  elements $X = \{x_1, x_2, \dots, x_N\}$, a family
  $\calS = \{ S_1, S_2, \dots, S_M\}$ of nonempty, distinct subsets of
  $X$, and positive integers $K \leq M$ and~$T$.  The goal is to find
  $K$ subsets from $\calS$ which together cover at least $T$ elements
  from $X$.

  We take an instance $(X, \calS, K, T)$ of \maxkcover and construct
  an instance of \kkemenyamongvotes as follows.
  We create three \emph{pivot-candidates} $p_1$, $p_2$, and $p_3$.
  For every set $S \in \calS$, we create two \emph{set-candidates}
  $c_S$ and $d_S$ obtaining, in total, $m = 2M+3$~candidates.
  Next, we create the votes, each with the following \emph{vote
    structure}:
  \newcommand{\hso}{\hspace{-1pt}} 
  $$\{p_1, p_2, p_3\} \hso\pref\hso \{c_{S_1}, d_{S_1}\} \hso\pref\hso \{c_{S_2}, d_{S_2}\} \hso\pref\hso \dots \hso\pref\hso \{c_{S_M}, d_{S_M}\},
  $$
  where $\{c,d\}$ means that the order of candidates $c$ and
  $d$ is not specified. Hence, when defining a vote we will only
  specify the voter's preference on the unspecified pairs of
  candidates.

  For every set $S_j \in \calS$, we create
  $L=N(M+4)$~\emph{set-voters}~$v_j$ (we do not need to distinguish
  between these copies, hence we call any of them $v_j$) with the
  following specification over the vote structure:
\begin{align*}
  p_1 \pref_{v_j} p_2 \pref_{v_j} p_3;\quad
  d_{S_j} \pref_{v_j} c_{S_j};\quad c_S \pref_{v_j} d_S, \text{ for } S \neq S_j.
\end{align*}
For each two set-voters $u$ and $v$, $\swap(u,v) \in \{0,2\}$ and it
equals $0$ if and only if $u$ and $v$ come from the same set (our sets
are nonempty).

For every element $x_i \in X$, we create an \emph{element-voter} $e_i$ with the following specification over the vote structure:
\begin{align*}
  p_3 \pref_{e_i} p_2 \pref_{e_i} p_1;\quad
  d_S \pref_{e_i} c_S, \text{ for } e_i \in S;\\
  c_S \pref_{e_i} d_S, \text{ for } e_i \notin S.
\end{align*}
Note that for each element-voter $e_i$ and set voter $v_j$,
$\swap(e_i,v_j) \geq 3$.
In total we have $n = N(M^2+4M+1)$~voters.
We define $k=K$ and we ask if the $k$-Kemeny distance in
\kkemenyamongvotes is at most
$D = 2L(M-K) + \sum_{j \in [M]} |S_j| + 4N - 2T$.

The formal proof of correctness of the reduction is included in
Appendix~\ref{app:w2-kkemeny}.  We just notice that one direction
follows by taking $k$ set-voters corresponding to a solution for
\maxkcover.  The other one follows by observing that a solution to
\kkemenyamongvotes may contain only set-voters (because there are
$N(M+4)$ copies of each) and, hence, we can derive a corresponding
solution for \maxkcover.

In order to achieve the theorem statement we notice that \maxkcover is
W[2]-hard w.r.t.~$K$~\citep{CyganFKLMPPS15},\footnote{Actually, the
  result comes from W[2]-hardness of the {\sc Set Cover} problem and a
  folklore reduction to \maxkcover by setting $T=N$.} $k = K$, and the
reduction runs in polynomial time.
\end{proof}

Using the same reduction as in the proof of
Theorem~\ref{thm:w2-kkemeny}, we can provide more fine-grained hardness
results; they are presented in Appendix~\ref{app:propositions-thm3}.

\section{Statistical Cultures of Our Dataset}\label{sec:cultures}

Before we move on to our main experiments, we describe and analyze our
dataset.  It consists of 292 elections with 8 candidates and 96 voters
each, generated from several \emph{statistical cultures}, that is, models
of generating random elections (we describe its exact composition in
Appendix~\ref{app:assorted-dataset}).
For example, under \emph{impartial culture (IC)} each vote is drawn
uniformly at random from all possible votes
(thus, it closely resembles \UN).
We present our dataset as a map of elections on
Fig.~\ref{fig:swap-map:standard}.  In the appendix we consider also
two more datasets: extended dataset in which we include also elections
from additional statistical cultures not mentioned in this section
(Appendix~\ref{app:extended_dataset}); and Mallows dataset in which
the elections come from mixtures of two Mallows models
(Appendix~\ref{app:mallows_dataset}).

Below, we discuss each statistical culture used in our dataset and
build an intuition on how our indices should evaluate elections
generated from them.  To this end, we form a new type of a map, which
we call a \emph{map of preferences}, where we look at relations
between votes within a single election.  In other words, a map of
elections gives a bird's eye view of the space of elections, and a map
of preferences is a microscope view of a single election.

\subsection{Maps of Preferences}
To generate a map of preferences for a given election, we first
compute the (standard) swap distance between each pair
of its votes. Then, based on these distances, we create a map in the
same way as for maps of elections (that is, we use the multidimensional
scaling algorithm). We obtain a collection of points in 2D, where each
point corresponds to a vote in the election, and Euclidean distances
between the points resemble the swap distances between the votes they
represent.

For each model, we generated a \emph{single} election
and created its map of preferences.
The results are shown in Fig.~\ref{fig:microscope}.
The elections have $1000$ voters instead of $96$,
so that the pictures look similar
each time we draw an election from the model.
In Appendix~\ref{app:microscope_96},
we include the version with $96$ votes.

\begin{figure}
    \centering
    \includegraphics[width = \linewidth]{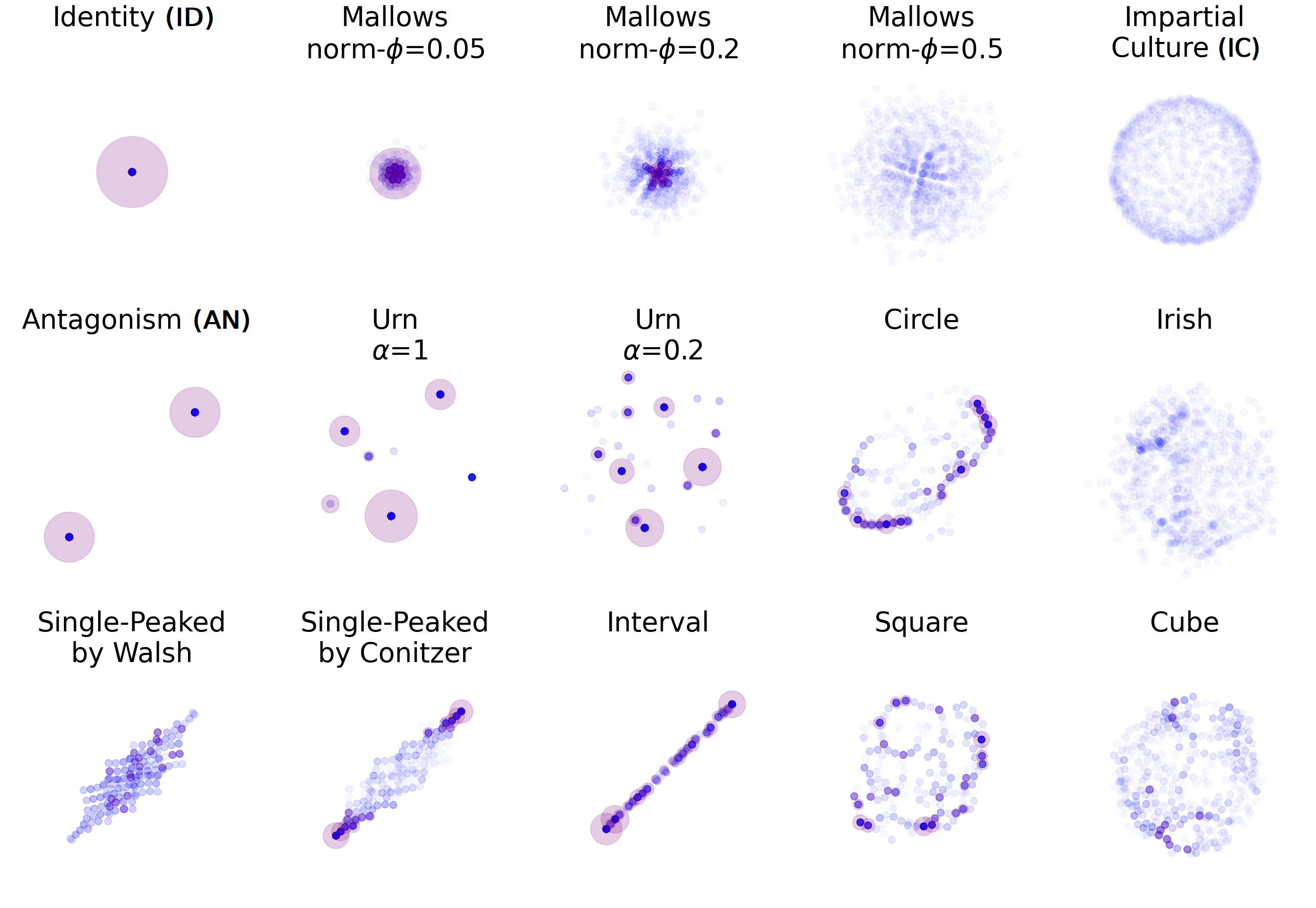}
    \caption{Maps of Preferences (8 candidates, 1000 voters).
    If there
      are more than 10 copies of the same vote, we add a purple disc
      with a radius proportional to the number of voters.
      }

    \label{fig:microscope}
\end{figure}

\subsection{Model Definitions and Analysis}

\subsubsection*{ID, AN, and IC}
We first consider \ID, \AN, and IC elections (which, for the time
being, covers for \UN).  \ID and \AN are shown as the first entries of
the first two rows in Fig.~\ref{fig:microscope}.
The former, with 1000 copies of the same vote, presented as a single
point with a large purple disc, embodies perfect agreement. The
latter, with 500 votes of one type and 500 its reverses, represents a
very polarized society, which is well captured by the two faraway
points with large discs on its map.
Under IC, whose map is the last one in the first row, we see no clear
structure except that, of course, there are many pairs of votes at
high swap distance (they form the higher-density rim). Yet, for each
such pair there are also many votes in between.  Hence, it is close to
being perfectly diverse.

We do not present \UN in our maps because it requires at least $m!$
votes. Indeed, from now on instead of considering \UN, we will talk
about its approximate variant, $\appUN$, 
which we generate by sampling votes from
its scaled position matrix
(see Appendix~\ref{app:assorted-dataset}
for details).

\subsubsection*{Mallows Model} 
The Mallows model is parameterized by the central vote $u$ and the
dispersion parameter~$\phi \in [0,1]$. Votes are generated
independently and the probability of generating a vote~$v$ is
proportional to $\phi^{\swap(u,v)}$. Instead of using the parameter
$\phi$ directly, we follow \citet{boe-bre-fal-nie-szu:c:compass} and
use its normalized variant, 
$\normphi \in [0,1]$, which is internally converted to~$\phi$ (see
their work for details; with $8$ candidates the conversion is nearly linear).
For $\normphi = 1$, the Mallows model is equivalent to IC, for
$\normphi = 0$ it is equivalent to \ID%
, and for values in between we get a smooth transition between these
extremes (or, between agreement and diversity, to use our high-level
notions).
We see this 
in the first row of Fig.~\ref{fig:microscope}.

\subsubsection*{Urn Model} In the Pólya-Eggenberger urn
model~\citep{ber:j:urn-model,mcc-sli:j:similarity-rules}, we have a
parameter of contagion $\alpha \in [0,\infty)$. We start with an urn
containing one copy of each possible vote and we repeat the following
process~$n$ times: We draw a vote from the urn, its copy is included
in the election, and the vote, together with $\alpha \cdot m!$ copies,
is returned to the urn. For $\alpha = 0$ the model is equivalent to
IC. The larger is the $\alpha$ value, the stronger is the correlation
between the votes.

In Fig.~\ref{fig:microscope}, urn elections (shown in the middle of the
second row) consist of very few distinct votes. For example, for
$\alpha = 1$ we only have seven votes, thus this election's map looks
similarly to that for \AN---few points with discs. Such elections,
with several popular views but without a spectrum of opinions in
between, are known as
\emph{fragmented}~\citep{dyn-tie:b:social-sciences}.  Hence, we expect
their diversity to be small.
As $\alpha$ decreases, urn elections become less fragmented.

We upper-bound the expected number of different votes in an urn
election with~$m$ candidates, $n$ voters (where $n$ is significantly
smaller than $m!$), and parameter $\alpha$ by
$\sum_{i=1}^{n} \nicefrac{1}{(1+(i-1)\alpha)}$ (the first vote is
always unique, the second one is drawn from the original $m!$ votes
from the urn with probability $\nicefrac{1}{(1+\alpha)}$, and so on;
if we draw one of the original votes from the urn it still might be
the same as one of the previous ones, but this happens with a small
probability when $n$ is significantly smaller than $m!$). For example,
for $n=1000$ and $\alpha$ equal to $1$, 
our formula gives $7.48$.
In the literature, authors often use
$\alpha =
1$~\citep{erd-fel-rot-sch:j:bucklin-fallback-exp,kel-has-haz:j:approx-manipulation,wal:j:hard-manipulation},
sometimes explicitly noting the strong correlations and modifying the
model~\citep{erd-fel-rot-sch:j:bucklin-fallback-exp}.  However, smaller
values of $\alpha$ also are
used~\citep{sko-fal-sli:j:multiwinner,mcc-sli:j:similarity-rules}.
Since $\alpha=1$ gives very particular elections, it should be used
consciously.

\subsubsection*{Single-Peaked Elections}
Single-peaked elections~\citep{bla:b:polsci:committees-elections}
capture scenarios where voters have a spectrum of opinions between two
extremes (like choosing a preferred temperature in a room).

\begin{definition}[\citet{bla:b:polsci:committees-elections}]
  Let $C$
  be a set of candidates and let $\rhd$ be an order over
  $C$, called the societal axis.
  A vote is single-peaked with respect to $\rhd$ if for each $t
  \in [|C|]$, its top $t$ candidates form an interval w.r.t.~$\rhd$.
  An election is single-peaked (w.r.t.~$\rhd$) if its votes are.
\end{definition}

We use the \emph{Walsh}~\citep{wal:t:generate-sp} and the
\emph{Conitzer} \emph{(random peak)}
models~\citep{con:j:eliciting-singlepeaked} of generating
single-peaked
elections. 
In the former, we fix the societal axis and choose votes single-peaked
with respect to it uniformly at random (so we can look at it as
IC over the single-peaked domain).
In the Conitzer model we also first fix the
axis, and then generate each vote as follows: We choose the top-ranked
candidate uniformly at random and fill-in the following positions by
choosing either the candidate directly to the left or directly to the
right of the already selected ones on the axis, with probability
$\nicefrac{1}{2}$
(at some point we run out of the candidates on one side and then
only use the other one).

In Fig.~\ref{fig:microscope}, Conitzer and Walsh elections are
similar, but the former one has more votes at large swap distance.
Indeed, under the Conitzer model, we generate a vote equal to the axis
(or its reverse) with probability $\nicefrac{2}{m}$, which for $m=8$
is $25\%$. Under the Walsh model, this happens with probability
$1.5\%$ (it is known there are $2^{m-1}$ different single-peaked
votes and Walsh model chooses each of them with equal
probability). Hence, our Conitzer elections are more polarized (see the purple
discs at the farthest points) than the Walsh ones, and Walsh ones appear to be
more in agreement (in other words, the map for the Conitzer election is more
similar to that for \AN, and the map for Walsh election is more similar to
\ID).

\begin{figure}
    \centering
    \begin{subfigure}{0.48\textwidth}
        \includegraphics[width=\textwidth]{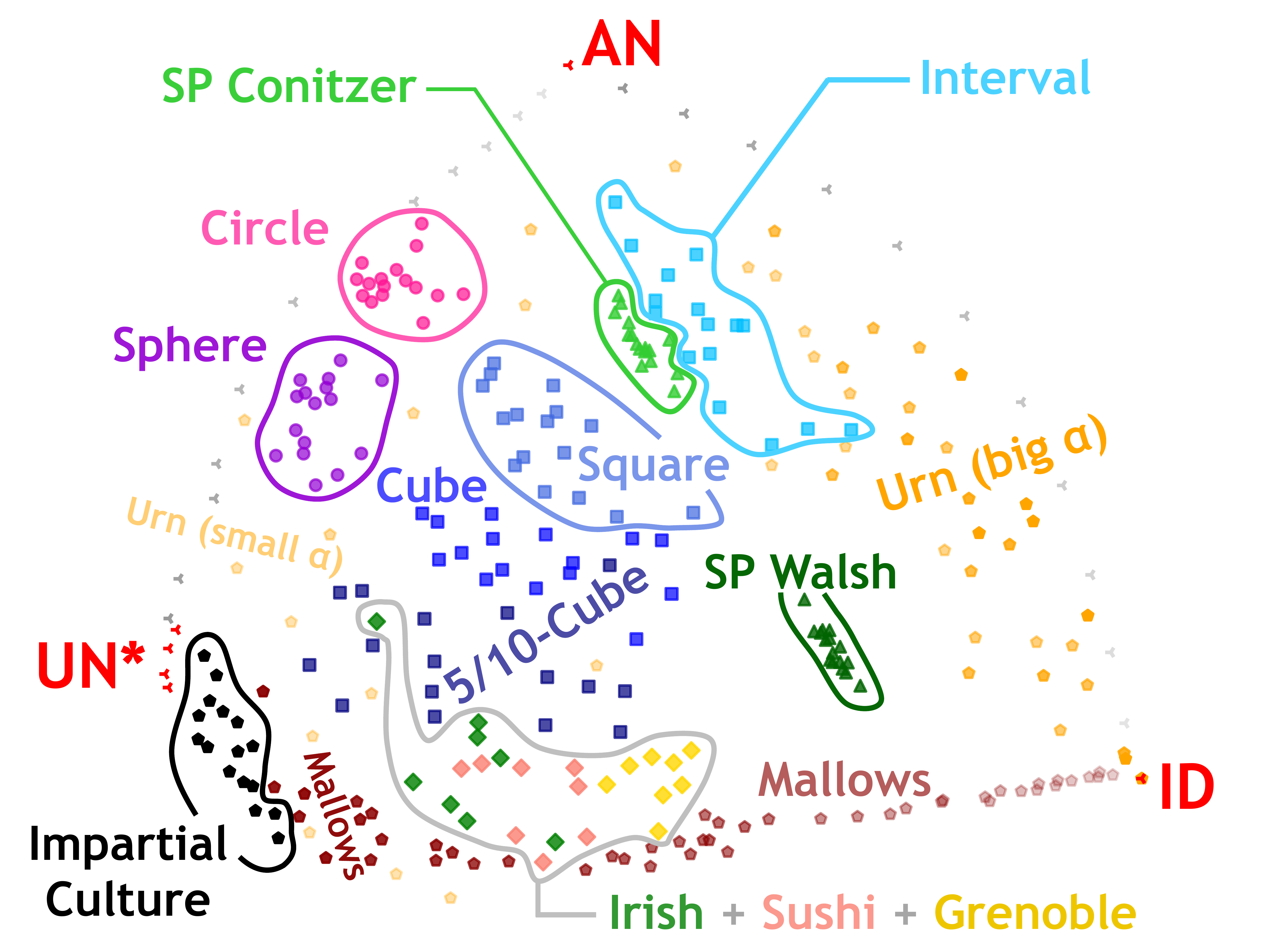}
        \caption{Standard map of elections.}
        \label{fig:swap-map:standard}
    \end{subfigure}
    \hfill
    \begin{subfigure}{0.48\textwidth}
        \includegraphics[width=\textwidth]{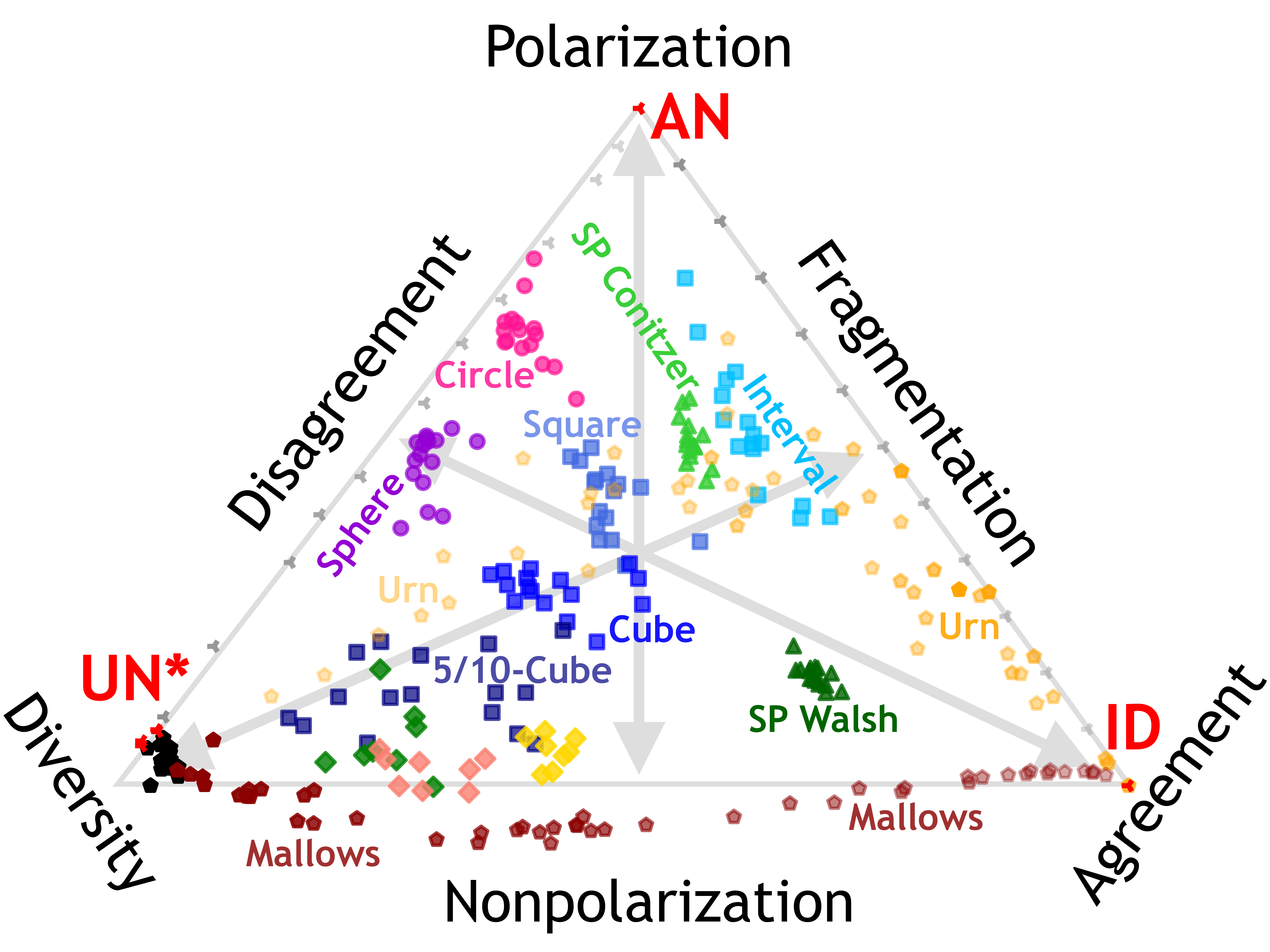}
        \caption{Agreement-diversity map.}
        \label{fig:triangle-map:standard}
    \end{subfigure}
    \caption{%
    Plot (a) shows a
    map of elections in our dataset obtained using isomorphic swap distance and MDS.
    Plot (b) presents an
    affine transformation
    of a plot where x/y coordinates of the elections are their agreement/diversity indices.}
\end{figure}

\subsubsection*{Euclidean Models}
In $d$-dimensional Euclidean elections ($d$-Euclidean elections) every
candidate and every voter is a point in $\mathbb{R}^d$, and a voter
prefers candidate $a$ to candidate $b$ if his or her point is closer
to that of $a$ than to that of $b$.  To generate such elections, we
sample the candidate and voter points
as follows: (a) In the $d$-Cube model, we sample the points uniformly
at random from a $d$-dimensional hypercube $[0,1]^d$, and (b) in the
Circle and Sphere models we sample them uniformly at random from a
circle (embedded in 2D space) and a sphere (embedded in 3D space).  We
refer to the 1-Cube, 2-Cube, and 3-Cube models as, respectively,
the Interval, Square, and Cube models.
In Fig.~\ref{fig:microscope}, we see that as the dimension increases,
the elections become more similar to the IC one (see the transition
from the Interval to the Cube one). The Interval election is very
similar to those of Conitzer and Walsh, because 1-Euclidean elections
are single-peaked. It is also worth noting that the Circle election is
quite polarized (we see an increased density of votes on two opposite
sides of its map).

\subsubsection*{Irish and Other Elections Based on Real-Life Data}

We also consider elections generated based on real-life data from a
2002 political election in Dublin~\citep{mat-wal:c:preflib}. We treat
the full Irish data as a distribution and sample votes from it as
from a statistical culture (technical details in~Appendix~\ref{app:preprocessing}).
The Irish election in Fig.~\ref{fig:microscope} is, in some sense,
between the Cube and Mallows ones for $\normphi =
0.5$. Intuitively, we would say that it is quite diverse.
In the dataset, we also include Sushi and Grenoble elections, similarly
generated using different real-life data~\citep{mat-wal:c:preflib}.

\section{Final Experiments and Conclusion}
\label{sec:results}

In this section we present the results of computing the agreement,
diversity, and polarization indices on 
our dataset.

\subsection{Computing the Indices in Practice}

First, we compared three ways of computing $k$-Kemeny distances: the
greedy approach, the local search with swap size equal to $1$, and a combined
heuristic where we first calculate the greedy solution and then try to
improve it using the local search. We ran all three algorithms for all
$k \in [96]$ and for every election in 
our dataset. The complete results are in
Appendix~\ref{app:kkemeny-computation}.  The conclusion is that the local search and the
combined heuristic gave very similar outcomes and both outperformed
the greedy approach.
Hence, in further computations, we used the former two algorithm and
took the smaller of their outputs.

\begin{figure}[t]
     \centering
     \begin{subfigure}[t]{0.32\textwidth}
         \centering
         \includegraphics[width=\textwidth]{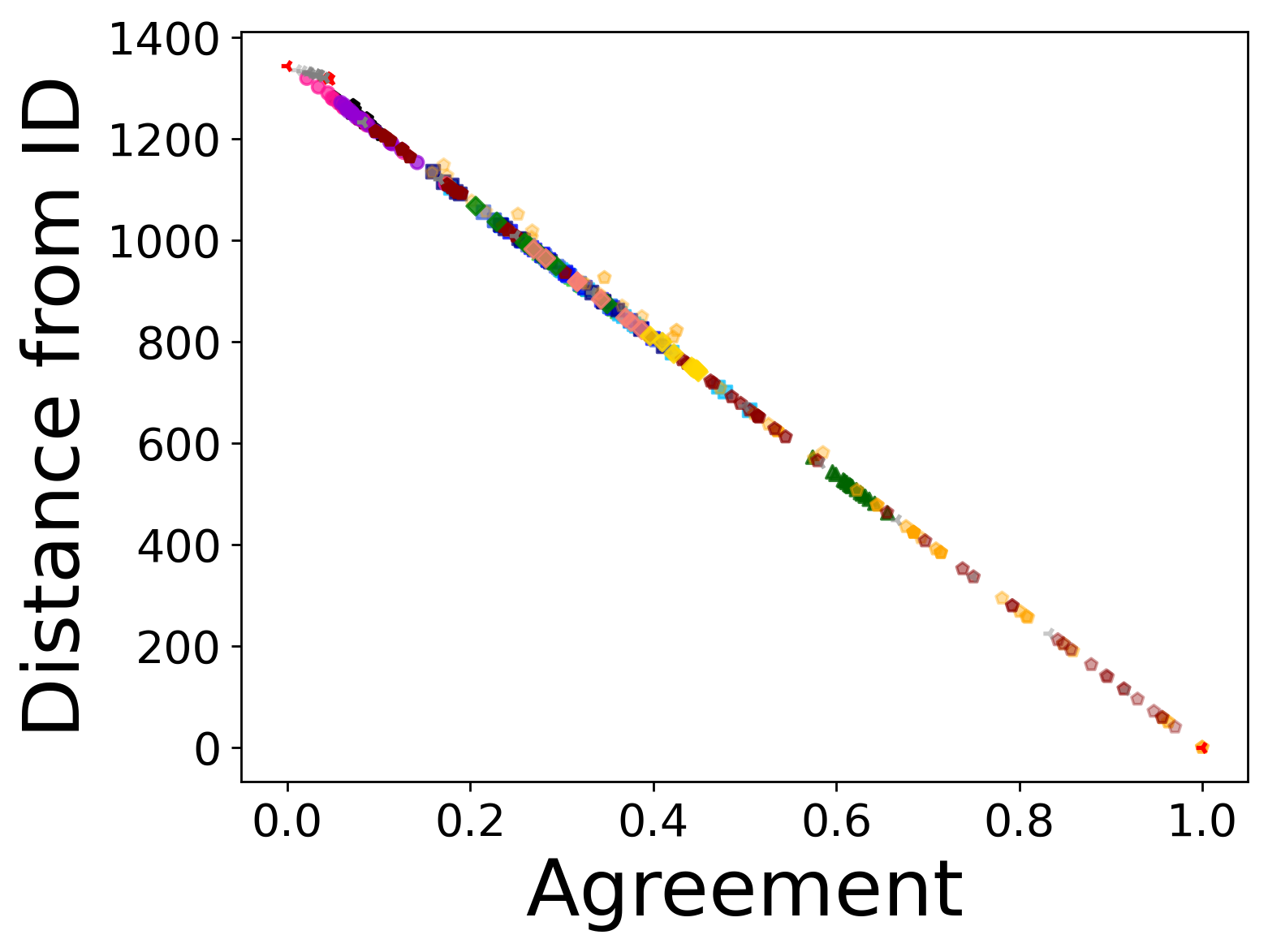}
     \end{subfigure}
     \hfill
     \begin{subfigure}[t]{0.32\textwidth}
         \centering
         \includegraphics[width=\textwidth]{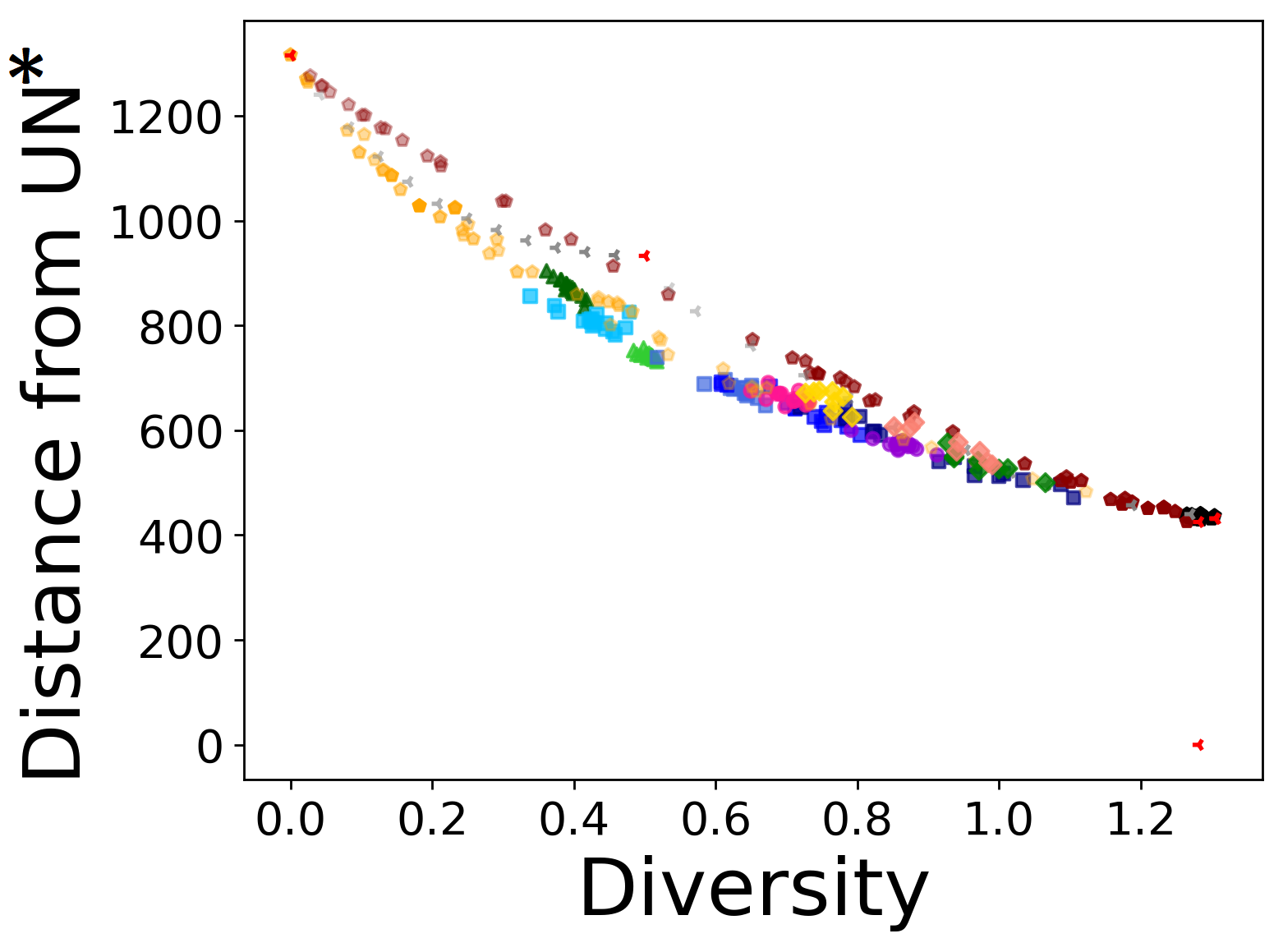}
     \end{subfigure}
     \hfill
     \begin{subfigure}[t]{0.32\textwidth}
         \centering
         \includegraphics[width=\textwidth]{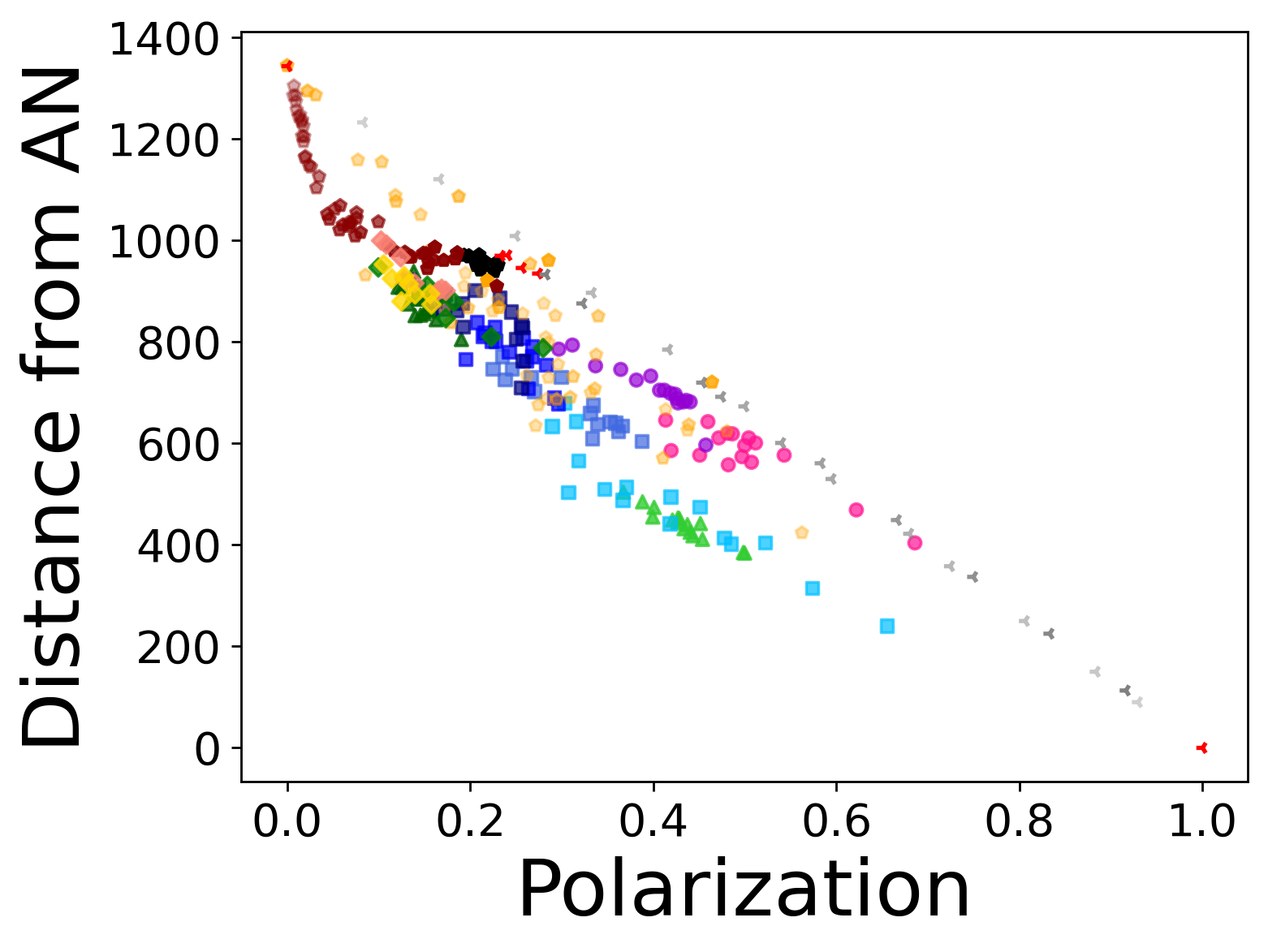}
    \end{subfigure}
    \caption{\label{fig:corr}Correlation between our indices and the distance from the respective compass election.}
\end{figure}

\subsection{Understanding the Map via Agreement, Diversity, and Polarization}

Using the $\kappa_k(E)$ values computed in the preceding experiment,
we calculated diversity and polarization indices of all the elections
from our datasets, along with their agreement indices (which are
straightforward to compute). We illustrate the results in several
ways.

\arxiv{%
First, we consider Fig.~\ref{fig:corr}.  In the leftmost plot, each
election from our dataset is represented as a dot whose x/y
coordinates are the values of the agreement index and the distance
from~\ID, and whose color corresponds to the
statistical culture from which it comes (it is the same as in
Fig.~\ref{fig:swap-map:standard}, though due to large density of the
dots, this only gives a rough idea of the nature of the
elections). The next two plots on the right are analogous, except that it regards
diversity or polarization and the distance from $\appUN$ or \AN, respectively.}
\cameraready{%
First, we consider Fig.~\ref{fig:corr}.  In the left plot, each
election from our dataset is represented as a dot whose x/y
coordinates are the values of the diversity index and the distance
from~$\appUN$, and whose color corresponds to the
statistical culture from which it comes (it is the same as in
Fig.~\ref{fig:swap-map:standard}, though due to large density of the
dots, this only gives a rough idea of the nature of the
elections). The plot on the right is analogous, except that it regards
polarization and distance from \AN. An analogous plot for agreement
and distance from ID is almost perfectly linear (we present it in
Appendix~\ref{app:plots}).}
The Pearson correlation coefficient between
each of the three indices and the distance from the respective compass
election is below $-0.9$, which means that the correlation is very
strong. This is our first indication that the locations on the map of
elections, in particular, the one from
Fig.~\ref{fig:swap-map:standard}, can be understood in terms of
agreement, diversity, and polarization.

Next, for all three pairs of our indices 
we plotted our dataset in such a way that each election's x/y
coordinates are the values of the respective indices
(these plots can be found in Appendix~\ref{app:plots}). We observed
that each of these plots 
resembles the original map from Fig.~\ref{fig:swap-map:standard}. Hence,
for the sake of clearer comparison, we took the plot for agreement and
diversity indices and, by an affine transformation, converted it to a
roughly equilateral triangle spanned between \ID, \AN, and~$\appUN$.
Fig.~\ref{fig:triangle-map:standard} presents the result of
this operation.

The similarity between Figs.~\ref{fig:swap-map:standard}
and~\ref{fig:triangle-map:standard} is striking as most elections can
be found in analogous locations.  Even the positions of the outliers
in the groups are, at least approximately, preserved. Yet, there are
also differences. For example, in Fig.~\ref{fig:triangle-map:standard}
elections from most of the statistical cultures are closer to each
other, whereas on Fig.~\ref{fig:swap-map:standard} they are more
scattered.
Nonetheless, the similarity between these two figures is our second
argument for understanding the map in terms of agreement, diversity,
and polarization. Specifically, the closer an election is to \ID, \AN,
or $\appUN$, the more agreement, polarization, or diversity it exhibits.

\subsection{Validation Against Intuition}

Finally, let us check our intuitions from Section~\ref{sec:cultures}
against the actually computed values of the indices, as presented on
the plot from Fig.~\ref{fig:triangle-map:standard}. We make the
following observations:
\begin{enumerate}
\item We see that Mallows elections indeed progress from ID (for which
  we use $\normphi = 0$) to IC (for which we use $\normphi = 1$), with
  intermediate values of $\normphi$ in between. The model indeed
  generates elections on the agreement-diversity spectrum.
\item Elections generated using the urn model with large value of
  $\alpha$ appear on the agreement-polarization line. Indeed, for very
  large values of $\alpha$ nearly all the votes are identical, but for
  smaller values we see polarization effects. Finally, as the values
  of $\alpha$ go toward $0$, the votes become more and more diverse.
\item Walsh elections are closer to agreement (\ID) and Conitzer
  elections are closer to polarization (\AN).
\item High-dimensional Cube elections have fairly high
  diversity. Circle and Sphere elections are between diversity
  and polarization.
\item Irish elections are between Mallows and high-dimensional Cube elections.
\end{enumerate}
All in all, this confirms our intuitions and expectations.

\section{Summary}
The starting point of our work was an observation that the measures of
diversity and polarization used in computational social choice
literature should, rather, be seen as measures of disagreement. We
have proposed two new measures and we have argued that they do capture
diversity and polarization. On the negative side, our measures are
computationally intractable. Hence, finding a measure that would be easy
to compute but that would maintain the intuitive appeal of our ones is
an interesting research topic.

\section*{Acknowledgements}
Krzysztof Sornat was supported by the SNSF Grant 200021\_200731/1.
This project has received funding from the European Research Council
(ERC) under the European Union’s Horizon 2020 research and innovation
programme (grant agreement No 101002854).
    
\begin{center}
  \includegraphics[width=3cm]{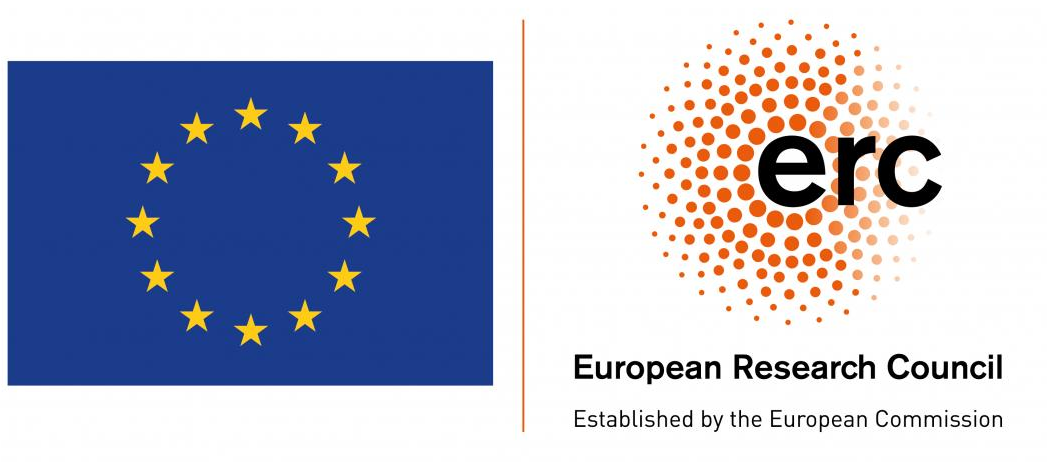}
\end{center}

\bibliographystyle{plainnat}
\bibliography{bib}

\appendix

\section*{Appendix}
\section[Agreement and 1-Kemeny Distance]{Agreement and $\boldsymbol{1}$-Kemeny Distance}
\label{app:agr-1kemeny}
In this section,
we show the relation between the agreement index and $1$-Kemeny distance.
By $\mu$ let us denote the \emph{majority relation}, which
is a (possibly intransitive and not asymmetric) relation on the set of candidates such
that for each $a, b \in C$, $a \succeq_\mu b$ if and only if
$p_E(a,b) \ge p_E(b,a)$.
Let say that the linear order $\lambda$ of candidates
is \emph{consistent} with $\mu$ if
$a \pref_\lambda b$ implies $a \succeq_\mu b$,
for every $a, b \in C$.
Observe that if an election $E = (C,V)$
does not have a \emph{Condorcet cycle},
i.e., there is no sequence of
candidates $c_1,\dots,c_k \in C$ such that
$p_E(c_i,c_{i+1}) \ge p_E(c_{i+1},c_i)$, for every $i \in [k-1]$, and
$p_E(c_k,c_1) > p_E(c_1,c_k)$,
then the set of preference orders
consistent with $\mu$ is the set of all Kemeny rankings.

The Kendall's $\tau$ distance can be generalized for any relations.
For every linear order $\lambda$ over candidates $C$, we have
\[
    \tau(\lambda, \mu) = \frac{1}{2}
        \sum_{a,b \in C}
        \left(
        \1_{a \pref_\lambda b} \cdot \1_{a \not\succeq_\mu b} +
        \1_{b \succeq_\mu a} \cdot \1_{b \not\pref_\lambda a}
        \right).
\]
In other words, for every pair of candidates $a, b$
for which $a \pref_\lambda b$,
we count 1, if $b \succeq_\mu a$ and $a \not\succeq_\mu b$,
and $\nicefrac{1}{2}$, if $b \succeq_\mu a$ but also $a \succeq_\mu b$.
As we show in the following proposition,
there is a strict relation between the agreement index and
the average Kendall's $\tau$ distance from all votes to the
majority relation.

\begin{proposition}
For every election $E=(C,V)$, it holds that
\[
    A(E) = 1 - \frac{2 \cdot \sum_{v \in V} \tau(v, \mu)}{|V| \cdot \textstyle \binom{|C|}{2}}.
\]
\end{proposition}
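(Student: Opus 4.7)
The plan is to reduce the identity to a per-pair contribution statement. Since each vote is a linear order, $p_E(a,b) + p_E(b,a) = 1$ for every unordered pair $\{a,b\}$, so
\[
|p_E(a,b) - p_E(b,a)| \;=\; 1 - 2\min\!\bigl(p_E(a,b),\,p_E(b,a)\bigr).
\]
Hence it is enough to show that
\[
\sum_{v \in V} \tau(v, \mu) \;=\; |V| \cdot \sum_{\{a,b\}\subseteq C} \min\!\bigl(p_E(a,b),\,p_E(b,a)\bigr),
\]
after which the claim follows by plugging this into the definition of $A(E)$ and simplifying.

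To establish the displayed identity, I would unpack $\tau(v,\mu)$ and collect, for each unordered pair $\{a,b\}$, the summands coming from the two ordered pairs $(a,b)$ and $(b,a)$ in the double sum. A short case analysis on $\mu$ then does everything. If $a \succeq_\mu b$ and $b \not\succeq_\mu a$ (equivalently, $p_E(a,b) > p_E(b,a)$), then both summands coming from the ordered pair $(b,a)$ evaluate to $\1_{b \pref_v a}$ while those coming from $(a,b)$ vanish, so after the leading $\tfrac{1}{2}$ the pair $\{a,b\}$ contributes $1$ whenever $v$ disagrees with $\mu$ on this pair and $0$ otherwise. If instead $a$ and $b$ are tied in $\mu$ (so both $a \succeq_\mu b$ and $b \succeq_\mu a$), exactly one of the four indicator products in the two ordered-pair summands is nonzero, regardless of how $v$ orders $a$ and $b$, so the per-pair contribution is always $\tfrac{1}{2}$.

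Summing these contributions over $v \in V$ yields $|V| \cdot p_E(b,a)$ in the strict case (that being the fraction of voters reversing the majority edge, under the assumption $p_E(a,b) > p_E(b,a)$) and $|V|/2$ in the tie case; both evaluate to $|V| \cdot \min(p_E(a,b), p_E(b,a))$, as required. The rest is direct arithmetic. The one place where care is needed is the bookkeeping of the leading $\tfrac{1}{2}$ in the definition of $\tau$ against the fact that the double sum counts each unordered pair twice; the tie case, where the two ordered-pair summands must together contribute exactly $1$, is the cleanest sanity check that this is handled correctly.
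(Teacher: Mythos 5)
Your proof is correct and follows essentially the same route as the paper's: both decompose $\tau(v,\mu)$ into per-pair contributions $\tau_{a,b}(v,\mu)$, split the unordered pairs into those tied under $\mu$ and those with a strict majority winner, and verify that the tie case contributes a constant $\nicefrac{1}{2}$ while the strict case counts exactly the voters disagreeing with the majority. Your packaging via $|p_E(a,b)-p_E(b,a)| = 1 - 2\min(p_E(a,b),p_E(b,a))$ is a slightly tidier way to unify the two cases than the paper's separate treatment of the sets $A$ and $B$, but the underlying computation is identical.
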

\begin{proof}
We split the set of pairs of candidates into two subsets:
$A$ containing the pairs with perfect disagreement,
and $B$ with the pairs for which some opinion is stronger than the other.
Formally, let
$A = \{ \{a,b\} \subseteq C : p_E(a,b) = p_E(b,a)\}$
and
$B = \{ \{a,b\} \subseteq C : p_E(a,b) \neq p_E(b,a) \}$.
Without loss of generality,
throughout the proof we assume that for pair $\{a,b\} \in B$
we have $a \succeq_\mu b$.
Then, by the definition of $\mu$ we get $p_E(a,b) > p_E(b,a)$ and thus
\begin{align*}
    |p_E(a,b) - p_E(b,a)| &= p_E(a,b) - p_E(b,a) \\
        &= (p_E(a,b) + p_E(b,a)) - 2 p_E(b,a) \\
        &= 1 - 2 p_E(b,a).
\end{align*}
Since for $\{a,b\} \in A$ we have $|p_E(a,b) - p_E(b,a)|=0$,
by the definition of the agreement index, we get that
\begin{align*}
    A(E) &= \frac{\sum_{\{a,b\} \in B} 
                \left( 1 - 2 p_E(b,a) \right)
            }{\binom{|C|}{2}} \\
        &= \frac{|B| - 2 \cdot \sum_{\{a,b\} \in B} \sum_{v \in V} 
                \1_{b \pref_v a} / |V|
            }{\binom{|C|}{2}} \\
        &= \frac{|B|}{\binom{|C|}{2}} - 
            2 \frac{\sum_{v \in V} \sum_{\{a,b\} \in B} 
                \1_{b \pref_v a}
            }{|V| \cdot \binom{|C|}{2}} \\
        &= \frac{|B|}{\binom{|C|}{2}} - 
            2 \frac{\sum_{v \in V} \sum_{\{a,b\} \in B}
                \nicefrac{1}{2} \1_{b \pref_v a} + \nicefrac{1}{2} \1_{a \not\pref_v b}
            }{|V| \cdot \binom{|C|}{2}},
\end{align*}
where the last equation comes from the fact that $\pref_v$ is asymmetric, so $\1_{b \pref_v a} = \1_{a \not\pref_v b}$.
Since for $\{a,b\} \in B$ we have $p_E(a,b) > p_E(b,a)$,
then we know that $\1_{a \succeq_\mu b} = \1_{b \not\succeq_\mu a} = 1$ and, conversely,
$\1_{a \not\succeq_\mu b} = \1_{b \succeq_\mu a} = 0$.
In particular, this means that
\begin{align*}
    \nicefrac{1}{2}\1_{b \pref_v a} + \nicefrac{1}{2}\1_{a \not\pref_v b} &= 
    \nicefrac{1}{2}\1_{b \pref_v a}\1_{b \not\succeq_\mu a} + 
    \nicefrac{1}{2}\1_{a \not\pref_v b}\1_{a \succeq_\mu b} \\ &+
    \nicefrac{1}{2}\1_{a \pref_v b}\1_{a \not\succeq_\mu b} + 
    \nicefrac{1}{2}\1_{b \not\pref_v a}\1_{b  \succeq_\mu a},
\end{align*}
which we denote as $\tau_{a,b}(v.\mu)$.
Then, we have that
\begin{equation}
\label{eq:agreement:majority_relation}
A(E) = \frac{|B|}{\binom{|C|}{2}} - 
        2 \frac{\sum_{v \in V} \sum_{\{a,b\} \in B}
            \tau_{a,b}(v,\mu)
        }{|V| \cdot \binom{|C|}{2}}.
\end{equation}

Now, let us consider a pair of candidates $\{a,b\} \in A$.
Observe that independently whether $a \pref_v b$ or $b \pref_v a$
we have that 
\begin{align*}
    \tau_{a,b}(v,\mu) = \hspace{3pt} &\nicefrac{1}{2}\1_{b \pref_v a}\cdot 0 + 
    \nicefrac{1}{2}\1_{a \not\pref_v b}\cdot 1\\
    + \hspace{3pt}&\nicefrac{1}{2}\1_{a \pref_v b}\cdot 0
    + \nicefrac{1}{2}\1_{b \not\pref_v a}\cdot 1
    = \nicefrac{1}{2}.
\end{align*}
Therefore, summing for all voters and pairs of candidates in set $A$, we obtain
\[
    \textstyle\sum_{v \in V}  \sum_{\{a,b\} \in A} \tau_{a,b}(v,\mu) = \nicefrac{1}{2} |A| \cdot |V|.
\]
We can rearrange this equation and divide by $\nicefrac{1}{2}|V|\binom{|C|}{2}$,
to get
\[
    0 = \frac{|A|}{\binom{|C|}{2}} - 
        2\frac{\sum_{v \in V}  \sum_{\{a,b\} \in A} 
            \tau_{a,b}(v,\mu)
        }{|V| \cdot \binom{|C|}{2}}.
\]
Combining this we equation~\eqref{eq:agreement:majority_relation}
we obtain
\begin{align*}
A(E) &= 1 - 
        2 \frac{\textstyle\sum_{v \in V} \sum_{\{a,b\} \subseteq C} 
            \tau_{a,b}(v,\mu)
        }{|V| \cdot \textstyle \binom{|C|}{2}} \\
    &= 1 - \frac{2 \cdot \textstyle\sum_{v \in V} \tau(v,\mu)}
            {|V| \cdot \textstyle \binom{|C|}{2} }.
\end{align*}
\end{proof}
Since in elections without a Condorcet cycles
every Kemeny ranking is consistent with $\mu$,
we get that in such elections
there is a strict relation between the agreement index and $1$-Kemeny distance.
\begin{corollary}
 For every election $E=(C,V)$ without a Condorcet cycle, it holds that
\[
    A(E) = 1 - 2 \cdot \kappa_1(E) / \left(|V| \cdot \textstyle \binom{|C|}{2}\right).
\]
\end{corollary}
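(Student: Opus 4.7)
The plan is to derive the corollary directly from the preceding proposition, which already gives $A(E) = 1 - \frac{2 \sum_{v \in V} \tau(v,\mu)}{|V| \binom{|C|}{2}}$. Thus the whole task reduces to proving the identity
\[
  \textstyle\sum_{v \in V} \tau(v, \mu) \;=\; \kappa_1(E)
\]
under the no-Condorcet-cycle assumption, after which substitution yields the claimed formula.

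To establish this identity, I would first invoke the fact recalled at the start of Appendix~\ref{app:agr-1kemeny}: when $E$ has no Condorcet cycle, the set of linear orders consistent with $\mu$ coincides with the set of Kemeny rankings. Fix any such Kemeny ranking $\lambda^\star$; then $\kappa_1(E) = \sum_{v \in V} \swap(v, \lambda^\star)$, and both $\swap(v,\lambda^\star)$ and $\tau(v,\mu)$ decompose as sums over unordered candidate pairs $\{a,b\}$. I would then partition $\binom{C}{2}$ into the set $B$ of pairs with strict majority ($p_E(a,b) \neq p_E(b,a)$) and the set $A$ of tied pairs, matching the notation of the proposition.

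On $B$-pairs the work is essentially already done: in the body of the preceding proposition's proof it is shown that for $\{a,b\} \in B$ with $a \succeq_\mu b$, the per-pair contribution $\tau_{a,b}(v,\mu)$ simplifies to $\1_{b \pref_v a}$, which is exactly the contribution of the pair $\{a,b\}$ to $\swap(v,\lambda^\star)$ since $\lambda^\star$ ranks $a$ above $b$. For $\{a,b\} \in A$, the same proof computes $\tau_{a,b}(v,\mu) = \tfrac{1}{2}$, so summing over $v$ gives $|V|/2$ per such pair; on the Kemeny side, $\lambda^\star$ may order the tied pair either way, but since $p_E(a,b) = p_E(b,a) = \tfrac{1}{2}$ the sum $\sum_v \1_{b \pref_v a}$ (or its symmetric counterpart) is likewise $|V|/2$, independent of the choice. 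Summing the two pair-types gives $\sum_v \tau(v,\mu) = \kappa_1(E)$.

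The only subtle point, and the main obstacle to watch out for, is the ambiguity created by the $A$-pairs: $\mu$ is not antisymmetric there, so several distinct linear orders qualify as Kemeny rankings, and one must check that the contribution of each tied pair to the swap sum is the same for every consistent $\lambda^\star$. This is handled by the observation that both $p_E(a,b)$ and $p_E(b,a)$ equal $\tfrac{1}{2}$, making the two possible per-pair contributions equal; once this is recorded, the identity $\sum_v \tau(v,\mu) = \kappa_1(E)$ is immediate, and plugging it into the proposition closes the proof.
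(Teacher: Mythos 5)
Your proposal is correct and follows essentially the same route as the paper: it derives the corollary from the preceding proposition together with the stated fact that, in the absence of a Condorcet cycle, the linear orders consistent with $\mu$ are exactly the Kemeny rankings. The paper leaves the link $\sum_{v \in V} \tau(v,\mu) = \kappa_1(E)$ implicit, whereas you verify it pair by pair and correctly flag the one delicate point---that tied pairs contribute $|V|/2$ to both sides no matter how a Kemeny ranking breaks the tie---so your write-up is a faithful, slightly more detailed version of the intended argument.
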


\section[Proof of Theorem 1]{Proof of Theorem~\ref{thm:2-apx-by-among-votes}}
For a given instance $I = (E=(C,V),k)$ a feasible solution for \kkemenyamongvotes is also a feasible solution to \kkemeny.
Let $\mu_k(E)$ be the optimum value of \kkemenyamongvotes on $I$ and $\kappa_k(E)$ be the optimum value of \kkemeny on $I$.
In order to show the theorem statement it is enough to show that $\mu_k(E) \leq 2\kappa_k(E)$.

Let $\Lambda = \{\lambda_1, \dots, \lambda_k\}$ be an optimum solution for \kkemenyamongvotes
and $\Gamma = \{\gamma_1, \dots, \gamma_k\}$ be an optimum solution for \kkemeny.
Let $v(x) \in V$ be a voter that is closest to some ranking $x$ and $v(\Gamma) = \{v(\gamma): \gamma \in \Gamma \}$.
Let $\gamma(x) \in \Gamma$ be a ranking from $\Gamma$ that is closest to some
ranking~$x$.
We define $\swap(v,X) = \min_{x \in X} \swap(v,x)$.
We have
\begin{align*}
  \mu_k(E)
  &=
    \sum_{v \in V} \swap(v,\Lambda)\\
  &\leq
    \sum_{v \in V} \swap(v,v(\Gamma))\\
  &\leq \sum_{v \in V} \Big(\swap(v,\gamma(v)) + \swap(\gamma(v),v(\gamma(v)))\Big)\\
  &\leq 2 \cdot \sum_{v \in V} \swap(v,\gamma(v)) = 2\kappa_k(E).
\end{align*}
where the first inequality holds because of optimality of $\Lambda$ restricted
to votes and the second inequality is due to the triangle inequality. The third
inequality follows from $\swap(\gamma(v),v(\gamma(v))) \leq \swap(\gamma(v),v)$,
which expresses that for some vote~$v \in V$, its distance to the closest
ranking~$\gamma(v)$ from~$\Gamma$ is at least as large as the distance
between~$\gamma(v)$ and a vote closest to it.
This finishes the proof.

\section[Proof of Theorem 2]{Proof of Theorem~\ref{thm:fpt-as-n}}
  Let us fix some $\epsilon > 0$.

  We consider every possible subset of votes as a cluster;
  there are $2^n$ of them. First, our algorithm runs a PTAS designed for
  {\sc $1$-Kemeny}~\citep{KenyonMathieuS07} for every possible cluster
  and store the result. This gives us an $(1+\epsilon)$-approximate
  solution for every cluster separately.
  
  Second, our algorithm guesses a $k$-clustering $\{V_1, V_2,
  \ldots, V_k\}$~of votes. Then, for each cluster in the clustering, we take an
  $(1+\epsilon)$-approximate solution to {\sc $1$-Kemeny} (which was computed in
  the first step) and store it. The algorithm repeats this procedure for each of
  $k^n$~possible clusterings and outputs the smallest computed distance.

  It is clear that an optimum solution corresponds to one of the
  $k$-clusterings, say $K$, analyzed by the algorithm in the second step.
  Moreover, in each cluster of $K$ the solution returned by the algorithm is a
  $(1+\epsilon)$-approximation of the optimum solution of the cluster under
  consideration. Hence, eventually, the algorithm returns a \kkemeny solution
  that costs at most a multiplicative factor $1+\epsilon$ more than the optimum
  one, as claimed.
  
  Regarding the running time, note that $k<n$; otherwise, the set of votes
  gives a solution of cost $0$. The algorithm computes a solution for $2^n$
  many clusters (each in polynomial time) and considers $k^n \leq n^n$ many
  clusterings (each in polynomial time), so the running time is FPT
  w.r.t.~$n$, namely $n^n \cdot \poly(n,m)$.

\section[Proof of Theorem 3]{Proof of Theorem~\ref{thm:w2-kkemeny}}\label{app:w2-kkemeny}
In the main text we provided the construction of the reduction.
Here we prove its correctness.

First, let us assume that there is some (partial) cover $\calR \subseteq \calS, |\calR| = K$ such that $|\bigcup_{S \in \calR} S| \geq T$.
We claim that the set~$\Lambda = \{\pref_{v_j}: S_j \in \calR \}$ of $k$~rankings has the $k$-Kemeny distance at most $D$.

For every (copy of) set-voter $v_j$ such that $S_j \in \calR$, we have $\swap(v_j,\Lambda) \leq \swap(v_j,v_j) = 0$ and for the remaining $L(M-K)$ set-voters the distance to $\Lambda$ equals $2$.
Hence, set-voters realize the distance equal to the first term in the definition of $D$.

Now, we calculate the distance realized by element-voters.
For each element-voter $e_i$, representing element~$x_i$ that is not covered
by~$\calR$, its swap distance~$\swap(e_i,\Lambda)$ can be computed
as follows. Starting from the distance being~$0$, we add one for each set in
which $x_i$ is included and we add $3$ because of the pivot-candidates.
Furthermore, we increase the distance by one once more, due to the following.
For every vote~$v_j \in \Lambda$ (recall that in~$v_j$ candidate~$d_{S_j}$ is
preferred to~$c_{S_j}$), we have that in vote~$e_i$ candidate~$c_{S_j}$ is
preferred to~$d_{S_j}$, since~$x_i$ is not covered. 
So, formally, for an
element-voter~$e_i$ that represents an element~$x_i$ not covered by~$\calR$, we
obtain the following formula:
$$
\swap(e_i,\Lambda) = |\{S \in \calS: x_i \in S\}| + 3 + 1.
$$
If, however, element~$x_i$ is covered by some set, say~$S_j$, in~$\calR$, then
candidates~$c_{S_j}$ and~$d_{S_j}$ are in the same order in~$e_i$ and~$v_j$
and~$v_j \in \Lambda$. Hence, we should decrease the computed distance by two.
By one, due to the fact that, we added one for each set in~$x_i$ is included;
hence we also assumed that the order of~$c_{S_j}$ and~$d_{S_j}$ is reversed
in~$e_i$ and~$v_i$. By another one because also the last summand of the
aforementioned formula came from the (now false) assumption there is no vote
in~$\Lambda$ for which ~$c_{S_j}$ and~$d_{S_j}$ are in the same order in~$e_i$
and~$v_i$.  Since we computed their inversion in the first part of the formula.
Eventually, introducing the indicator function~$\ind[\Phi]$ such that
$\ind[\Phi] = 1$ if $\Phi$ is true, and $\ind[\Phi] = 0$ otherwise, formally the
sought~$\swap(e_i,\Lambda)$ is
$$\swap(e_i,\Lambda) = |\{S \in \calS: x_i \in S\}| + 4 - 2 \cdot \ind[x_i \in
\bigcup_{S \in \calR} S].$$
It means that the distance realized by element-voters is equal to
\begin{align*}
  \sum_{i \in [N]} \hspace{-2pt}\swap(e_i,\Lambda)
  &= \sum_{j \in [M]} |S_j| + 4N - 2 \cdot \Big|\bigcup_{S \in \calR} S \Big|\\
  &\leq \sum_{j \in [M]} |S_j| + 4N - 2T.
\end{align*}
In total, $\swap(V,\Lambda) \leq D$, as required.

Now, let us assume that there is $\Lambda \subseteq \{\pref_{v}: v\in V\}, |\Lambda| = k $ such that $\swap(V,\Lambda) \leq D$.

First of all, we observe that $\Lambda$ may contain only rankings of set-voters.
Let us assume, by contradiction, that there is an element-voter in $\Lambda$.
It means that at most $L(k-1)$ set-voters realize the swap distance $0$.
Furthermore, at least $L(M-k+1)$ set-voters realize the swap distance at least $2$
(it is exactly $2$ when the closest ranking comes from a set-vote, and it is at least $3$ when the closest ranking comes from an element-vote).
Hence, we would have
$\swap(V,\Lambda) \geq 2L(M-k+1) > 2L(M-K)+NM+4N \geq D$,
which is a contradiction with $\swap(V,\Lambda) \leq D.$

Using the same calculation as in the previous paragraph, we can conclude that $\Lambda$ does not contain two copies of the same set-voter.
Because of that, we can define $\calR \subseteq \calS$ containing exactly $k=K$ subsets corresponding to votes from $\Lambda$, i.e., $\calR = \{S_j:\ \pref_{v_j} \in \Lambda\}$.

We will show that $\calR$ covers at least $T$ elements.
Let us assume, by contradiction, that $\calR$ covers at most $T-1$ elements.
Then we would have:
\begin{align*}
  \swap(V,\Lambda)
  &= 2L(M-K) + \sum_{j \in [M]} |S_j| + 4N - 2 \cdot \Big|\bigcup_{S \in \calR} S \Big|\\
  &\geq 2L(M-K) + \sum_{j \in [M]} |S_j| + 4N - 2(T-1)\\
  &= D+2,
\end{align*}
which is a contradiction with $\swap(V,\Lambda) \leq D$.

\section[Propositions from Theorem 3]{Propositions from Theorem~\ref{thm:w2-kkemeny}}
\label{app:propositions-thm3}
Let us define $M = \max_{v,u \in V} \swap(v,u)$, i.e., the
maximum distance between votes. The value of~$M$ is small in instances with
similar votes. Unfortunately, small values of $M$ do not make the problem easy.
\begin{proposition}\label{prop:kkemeny-w1-km}
  \kkemenyamongvotes is $\mathrm{W}$[1]-hard when parameterized by $k+M$.
\end{proposition}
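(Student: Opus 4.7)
The plan is to reuse the same reduction as in Theorem~\ref{thm:w2-kkemeny} but to start from a restricted version of \maxkcover in which every element appears in only a bounded number of sets. The point is that in the constructed instance, the pairwise swap distances between the votes are governed by the element frequencies in the source instance, so bounding the frequency will bound $M$.

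First I would carefully bound all pairwise distances in the construction. Between two distinct set-voters $v_i, v_j$, the pivot block agrees and only the $\{c_{S_i}, d_{S_i}\}$ and $\{c_{S_j}, d_{S_j}\}$ blocks disagree, giving distance exactly $2$. Between a set-voter $v_j$ and an element-voter $e_i$, the pivot block contributes $3$ (reversed order) while each $\{c_S, d_S\}$ block contributes $0$ or $1$; only blocks whose set contains $x_i$, plus possibly $S_j$ itself if $x_i \notin S_j$, can contribute, so $\swap(e_i, v_j) \le f_i + 4$, where $f_i = |\{S \in \calS : x_i \in S\}|$. Between two element-voters, the pivot block agrees and only blocks for sets containing exactly one of $x_i, x_{i'}$ contribute, so $\swap(e_i, e_{i'}) \le f_i + f_{i'}$. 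Hence, if the maximum element frequency in the source is $d$, the resulting instance satisfies $M \le 2d + O(1)$.

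Second, I would instantiate the source as \maxkcover restricted to element frequency at most two, which is exactly \textsc{Partial Vertex Cover}: introduce one vertex per set and one edge per element joining the (at most) two sets that contain it, so that $K$ sets covering $T$ elements correspond to $K$ vertices covering $T$ edges. \textsc{Partial Vertex Cover} parameterized by the number of chosen vertices is a classical $\mathrm{W}[1]$-hard problem (due to Guo, Niedermeier and Wernicke). Applying the reduction of Theorem~\ref{thm:w2-kkemeny} then produces, in polynomial time, an instance of \kkemenyamongvotes with $k = K$ and $M = O(1)$, which yields the claimed $\mathrm{W}[1]$-hardness parameterized by $k + M$.

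The main obstacle is really just the distance bookkeeping in the first step; once a bounded-frequency $\mathrm{W}[1]$-hard source is identified, the rest follows formally. It is worth noting that the Set Cover route used in the footnote of Theorem~\ref{thm:w2-kkemeny} (taking $T = N$) is unavailable here, since bounded-frequency Set Cover is fixed-parameter tractable, so we genuinely need the partial/maximization variant in order to preserve hardness.
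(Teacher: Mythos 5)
Your proof is correct and reuses the paper's reduction from Theorem~\ref{thm:w2-kkemeny} verbatim, including essentially the same distance bookkeeping (the paper likewise observes that $M \leq 2f+4$, where $f$ is the maximum element frequency in the \maxkcover instance). The only divergence is the hardness source: the paper invokes the result of Bonnet, Paschos, and Sikora that \maxkcover is $\mathrm{W}[1]$-hard for the combined parameter $K+f$ and feeds arbitrary-frequency instances through the reduction, whereas you specialize to frequency-two instances, i.e., \textsc{Partial Vertex Cover}, which is $\mathrm{W}[1]$-hard in $K$ alone by Guo, Niedermeier, and Wernicke. Both routes are valid; yours in fact yields the marginally stronger statement that \kkemenyamongvotes is $\mathrm{W}[1]$-hard in $k$ alone already on instances with $M\leq 8$, which immediately implies hardness for $k+M$, while the paper's choice keeps the argument a one-line corollary of a single citation. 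Your side remark that the Set Cover route from the footnote (setting $T=N$) breaks down under bounded frequency is also accurate, since bounded-frequency \textsc{Set Cover} is dual to bounded-arity \textsc{Hitting Set} and hence FPT, so the partial/maximization variant is genuinely needed. One small point to make explicit in a write-up: the reduction assumes the sets in $\calS$ are nonempty and distinct, which for the vertex-cover encoding means discarding isolated vertices --- harmless, but worth a sentence.
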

\begin{proof}
  It is known that \maxkcover is W[1]-hard w.r.t.~$K+f$~\citep{BonnetPS16}, where $f$ is \emph{the maximum frequency of an element}, i.e., $f = \max_{i \in [N]} |\{S_j \in \calS : x_i \in S_j\}|$.
  We can observe that $M \leq 2f+4$ in the reduction given in the proof of Theorem~\ref{thm:w2-kkemeny} hence we obtain the proposition statement.
\end{proof}

By adapting results regarding \maxkcover~\cite[Observation
7]{SornatWX22}, we also obtain the following bound that uses the
Strong Exponential Time Hypothesis (SETH).\footnote{SETH is one of
  popular complexity assumptions in parameterized complexity.  For a
  formal statement see, e.g., the book of~\citespecial{CyganFKLMPPS15}{Conjecture
  14.2}.}

\begin{proposition}\label{prop:seth-kkemeny-m}
  There is no $1.4^m \cdot \poly(n,m)$-time algorithm for
  \kkemenyamongvotes, where $m$ is the number of candidates and $n$ is
  the number of voters, unless SETH fails.
\end{proposition}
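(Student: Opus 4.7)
The plan is to reuse the polynomial-time reduction from \maxkcover to \kkemenyamongvotes built in the proof of Theorem~\ref{thm:w2-kkemeny} and combine it with the SETH-based lower bound for \maxkcover given as Observation~7 of \citet{SornatWX22}.

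First, I would recall the key parameters of that reduction: starting from a \maxkcover instance with universe $X$ of size $N$ and family $\calS$ of $M$ nonempty, distinct subsets, the construction produces, in polynomial time, an equivalent instance of \kkemenyamongvotes with $m = 2M + 3$ candidates and $n = N(M^2 + 4M + 1)$ voters. The feature that matters here is that $m$ is linear in $M$ with coefficient~$2$.

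Second, I would invoke Observation~7 of \citet{SornatWX22} in the form we need: unless SETH fails, \maxkcover admits no algorithm running in time $(2 - \delta)^M \cdot \poly(N, M)$ for any constant $\delta > 0$. I would then argue by contradiction. Suppose a $1.4^m \cdot \poly(n, m)$-time algorithm for \kkemenyamongvotes existed. Running it on the output of the reduction above would solve \maxkcover in time
\[
1.4^{2M + 3} \cdot \poly\bigl(N(M^2 + 4M + 1),\, 2M + 3\bigr) = O\bigl(1.96^M \cdot \poly(N, M) \bigr),
\]
using $1.4^2 = 1.96$ and absorbing the constant $1.4^3$ into the polynomial factor. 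Since $1.96 < 2$, taking $\delta := 2 - 1.96 > 0$ yields a contradiction with the invoked lower bound, which finishes the argument.

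The proof carries essentially no technical difficulty: it is a direct composition of an existing reduction and a known SETH-based hardness result, and the only arithmetic step is the observation $1.4^2 < 2$, which is exactly what dictates the choice of the base $1.4$ in the statement. The only subtle point is matching the parameter in Observation~7 of \citet{SornatWX22} to the parameter $M$ of the reduction so that the two exponents line up; once this is checked, the chain of inequalities goes through immediately.
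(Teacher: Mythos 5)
Your proposal is correct and matches the paper's own proof essentially step for step: both compose the Theorem~\ref{thm:w2-kkemeny} reduction (with $m = 2M+3$ and $n = N(M^2+4M+1)$) with the SETH lower bound for \maxkcover from Observation~7 of \citet{SornatWX22}, and both hinge on the arithmetic $1.4^{2M+3} \leq 1.96^M \cdot \poly(N,M)$ with $1.96 < 2$. The only cosmetic difference is that you state the lower bound in the general $(2-\delta)^M$ form while the paper instantiates it as $1.99^M$; the argument is the same.
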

\begin{proof}
  Let us assume, by contradiction, that there is a $1.4^m \cdot \poly(n,m)$-time algorithm for \kkemenyamongvotes.
  We take an instance of \maxkcover and reduce it (in $\poly(N,M)$ time) to \kkemenyamongvotes using the reduction from the proof of Theorem~\ref{thm:w2-kkemeny}.
  We solve the obtained instance of \kkemenyamongvotes in time $1.4^m \cdot \poly(n,m)$ and we output the same response to \maxkcover.
  Due to Theorem~\ref{thm:w2-kkemeny}, we obtained a correct response to the instance of \maxkcover.
  Recall that $m = 2M+3$ and $n = N(M^2+4M+1)$.
  Therefore, the running time of our algorithm for \maxkcover is at most
  $1.4^m \cdot \poly(n,m) \cdot \poly(N,M) \leq 1.4^{2M+3} \cdot \poly(N,M) \leq 1.96^M \cdot \poly(N,M)$.
  This would show that SETH is false because under SETH \maxkcover has no $1.99^M \cdot \poly(N,M)$ time algorithm~\cite[Observation 7]{SornatWX22}.
\end{proof}
On the other hand, \kkemenyamongvotes (and \kkemeny) is FPT w.r.t.~$m$ by a brute-force evaluation of all $k$-size subsets of $m!$ possible linear orders as a solution, each in polynomial time.
Hence, the running time is $\binom{m!}{k} \cdot \poly(n,m) \leq 2^{m!} \cdot \poly(n,m) \leq 2^{m^m} \cdot \poly(n,m)$.
This is a double-exponential dependence.
An open question is to provide a single-exponential time algorithm.

\begin{table}[t]
\centering
 \caption{The ingredients of the standard dataset. \label{tbl:map-cultures}}
\small
\begin{tabular}{rll}
  \toprule
  model & variants/parameters & \#elcs\\\midrule
  Impartial Culture & & 16\\
  normalized Mallows & $\phi \in \textrm{unif. over } [0,1]$ & 48\\
  urn model & $\alpha \in \Gamma(0.8)$ & 48\\
  \midrule 
  single-peaked (Conitzer) & & 16\\
  single-peaked (Walsh) & & 16\\
  \midrule  
  1-cube (Interval)  & uniform interval & 16\\
  2-cube (Square)  & uniform square & 16\\
  3-cube (Cube)  & uniform cube & 16\\
  5-cube  & uniform 5D-cube & 8 \\
  10-cube & uniform 10D-cube & 8 \\
  circle & circle in 2D & 16\\
  sphere & sphere in 3D & 16\\
  \midrule
  Irish dataset & & 8\\
  Sushi dataset & & 8\\
  Grenoble dataset & & 8\\
  \midrule
  uniformity ($\appUN$) & &4\\
  identity (ID) & & 1\\
  antagonism (AN) & & 1\\
  \midrule
  ID-AN mixture & AN fractions: $\nicefrac{1}{12} \ldots
                  \nicefrac{11}{12}$ & 11 \\
  AN-$\appUN$ mixture& $\appUN$ fractions: $\nicefrac{1}{12} \ldots
                 \nicefrac{11}{12}$ & 11\\
  \bottomrule
\end{tabular}
\end{table}

\section{Standard Dataset Composition}
\label{app:assorted-dataset}
The map of elections from Fig.~\ref{fig:swap-map:standard} consists of elections from
various statistical cultures.
In Table~\ref{tbl:map-cultures} we
specify how many elections come from each culture and how their
parameters were chosen.
From now on, we will call this collection of elections
(i.e., the elections depicted in Fig.~\ref{fig:swap-map:standard})
as the \emph{standard dataset}, to distinguish it from
the \emph{extended dataset} and the \emph{Mallows dataset}
presented in the following sections.
In what follows, we describe how we generate elections
that were not covered in Section~\ref{sec:cultures} (or Appendix~\ref{app:preprocessing}).

Before we begin, let us describe a general technique
that is sampling elections from a position matrix.
A \emph{position matrix} \citep{szu-fal-sko-sli-tal:c:map,boe-bre-fal-nie-szu:c:compass} is an integer $m \times m$
matrix, in which the values of each row and each column
sum up to some constant $n \in \mathbb{N}$.
An election, $E=(C,V)$, \emph{realizes} a given
position matrix $X$, if  $|C|=m$, $|V|=n$, and
for every $i,j \in [m]$,
the value in $i$-th row and $j$-th column
of matrix $X$, i.e., $X_{i,j}$,
is equal to the number of voters in $V$ that
ranks the $j$-th candidate at the $i$-th position
(note that one position matrix can be realized
by multiple elections).
For example, a position matrix realizing 
\UN election with $m$ candidates,
is an $m \times m$ matrix
with each element equal to $(m-1)!$.
\citet{pos-matrices},
provide a technique to sample elections
realizing given position matrix $X$, which starts from
an empty election without any votes, and then, iteratively:
\begin{enumerate}
    \item finds a vote $v$ that can belong to an election realizing $X$,
    \item adds $v$ to the election, and then
    \item updates the values of matrix $X$
        (by subtracting one from $X_{i,j}$ for every
        $j \in [m]$ and $i$ being the position of $j$-th candidate
        according to vote $v$),
\end{enumerate}
until $X$ is a zero matrix.
We note that this procedure returns every election
realizing given matrix with positive probability,
but the exact distribution we obtain is unknown
(\citet{pos-matrices} argue that
obtaining a P-time uniform sampler
is challenging).
We use this sampling technique to generate
$\appUN$ and AN-$\appUN$ mixture elections.

\paragraph{UN*}
To generate $\appUN$ elections,
we sample an election realizing an $8 \times 8$ position matrix
in which every element is equal to $12$.

\paragraph{ID-AN mixture.}
Elections from \ID-\AN mixture model with \AN share
$i \in \{\nicefrac{1}{12}, \dots, \nicefrac{11}{12}\}$
come from merging \AN election with $96i$ voters and
\ID election with $96(1-i)$ voters.
Hence, we have $96 - 48i$ voters with a given preference order
and $48i$ voters with exactly opposing views.

\paragraph{AN-UN* mixture.}
Elections from \AN-$\appUN$ mixture model with $\appUN$ share
$i \in \{\nicefrac{1}{12}, \dots, \nicefrac{11}{12}\}$
come from merging $\appUN$ election with $96i$ voters and
\AN election with $96(1-i)$ voters.
Hence, we have $48(1-i)$ voters with a given preference order,
$48(1-i)$ voters with exactly opposing views,
and on top of that we add $96i$ voters that we get
by sampling election realizing $8 \times 8$ matrix
in which every element is equal $12i$.

\section{Preprocessing of Real-life Data}\label{app:preprocessing}

\paragraph{Grenoble.}
In the Grenoble field experiment, $760$ people were asked to place $11$ candidates on the $[0,1]$ line. The higher the value, the more a given candidate is liked by a voter. We converted each participant's \textit{line preference} into ordinal ranking, by choosing the candidate being closest to one as a first choice, the second closest to one as a second choice and so on.

\paragraph{Sushi.}
In the survey about Sushi there were $5000$ participants and $10$ different types of sushi (i.e., candidates). The original data consists of full ordinal rankings without ties.

\paragraph{Irish.}
In the election held in Dublin North constituency, there were $43942$ voters and $12$ candidates. In the original data many votes were incomplete, hence, we filled them using the same procedure as~\citet{boe-bre-fal-nie-szu:c:compass}, in order to obtain complete preference orders.

\paragraph{Sampling procedure.}
We decided to conduct experiments with $8$ candidates, hence, for all three dataset we selected $8$ candidates having the highest Borda score. We treat all three datasets as statistical cultures. To sample an 
election from a given dataset, we simply sample a given number of votes (in our case $1000$) uniformly at random (sequentially with returning).

\section{Extended Dataset}
\label{app:extended_dataset}

In this section, we introduce our extended dataset.
This dataset consists of all 292 elections
from the standard dataset and 74 new elections
generated using 4 additional statistical cultures
(single-peaked on a circle,
single-crossing,
group-separable balanced, and
group separable caterpillar)
and 2 special models
($\alpha$-stratification and \ID-$\appST$ mixture).
The exact composition of the extended dataset is presented in Table~\ref{tbl:map-extended-cultures}.
In what follows we describe each new culture and model.

We present also map of preferences
for elections from these cultures in Fig.~\ref{fig:microscope:extended}
(some additional maps for cultures and models
already appearing in the standard dataset are also included).
In order to obtain maps of preferences more representative for their models,
we generated elections with 1000 voters instead of 96
(but we present also the version with 96 voters in
Appendix~\ref{app:microscope_96}).
Finally, a map of elections
generated in the same way as that in Fig.~\ref{fig:swap-map:standard},
but for elections in the extended dataset
is presented in Fig.~\ref{fig:swap-map:extended}.

\begin{table}[t]
\centering
 \caption{The ingredients of the extended dataset
 (elections not appearing in the standard dataset are in \textbf{boldface}). \label{tbl:map-extended-cultures}}
\small
\begin{tabular}{rll}
  \toprule
  model & variants/parameters & \#elcs\\\midrule
  Impartial Culture & & 16\\
  normalized Mallows & $\phi \in \textrm{unif. over } [0,1]$ & 48\\
  urn model & $\alpha \in \Gamma(0.8)$ & 48\\
  \midrule 
  single-peaked (Conitzer) & & 16\\
  single-peaked (Walsh) & & 16\\
  \textbf{single-peaked on a circle} & & 16\\
  \textbf{single-crossing} & & 16\\
  \textbf{group-separable} & balanced & 16  \\
  \textbf{group-separable} & caterpillar & 16 \\
  \midrule  
  1-cube (Interval)  & uniform interval & 16\\
  2-cube (Square)  & uniform square & 16\\
  3-cube (Cube)  & uniform cube & 16\\
  5-cube  & uniform 5D-cube & 8 \\
  10-cube & uniform 10D-cube & 8 \\
  circle & circle in 2D & 16\\
  sphere & sphere in 3D & 16\\
  \midrule
  Irish dataset & & 8\\
  Sushi dataset & & 8\\
  Grenoble dataset & & 8\\
  \midrule
  uniformity ($\appUN$) & &4\\
  $\mathbf{\nicefrac{\mathbf{1}}{2}}$\textbf{-stratification} ($\mathbf{ST^*}$) & & 4\\
  identity (ID) & & 1\\
  antagonism (AN) & & 1\\
  $\boldsymbol{\alpha}$\textbf{-stratification} & $\alpha \in \{\nicefrac{1}{8}, \nicefrac{2}{8},
                            \nicefrac{3}{8}\}$ & 3\\
  \midrule
  ID-AN mixture & AN share: $\nicefrac{1}{12} \ldots
                  \nicefrac{11}{12}$ & 11 \\
  AN-$\appUN$ mixture& $\appUN$ share: $\nicefrac{1}{12} \ldots
                 \nicefrac{11}{12}$ & 11\\
  \textbf{ID-}$\mathbf{ST^*}$ \textbf{mixture}& no. blocks: $3$, $4$, $6$ & 3 \\
  \bottomrule
\end{tabular}
\end{table}

\subsubsection*{Single-Peaked On a Cycle Elections (SPOC)}
Elections single-peaked on a circle~\citep{pet-lac:j:spoc} are
analogous to single-peaked ones, except that the societal axis is
cyclic (so a vote is SPOC with respect to axis $\rhd$ if for every
$t \in [m]$ its $t$ top-ranked candidates either form an interval with
respect to $\rhd$ or a complement of an interval; an election is SPOC
if there is an axis with respect to which all its votes are
SPOC). Such preferences occur, e.g., when choosing a virtual meeting
time and voters are in different time zones.  We generate SPOC
elections by choosing SPOC votes uniformly at random (for SPOC, this
is equivalent to using the Conitzer approach).
The shape of the SPOC election in Fig.~\ref{fig:microscope:extended}
naturally corresponds to the cyclic nature of the axis.

\subsubsection*{Single-Crossing Elections}
Single-crossingness captures a similar idea as single-peakedness,
but based on ordering the voters.
\begin{definition}[\citet{mir:j:single-crossing,rob:j:tax}]
  An election is single-crossing if it is possible to order the voters
  so that for each two candidates $a$ and $b$ either every voter who
  prefers $a$ to $b$ comes before every voter who prefers $b$ to $a$,
  or the other way round.
\end{definition}
We generate single-crossing elections using the approach of
\citet{szu-fal-sko-sli-tal:c:map}. First, we generate a
\emph{single-crossing domain}, i.e., a set of votes such that any
multisubset of them is single-crossing.  Then we draw the required
number of votes from the domain, uniformly at random. To obtain the
domain (for candidate set $C = \{c_1, \ldots, c_m\}$), we first
generate vote $v_1 \colon c_1 \pref c_2 \pref \cdots \pref
c_m$, 
and for each $i \in [n]\setminus\{1\}$ we obtain $v_i$ by copying
$v_{i-1}$ and swapping a random pair of adjacent candidates, but so
that $v_1, \ldots, v_i$ are single-crossing (for this
order). Unfortunately, this is not a uniform sampling procedure
(obtaining a $\mathrm{P}$-time one is an open problem).

The map of a single-crossing election in Fig.~\ref{fig:microscope:extended}
shows a linear spectrum of opinions, from one vote to its
reverse. Indeed, the consecutive votes in the single-crossing domain
differ by single swaps, and this is exactly what we see.

\subsubsection*{Group-Separable Elections}
We define group-separable elections following the tree-based approach
of \citet{kar:j:group-separable} (see also the work of
\citet{elk-fal-sli:c:decloning}) rather than the original one
\citep{ina:j:group-separable,ina:j:simple-majority}.  The idea is that
candidates have features (organized hierarchically in a tree) and
voters have preferences over these features.

Let $C$ be a candidate set and let $\calT$ be a rooted, ordered tree
whose each leaf is labeled with a unique candidate (intuitively, each
internal node represents a feature and a candidate has the features
that form its path to the root).  
A vote is consistent
with $\calT$ if we can obtain it by reading the leaves of $\calT$ from
left to right after, possibly, reversing the order of some nodes'
children.

\begin{definition}
  An election is group-separable if there is a rooted, ordered tree
  $\calT$ whose each leaf is associated with a unique candidate, such
  that each vote of the election is consistent with $\calT$.
\end{definition}

For a tree $\calT$, we generate consistent elections uniformly at
random: We obtain each vote by, first, reversing the order of each
internal node's children with probability $\nicefrac{1}{2}$ and, then,
reading off the candidates from the leaves left to right.
We focus on complete binary trees (where every level except, possibly,
the last one is completely filled)
and on binary caterpillar trees (where each internal node has two
children, of which at least one is a leaf).  These trees give,
respectively, balanced and caterpillar group-separable elections.

In Fig.~\ref{fig:microscope:extended}, the group-separable elections are very
distinct from all the other ones and reflect the structures of their
trees. While it seems that they had only a few distinct votes,
this is not the case (it is known that for a binary tree with $m$
candidates, there are $2^{m-1}$ consistent votes), but many of their
votes are similar; they are less fragmented than they appear, but
there is a level of polarization
(especially in the balanced ones).

\begin{figure*}[t]
    \centering
    \includegraphics[width=0.98\linewidth]{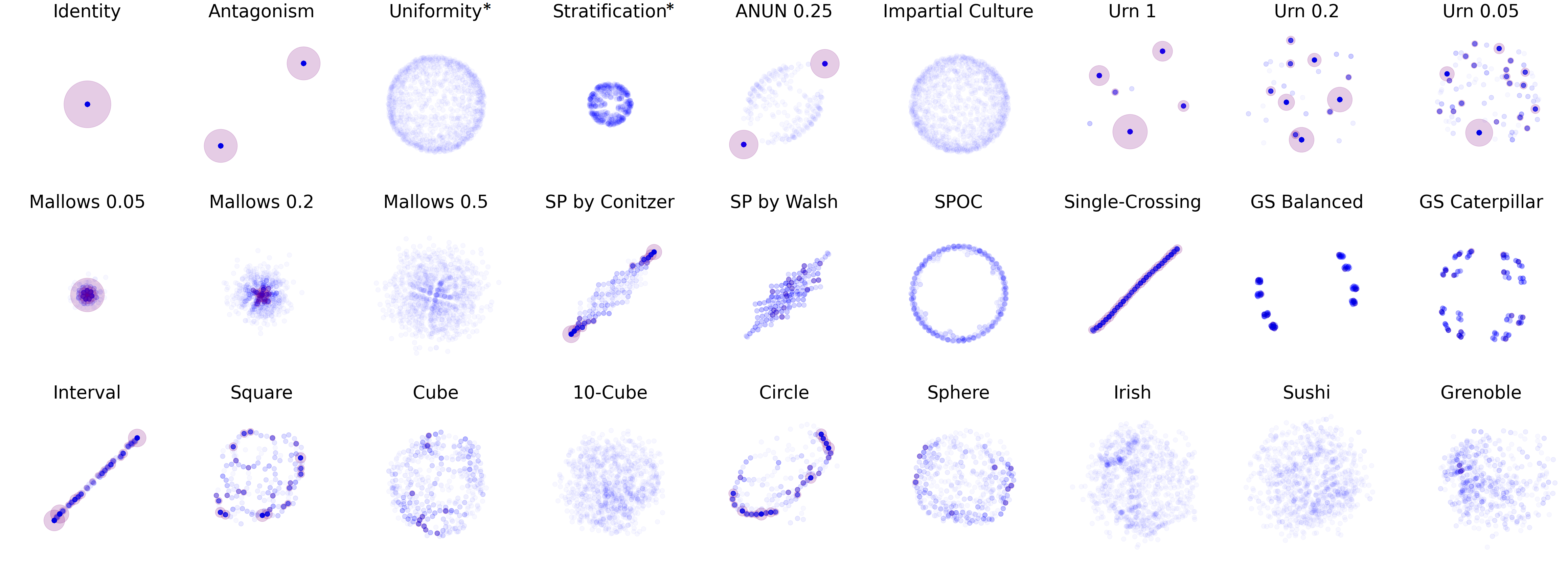}
    \caption{Maps of Preferences (8 candidates, 1000 voters).}
    \label{fig:microscope:extended}
\end{figure*}

\begin{figure}[t]
    \centering
    \includegraphics[width=0.48\textwidth]{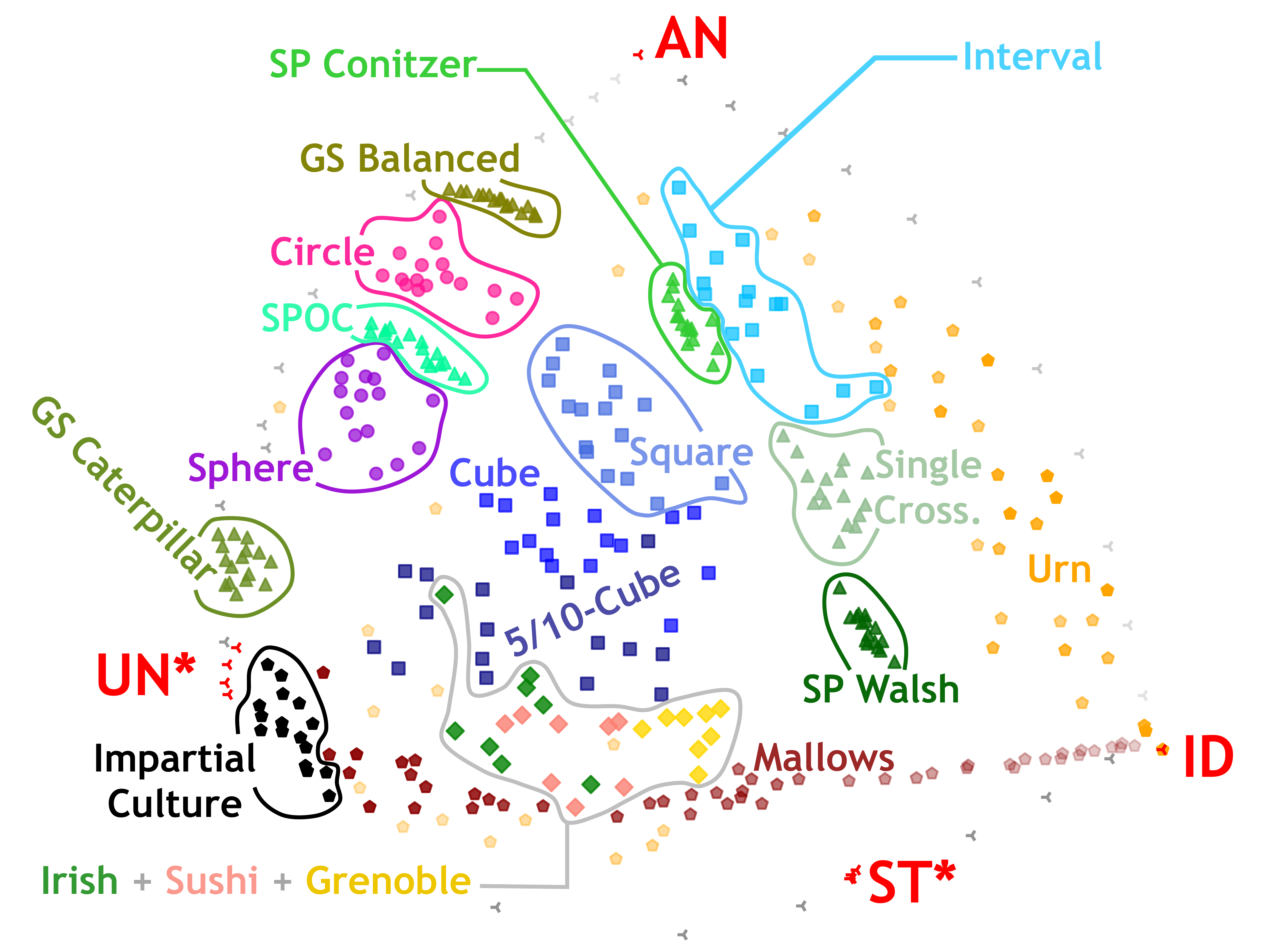}
    \caption{A map of elections in the extended dataset
    obtained using isomorphic swap distance and MDS.}
    \label{fig:swap-map:extended}
\end{figure}

\subsubsection*{Stratification}
In $\alpha$-\emph{stratification election} ($\alpha$-ST)~\citep{boe-bre-fal-nie-szu:c:compass}
the set of candidates, $C$, is partitioned into two subsets $D_1$ and $D_2$,
where the first group contains $\alpha$ fraction of candidates, i.e.,
$|D_1| / |C| = \alpha$
(if no $\alpha$ is given it is assumed that $\alpha=\nicefrac{1}{2}$).
Intuitively, in such election all voters agree that candidates $D_1$
are better than $D_2$, but all orderings of candidates inside the subsets
are equally represented.
Hence, every possible vote that ranks all candidates in $D_1$
above all candidates in $D_2$ (but with arbitrary orderings inside subsets)
appears exactly the same number of times.
However, this means that $\alpha$-stratification  election requires at least
\(
    (\alpha |C|)! \cdot ((1-\alpha)|C|)!
\)
voters.
To cope with this problem,
we consider approximated $\alpha$-stratification elections ($\alpha$-$\appST$)
that we generate using the same sampling technique as described in Appendix~\ref{app:assorted-dataset},
but whit different matrices.
In particular, for $\alpha \in \{\nicefrac{1}{8},\nicefrac{1}{4},\nicefrac{3}{8},\nicefrac{1}{2}\}$
we generate $\alpha$-$\appST$ election by
sampling an election realizing the matrix $X^\alpha$
given as follows:
{\small\begin{align*}
  X^{\nicefrac{1}{2}} &= \left[
    \begin{array}{cccccccc}
         24 & 24 & 24 & 24 & & & & \\
         24 & 24 & 24 & 24 & & & & \\
         24 & 24 & 24 & 24 & & & & \\
         24 & 24 & 24 & 24 & & & & \\
         & & & & 24 & 24 & 24 & 24 \\
         & & & & 24 & 24 & 24 & 24 \\
         & & & & 24 & 24 & 24 & 24 \\
         & & & & 24 & 24 & 24 & 24
    \end{array}
  \right],\\
    X^{\nicefrac{3}{8}} &= \left[
    \begin{array}{cccccccc}
         32 & 32 & 32 & & & & & \\
         32 & 32 & 32 & & & & & \\
         32 & 32 & 32 & & & & & \\
         & & & 20 & 19 & 19 & 19 & 19 \\
         & & & 19 & 20 & 19 & 19 & 19 \\
         & & & 19 & 19 & 20 & 19 & 19 \\
         & & & 19 & 19 & 19 & 20 & 19 \\
         & & & 19 & 19 & 19 & 19 & 20 
    \end{array}
  \right],
\end{align*}
\begin{align*}
    X^{\nicefrac{2}{8}} &= \left[
    \begin{array}{cccccccc}
         48 & 48 & & & & & & \\
         48 & 48 & & & & & & \\
         & & 16 & 16 & 16 & 16 & 16 & 16 \\
         & & 16 & 16 & 16 & 16 & 16 & 16 \\
         & & 16 & 16 & 16 & 16 & 16 & 16 \\
         & & 16 & 16 & 16 & 16 & 16 & 16 \\
         & & 16 & 16 & 16 & 16 & 16 & 16 \\
         & & 16 & 16 & 16 & 16 & 16 & 16
    \end{array}
  \right],\\
    X^{\nicefrac{1}{8}} &= \left[
    \begin{array}{cccccccc}
         96 & & & & & & & \\
         & 14 & 14 & 14 & 14 & 14 & 13 & 13 \\
         & 13 & 14 & 14 & 14 & 14 & 14 & 13 \\
         & 13 & 13 & 14 & 14 & 14 & 14 & 14 \\
         & 14 & 13 & 13 & 14 & 14 & 14 & 14 \\
         & 14 & 14 & 13 & 13 & 14 & 14 & 14 \\
         & 14 & 14 & 14 & 13 & 13 & 14 & 14 \\
         & 14 & 14 & 14 & 14 & 13 & 13 & 14 
    \end{array}
  \right].
\end{align*}
}

The map for $\appST$ election in Fig.~\ref{fig:microscope:extended} resembles a bit the map for Mallows elections,
and it also lands between \ID and IC. This is expected:
In these elections there is some agreement between the voters (they
distinguish the stronger group from the weaker one) but there is also
room for diversity.

\subsubsection*{ID-ST* mixture.}
Finally, we consider elections that capture a transition from
$\appST$ to \ID.
Specifically, instead of dividing candidates into two subsets
(aka \emph{blocks}) on ordering of which the voters agree
we divide the candidates in $k$ blocks for some $2 \le k \le |C|$.
If we choose $k=2$ we get a standard stratification election
and for $k=|C|$ we get identity.
In the extended dataset, we included one such election for each
$k \in \{3,4,6\}$.
Again, they were obtained by sampling (using procedure described in
Appendix~\ref{app:assorted-dataset}) from the position matrix $X^k$
given as follows:

{\small\begin{align*}
  X^3 &= \left[
    \begin{array}{cccccccc}
         48 & 48 & & & & & & \\
         48 & 48 & & & & & & \\
         & & 32 & 32 & 32 & & & \\
         & & 32 & 32 & 32 & & & \\
         & & 32 & 32 & 32 & & & \\
         & & & & & 32 & 32 & 32 \\
         & & & & & 32 & 32 & 32 \\
         & & & & & 32 & 32 & 32
    \end{array}
  \right],\\
    X^4 &= \left[
    \begin{array}{cccccccc}
         48 & 48 & & & & & & \\
         48 & 48 & & & & & & \\
         & & 48 & 48 & & & & \\
         & & 48 & 48 & & & & \\
         & & & & 48 & 48 & & \\
         & & & & 48 & 48 & & \\
         & & & & & & 48 & 48 \\
         & & & & & & 48 & 48 \\
    \end{array}
  \right],\\
    X^6 &= \left[
    \begin{array}{cccccccc}
         96 & & & & & & & \\
         & 48 & 48 & & & & & \\
         & 48 & 48 & & & & & \\
         & & & 48 & 48 & & & \\
         & & & 48 & 48 & & & \\
         & & & & & 96 & & \\
         & & & & & & 96 & \\
         & & & & & & & 96 \\
    \end{array}
  \right].
\end{align*}}
\noindent
The order of the larger and smaller blocks in matrices $X^3$ and $X^6$
was chosen randomly.

\section{Mallows Dataset}
\label{app:mallows_dataset}
In this section, we introduce the Mallows dataset.

\paragraph{Mallows Mixture Model.}
Mallows mixture model is parameterized by the central vote $u$, 
$\normphi \in [0,1]$, and mixing parameter $\omega \in [0,0.5]$. We
generate votes as follows: With probability $1-\omega$, we use the
Mallows model with central vote $u$ and parameter $\normphi$, and with
probability $\omega$ we use $\normphi$ and the reversed central vote.
Observe that for $\omega=0$ this gives a standard Mallows model
as described in Section~\ref{sec:cultures}
(we speak then of \emph{pure} Mallows election).

Let us analyze maps of preferences for Mallows mixture model
as seen in Fig.~\ref{fig:microscope:mallows}.
As noted in Section~\ref{sec:cultures},
pure Mallows elections
form a spectrum between \ID and IC.
However, for
$\omega \in \{0.25, 0.5\}$, polarization appears (the maps for
$\omega \in \{0.25, 0.5\}$ show how the central vote and its reverse
are at maximum swap distance and their noisy incarnations are closer
to each other).
Note that for $\omega=0.5$ and $\normphi=0$,
the election we obtain is basically \AN election
(with possible random fluctuation in the sizes of the opposite groups).

\paragraph{The Mallows Dataset Composition.}
The Mallows dataset includes elections (with 8 candidates and 96 voters) generated from mixtures of Mallows
models (and the four special elections, \ID, \AN, $\appUN$, and $\appST$, for
orientation).
We present this dataset on map of elections in Fig.~\ref{fig:swap-map:mallows}.
There, each dot represents an election generated
from the Mallows mixture model with $\omega$ drawn uniformly at random
from $[0,0.5]$ and $\normphi$ drawn from $[0,1]$ in such
a way that
$\mathbb{P}[1 - \normphi \le x] = x^2$.  This 
allows us to avoid 
high congestion of
elections near $\appUN$ (intuitively, we can think of one minus
$\normphi$ as a distance from $\appUN$ and of $\omega$ as a direction in
which we move away from $\appUN$---%
by taking the probability of the distance proportional to its square,
we ensure the uniform distribution of the dots on the map).

\begin{figure}[t]
    \centering
    \begin{subfigure}[t]{0.5\textwidth}
        \includegraphics[width = \textwidth]{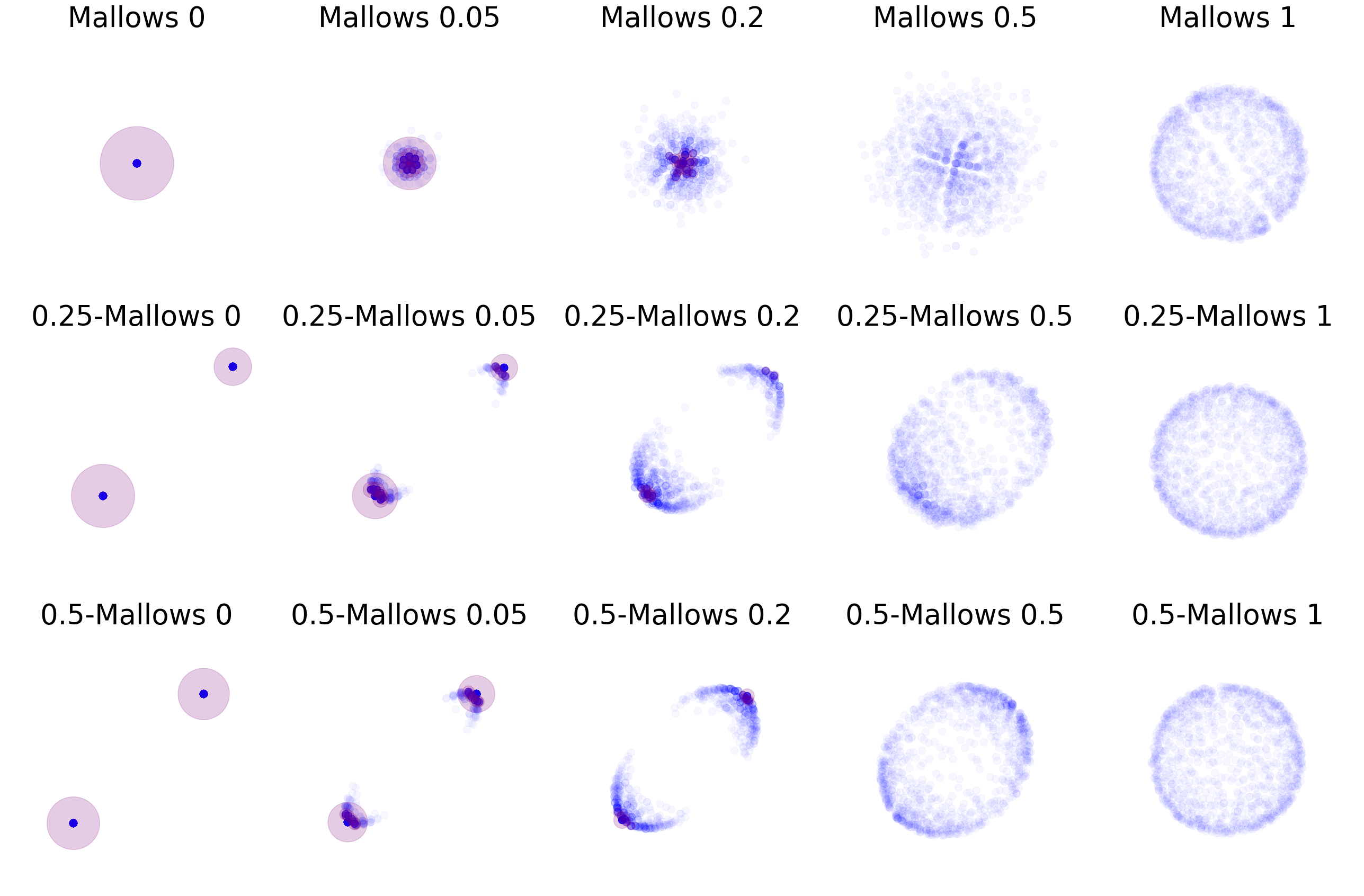}
        \caption{Maps of Preferences (8 candidates, 1000 voters). A title ``$\boldsymbol{x}$-Mallows $\boldsymbol{y}$'' denotes an election from Mallows mixture
        model with norm-$\boldsymbol{\phi = y}$ and $\boldsymbol{\omega = x}$. We omit $\boldsymbol{x}$- when $\omega=0$.}
        \label{fig:microscope:mallows}
    \end{subfigure}
    \hfill
    \begin{subfigure}[t]{0.46\textwidth}
        \includegraphics[width=\textwidth]{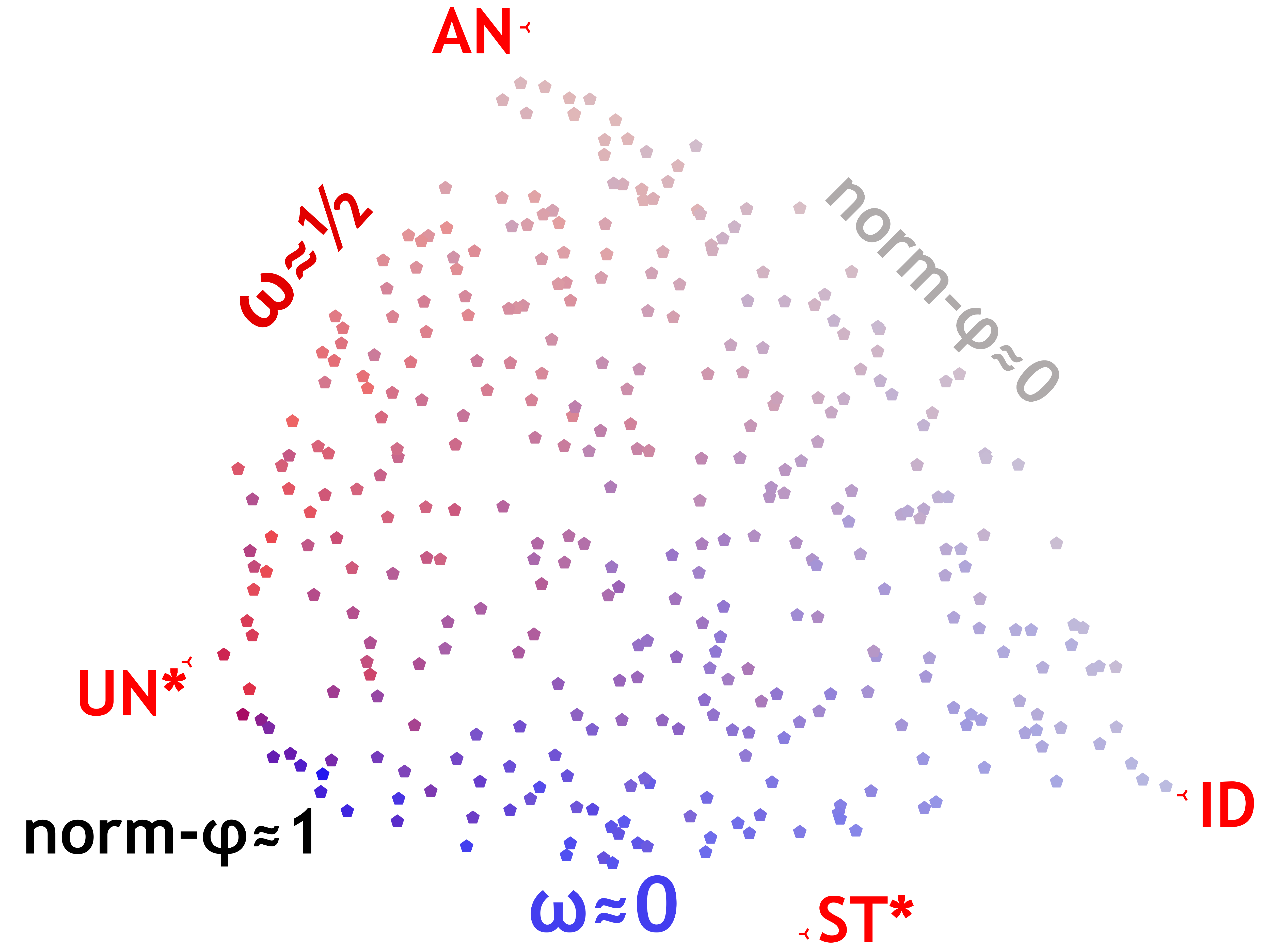}
        \caption{A map of election in the Mallows dataset obtained using isomorphic swap distance and MDS. The color corresponds to $\omega$ parameter (blue for $\omega=0$ and red for $\omega=0.5$) and the color intensity is greater for greater $\normphi$ values.}
        \label{fig:swap-map:mallows}
    \end{subfigure}
    \caption{Mallows mixture model.}
\end{figure}

\section{Maps of Preferences}
\label{app:microscope_96}
In this section,
we present analogues of the pictures
form Fig.~\ref{fig:microscope:extended}
(hence, including all elections from Fig.~\ref{fig:microscope}),
but for elections with 8 candidates and 96 voters.
The method through which it is obtained is exactly the same,
i.e., first we compute swap distance between every pair of votes
in an election, and then we project the votes
onto a 2D plane using MDS.
The results are presented in Fig.~\ref{fig:microscope_96}.

\begin{figure*}[b]
    \centering
    \includegraphics[width = \linewidth]{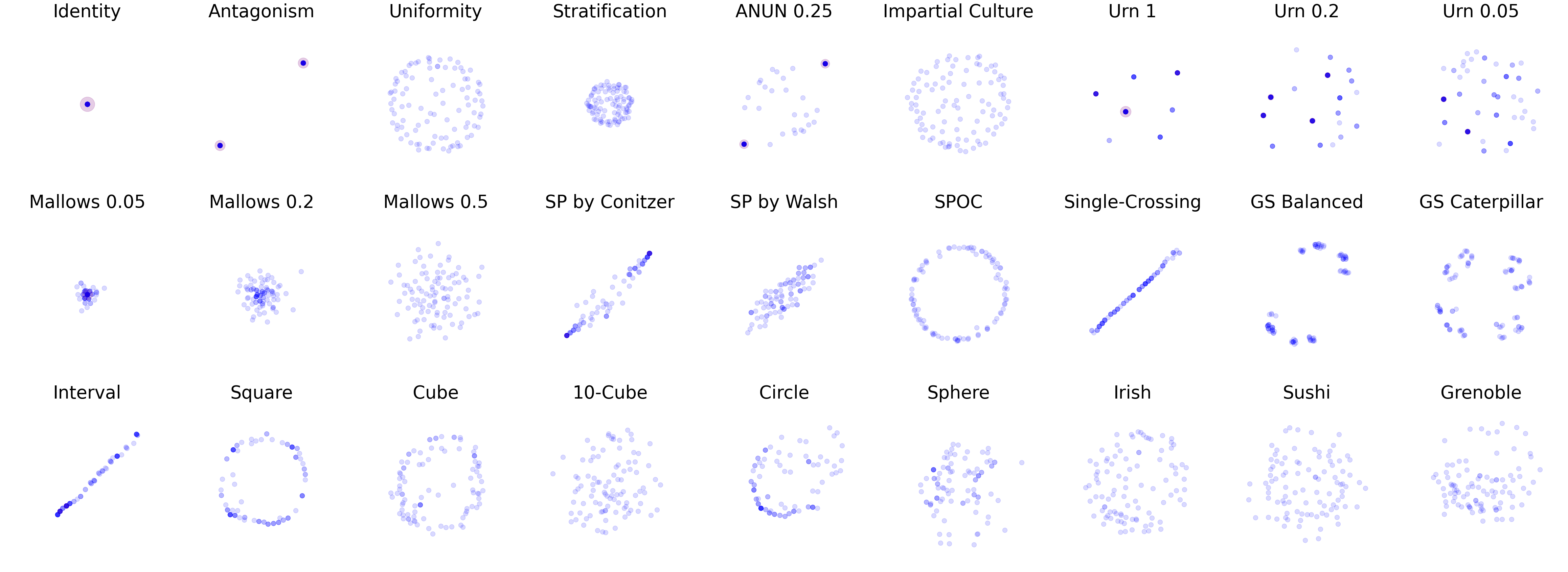}
    \caption{Maps of Preferences (8 candidates, 96 voters).}
    \label{fig:microscope_96}
\end{figure*}

\begin{figure*}[t]
    \centering
    \begin{tabular}{ccc}
        \multicolumn{3}{c}{{\small Improvement of the combined heuristic over greedy approach}} \\
        \begin{subfigure}[t]{0.32\textwidth}
            \centering
            \includegraphics[width=\textwidth]{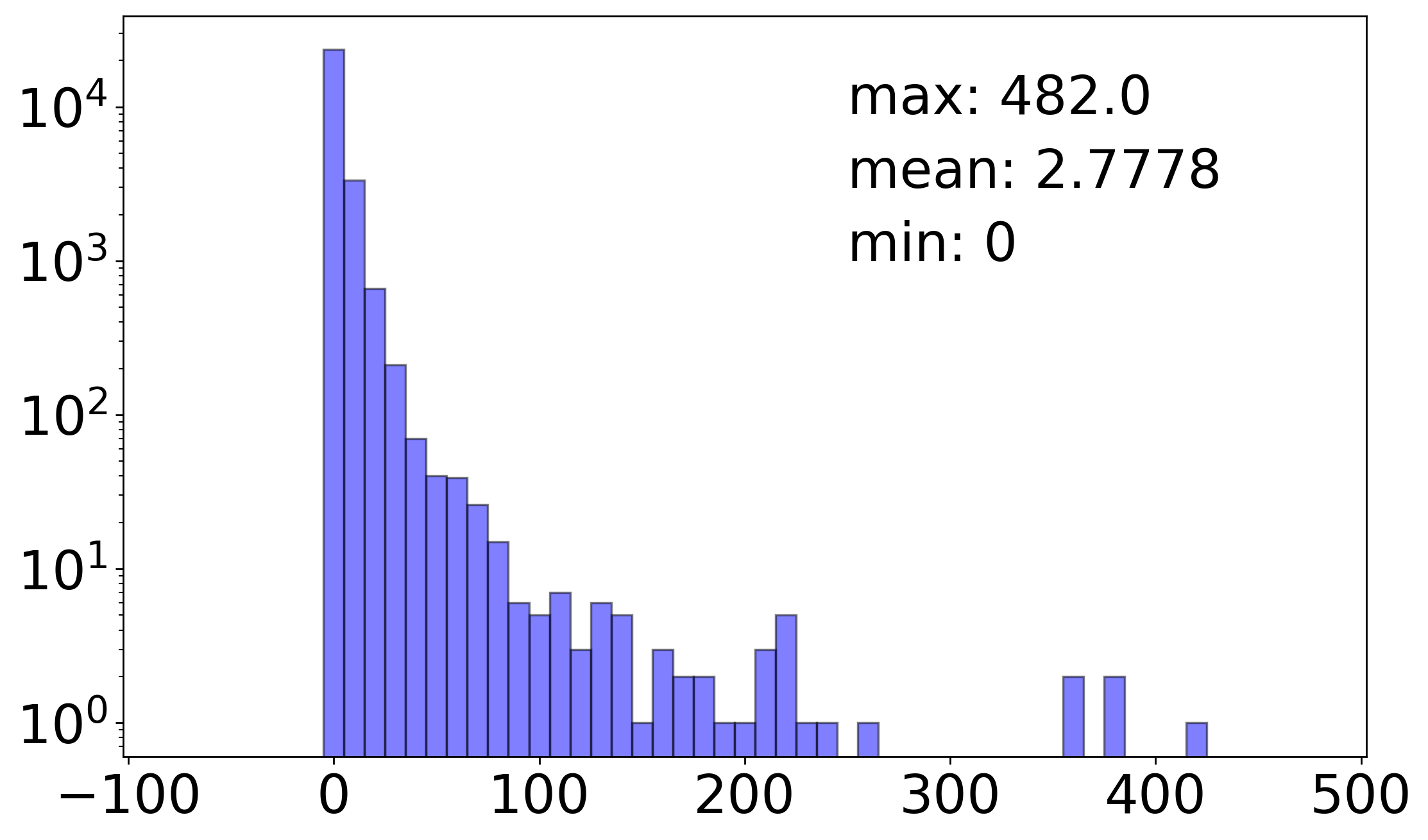}
        \end{subfigure}    &  
        \begin{subfigure}[t]{0.32\textwidth}
            \centering
            \includegraphics[width=\textwidth]{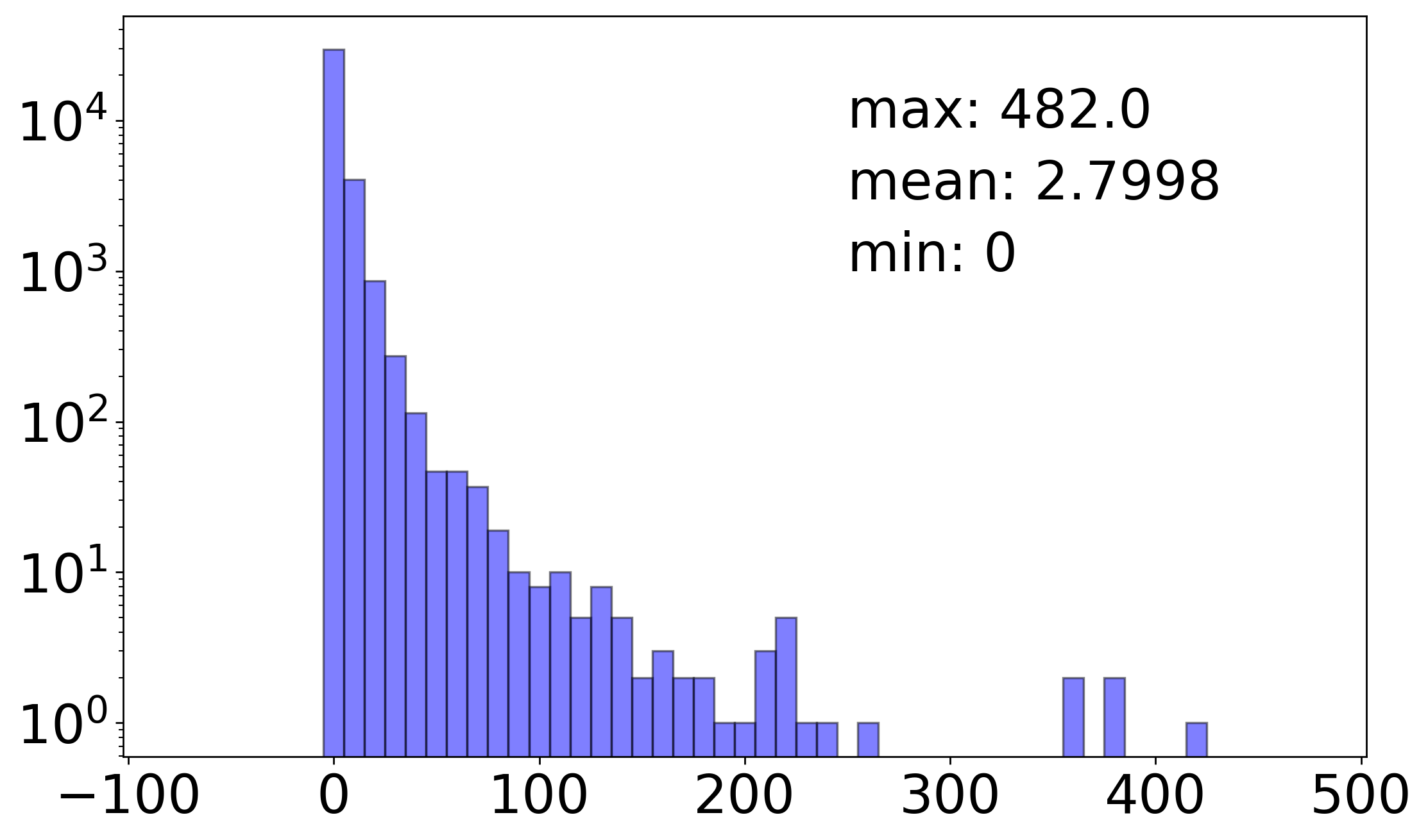}
        \end{subfigure}    &  
        \begin{subfigure}[t]{0.32\textwidth}
            \centering
            \includegraphics[width=\textwidth]{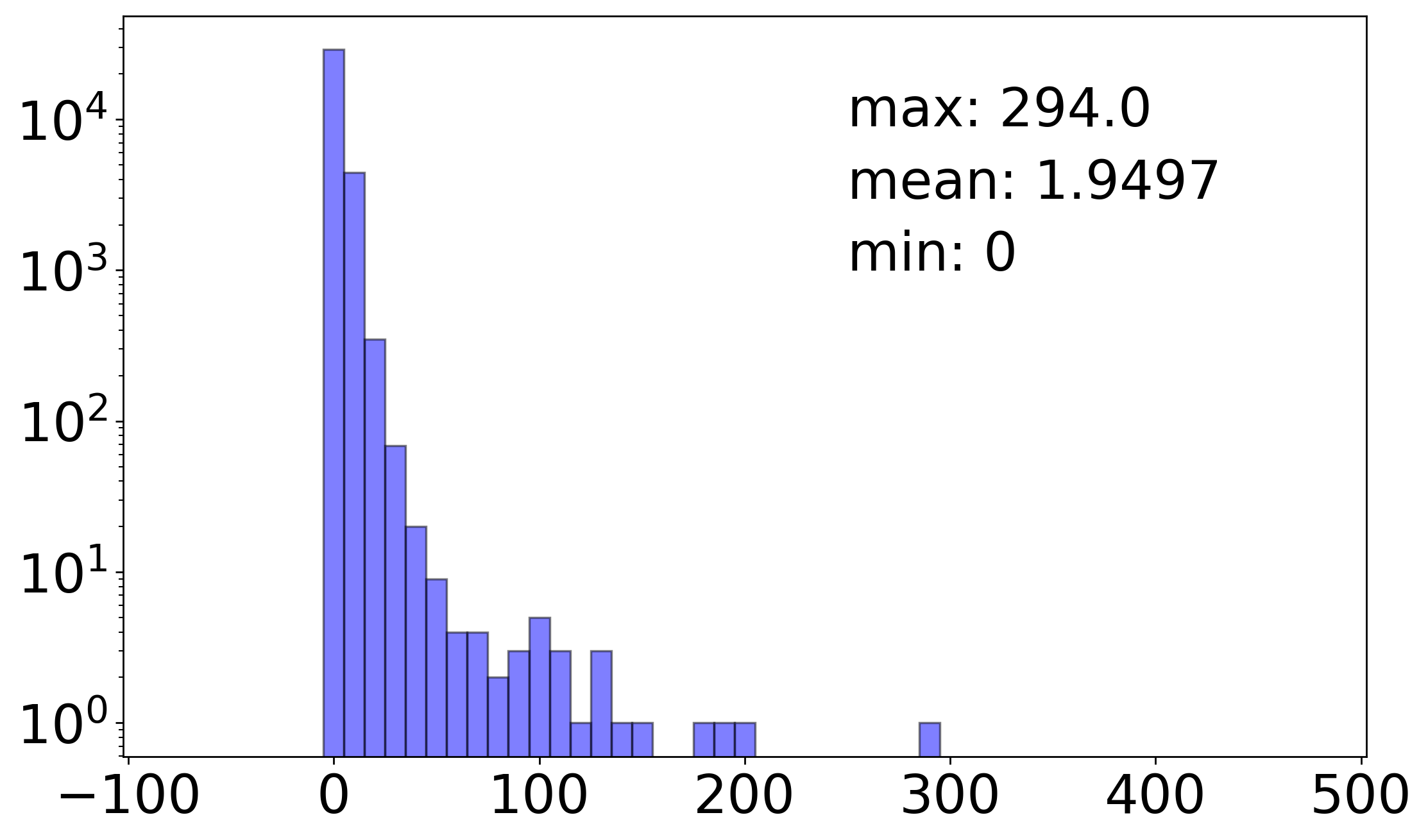}
        \end{subfigure}\\
        \multicolumn{3}{c}{{\small Improvement of local search over greedy approach}}\\
        \begin{subfigure}[t]{0.32\textwidth}
            \centering
            \includegraphics[width=\textwidth]{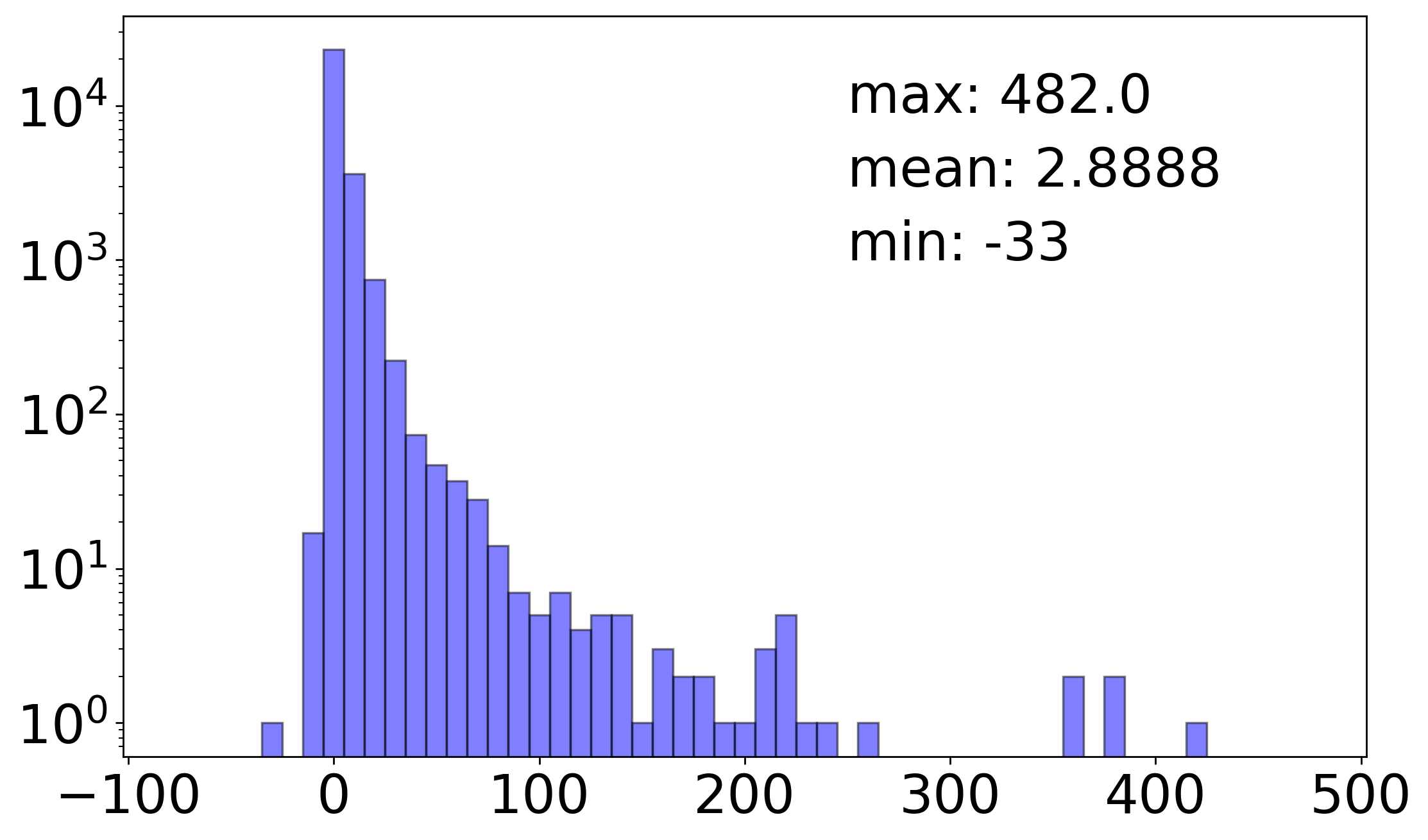}
        \end{subfigure}    &  
        \begin{subfigure}[t]{0.32\textwidth}
            \centering
            \includegraphics[width=\textwidth]{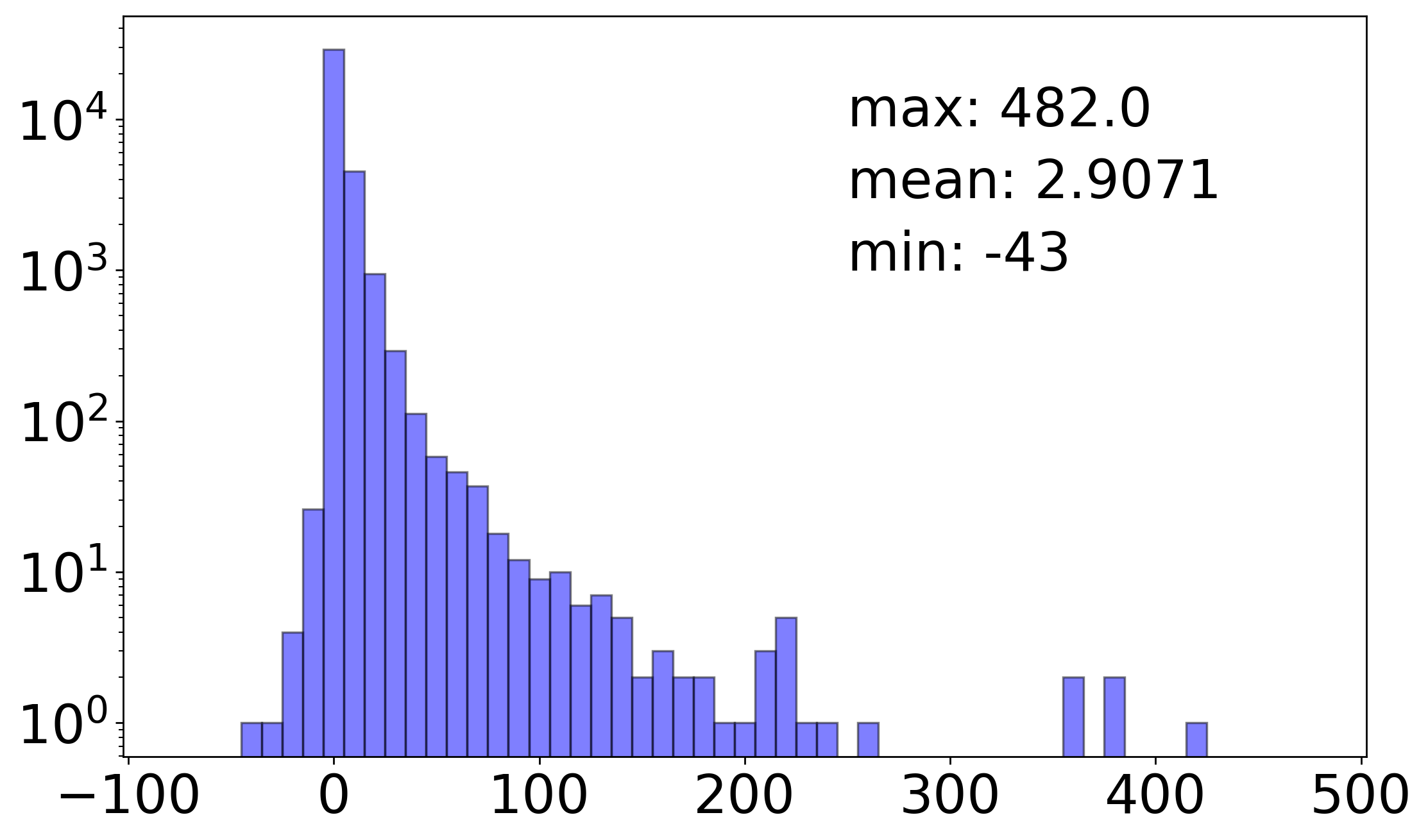}
        \end{subfigure}    &  
        \begin{subfigure}[t]{0.32\textwidth}
            \centering
            \includegraphics[width=\textwidth]{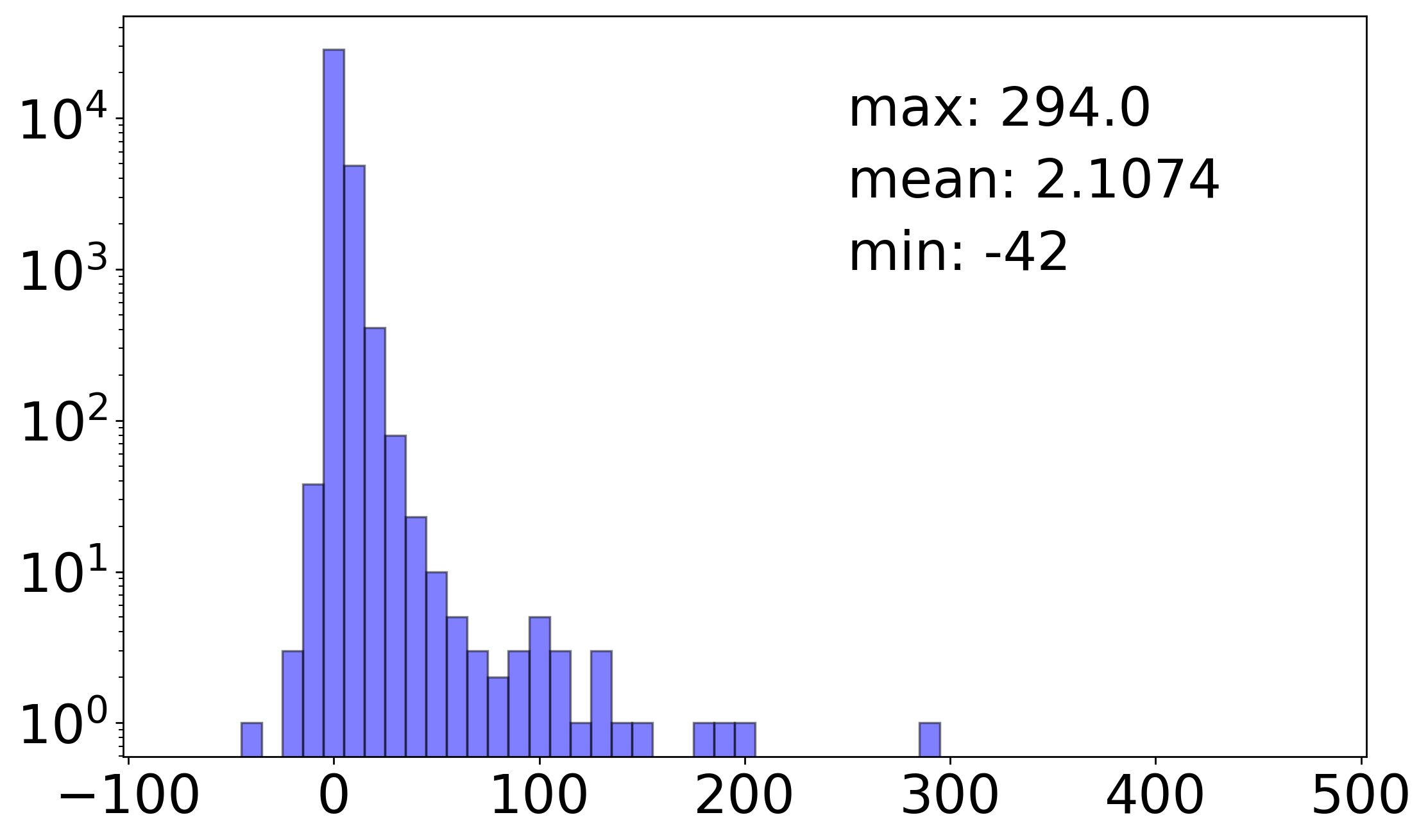}
        \end{subfigure}\\
        \multicolumn{3}{c}{{\small Improvement of the combined heuristic over local search}}\\
        \begin{subfigure}[t]{0.32\textwidth}
            \centering
            \includegraphics[width=\textwidth]{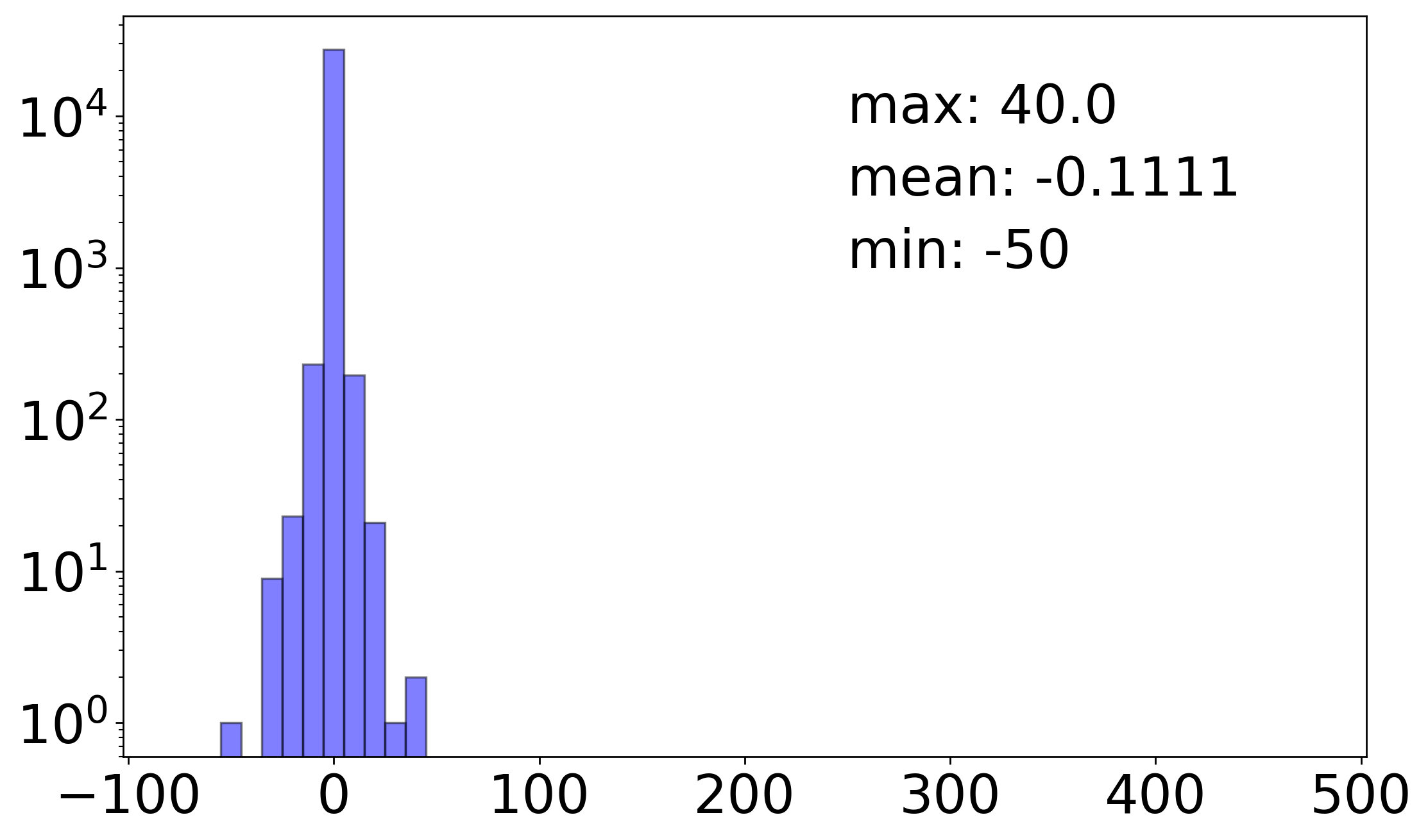}
        \end{subfigure}    &  
        \begin{subfigure}[t]{0.32\textwidth}
            \centering
            \includegraphics[width=\textwidth]{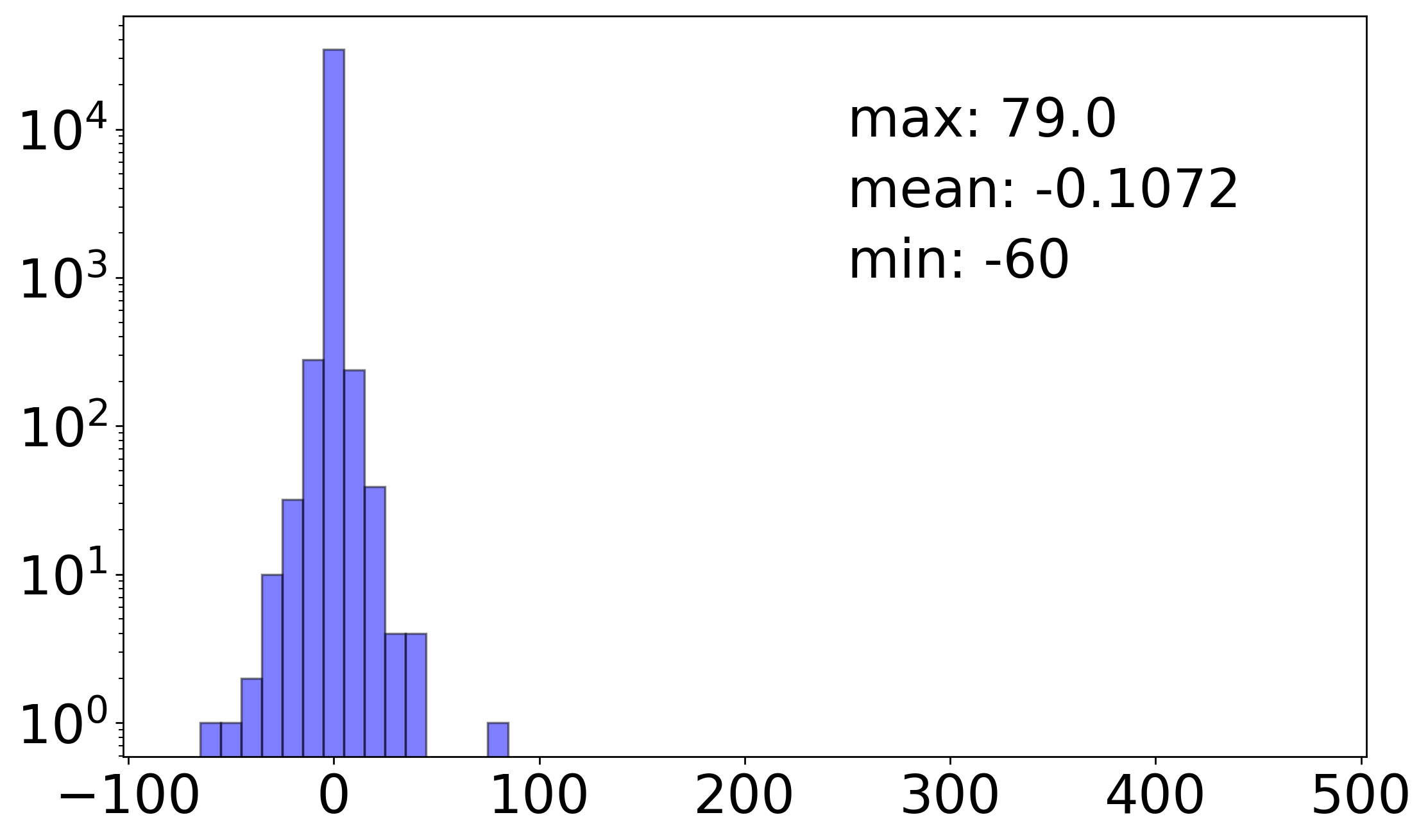}
        \end{subfigure}    &  
        \begin{subfigure}[t]{0.32\textwidth}
            \centering
            \includegraphics[width=\textwidth]{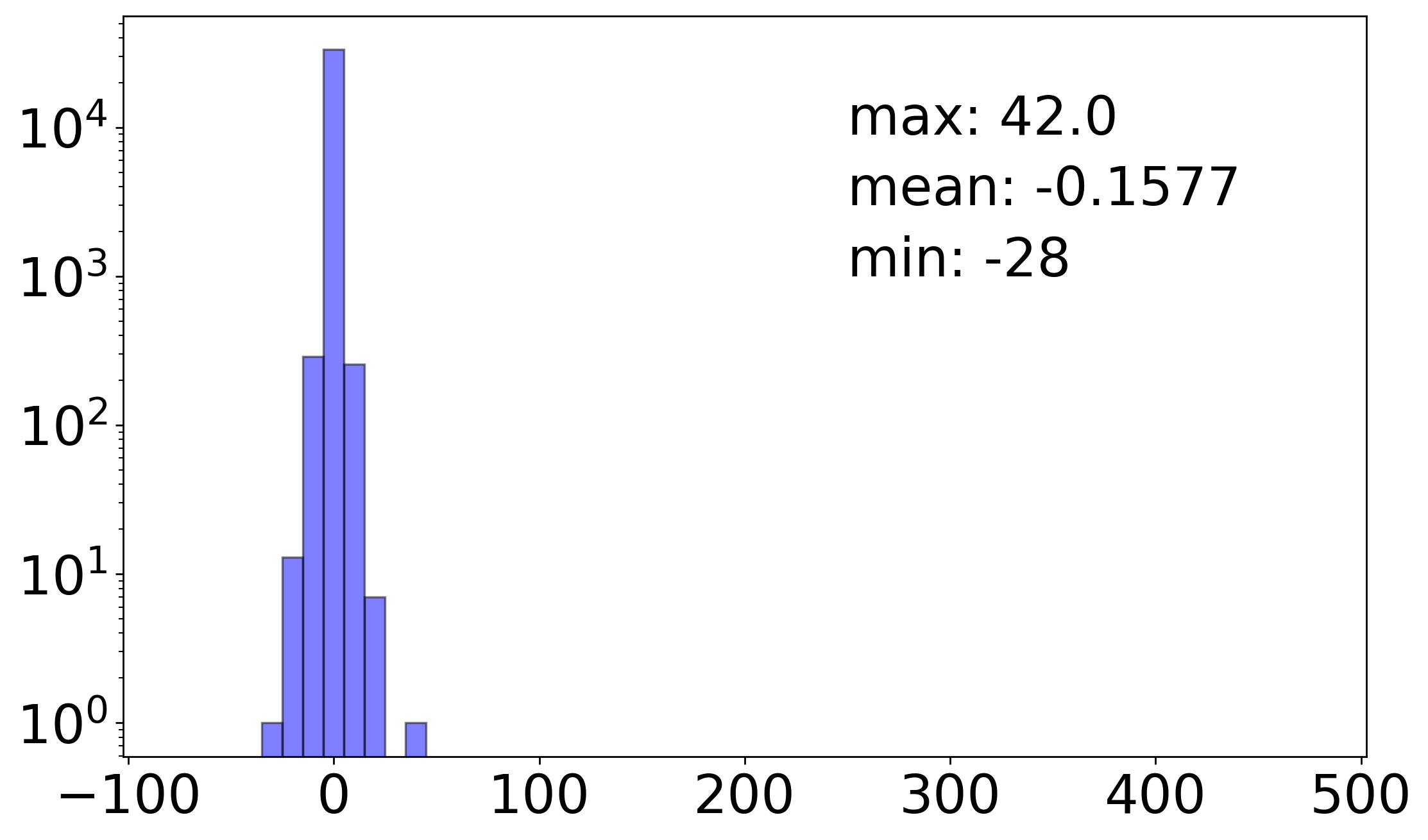}
        \end{subfigure}\\
        Standard dataset & Extended dataset & Mallows dataset
    \end{tabular}
    \caption{The histograms of differences in $k$-Kemeny distances returned by our three algorithms
    (note that the counts are in the logarithmic scale).
    The first column corresponds to the standard dataset,
    the second one to the extended dataset,
    and the third to the Mallows dataset.
    The maximal, average, and minimal value in each case is given.}
    \label{fig:histograms}
\end{figure*}
\section[k-Kemeny Computation Methods]{$\boldsymbol{k}$-Kemeny Computation Methods}
\label{app:kkemeny-computation}
In this section,
we present our experiment
comparing three methods of computing
$k$-Kemeny distances: the greedy approach,
the local search, and the combined heuristic.

For each election in all three of our datasets
and every $k \in [96]$, we calculated
$k$-Kemeny distance using our three methods.
Then, we looked at the differences between the reported values.
The histograms of differences for all three pairs of methods and
all three datasets are presented in Fig.~\ref{fig:histograms}.
We note that in the majority of cases
all three methods returned exactly the same distance.
However, in other cases, the differences between
the reported $k$-Kemeny distance was significant,
especially if we compare the combined heuristic or the local search
against the greedy approach.
In particular, maximal difference, 482,
is observed for an election
that is a mixture of \AN and $\appUN$,
where voters characteristic for \AN dominate by far
(see Appendix~\ref{app:assorted-dataset} for the definition).
Hence, we have two large groups of voters with exactly opposing preferences
and few approximately uniformly spread votes (see Fig.~\ref{fig:microscope_96}
(5th column, 1st row) for an illustration).
Then, for $k=2$, the greedy algorithm first chooses a vote that is somewhere in the middle,
and then a vote that belongs to one of the two opposing groups.
However, we obtain much smaller total distance,
if we just set the rankings at the preference orders of the two opposing groups
what both the local search and the combined heuristic managed to do.

The differences between local search and the combined heuristic are comparatively very small.
On average, local search performed better than the combined heuristic,
but the difference is too small to draw any conclusions.
Hence, in our further calculations
we simply took the better of the outcomes produced by either of these two methods.

\section{Plots}
\label{app:plots}
In this section, we present the values of all three indices
for elections in our datasets.
We do it in three figures
(see their captions for details):
\begin{itemize}
    \item Fig.~\ref{fig:plots} presents plots on which every election is a dot with x/y coordinates corresponding to the values of two out of three of our indices
    (for every pair of indices).
    We include also their affine transformations to show the resemblance 
    to maps of elections from Figs.~\ref{fig:swap-map:standard},
    \ref{fig:swap-map:extended}, and~\ref{fig:swap-map:mallows}.
    \item Fig.~\ref{fig:tenmaps} shows the maps of elections in which the colors of the dots correspond to the values our indices.
    \item Fig.~\ref{fig:correlations:all} depicts the correlation between the values of agreement, diversity, and polarization and the distance from \ID, $\appUN$, and \AN, respectively.
\end{itemize}

\begin{figure*}[t]
     \centering
     \begin{subfigure}[t]{0.177\textwidth}
         \centering
         \includegraphics[width=\textwidth]{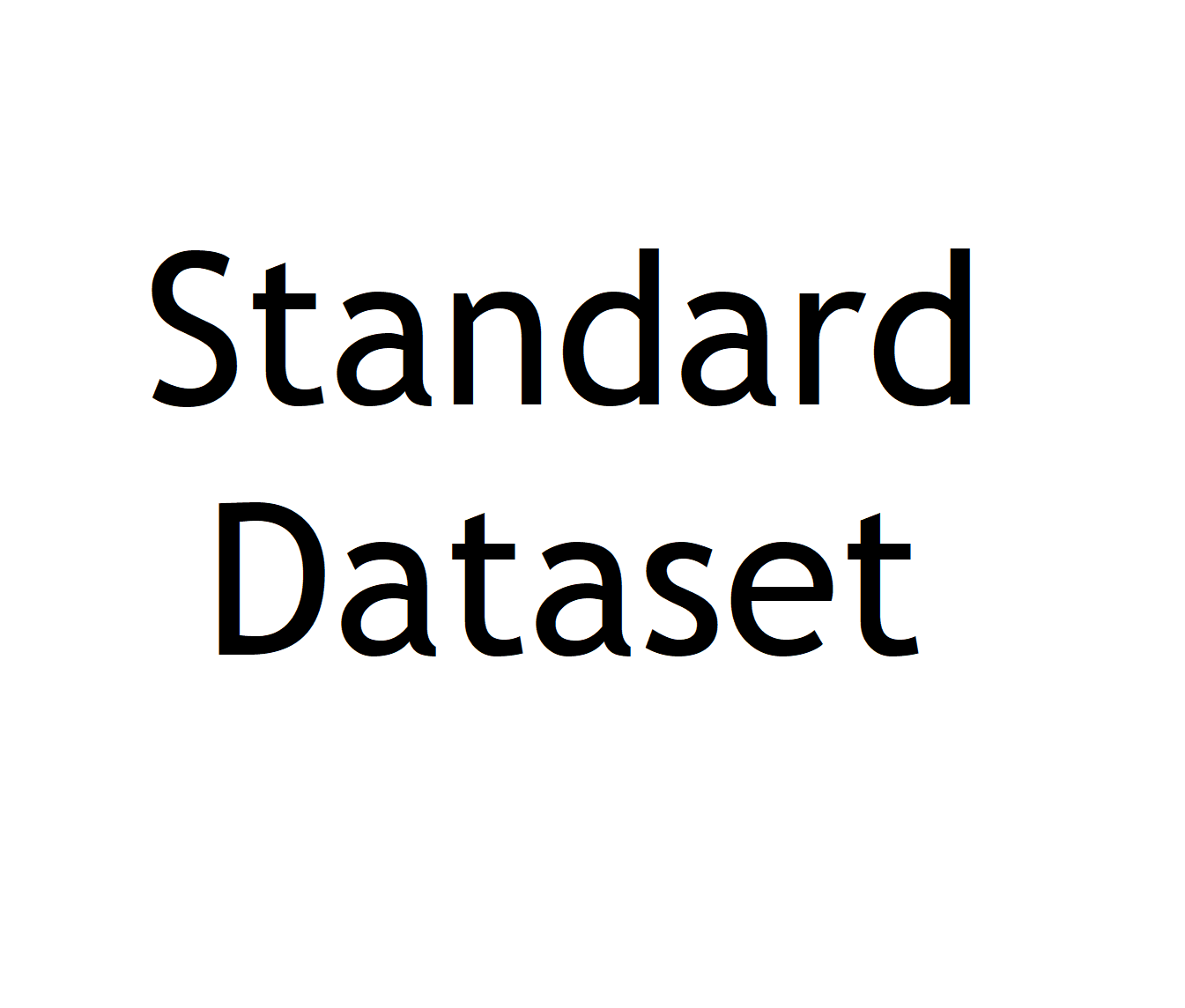}
     \end{subfigure}
     \begin{subfigure}[t]{0.222\textwidth}
         \centering
         \includegraphics[width=\textwidth]{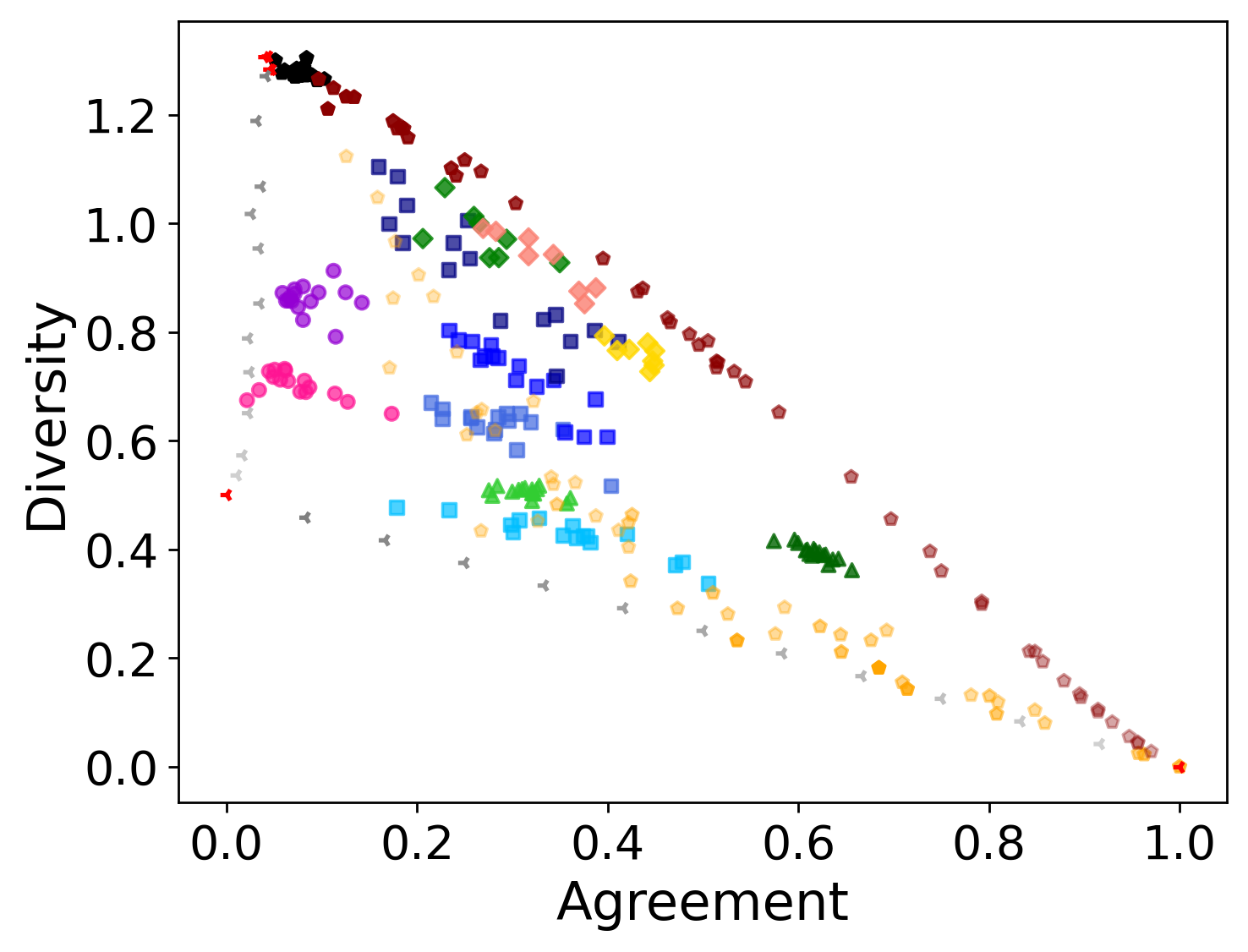}
     \end{subfigure}
     \begin{subfigure}[t]{0.222\textwidth}
         \centering
         \includegraphics[width=\textwidth]{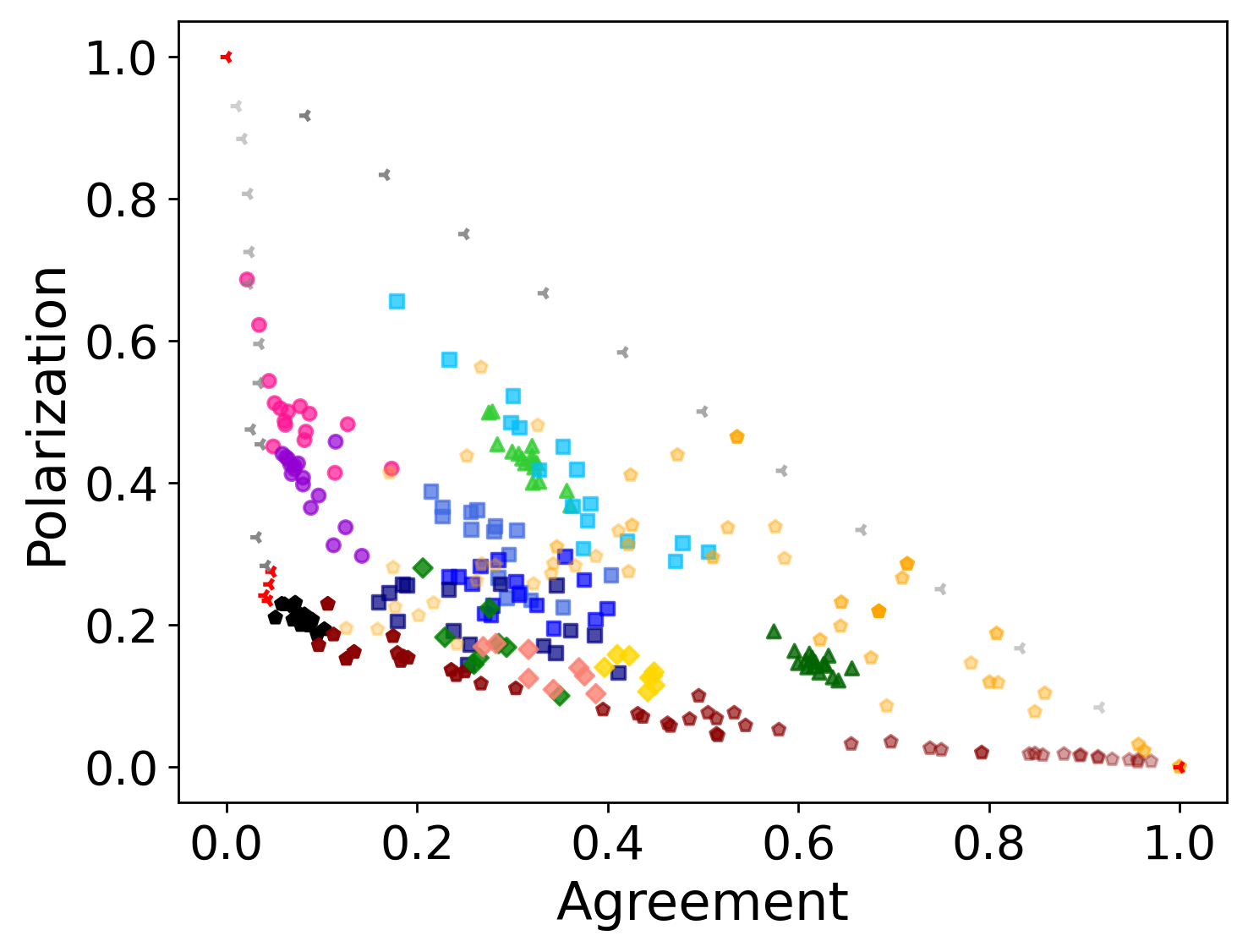}
     \end{subfigure}
     \begin{subfigure}[t]{0.222\textwidth}
         \centering
         \includegraphics[width=\textwidth]{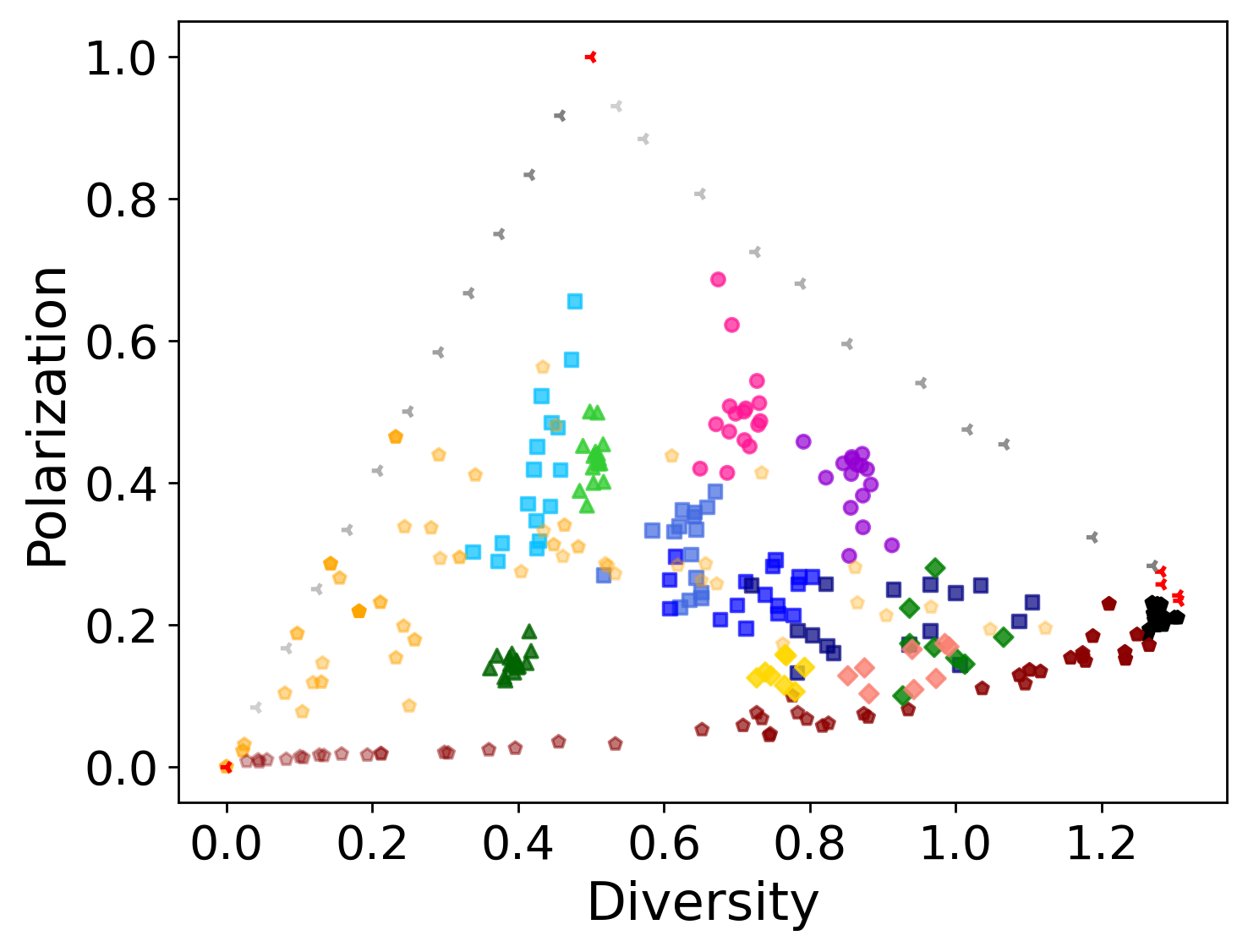}
     \end{subfigure}
\end{figure*}
\begin{figure*}[t]\ContinuedFloat
     \centering
     \begin{subfigure}[t]{0.177\textwidth}
         \centering
         \includegraphics[width=\textwidth]{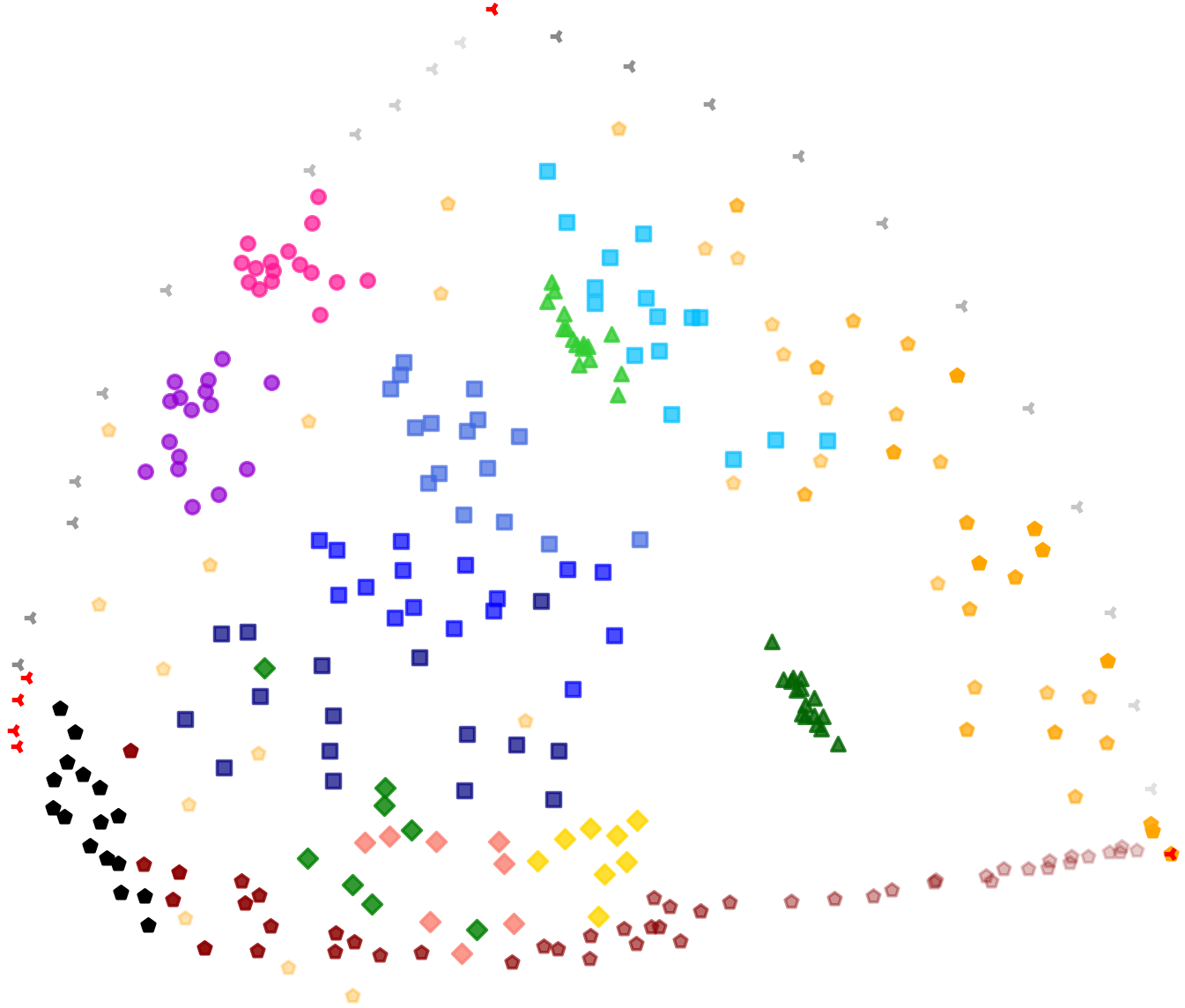}
     \end{subfigure}
     \begin{subfigure}[t]{0.222\textwidth}
         \centering
         \includegraphics[width=\textwidth]{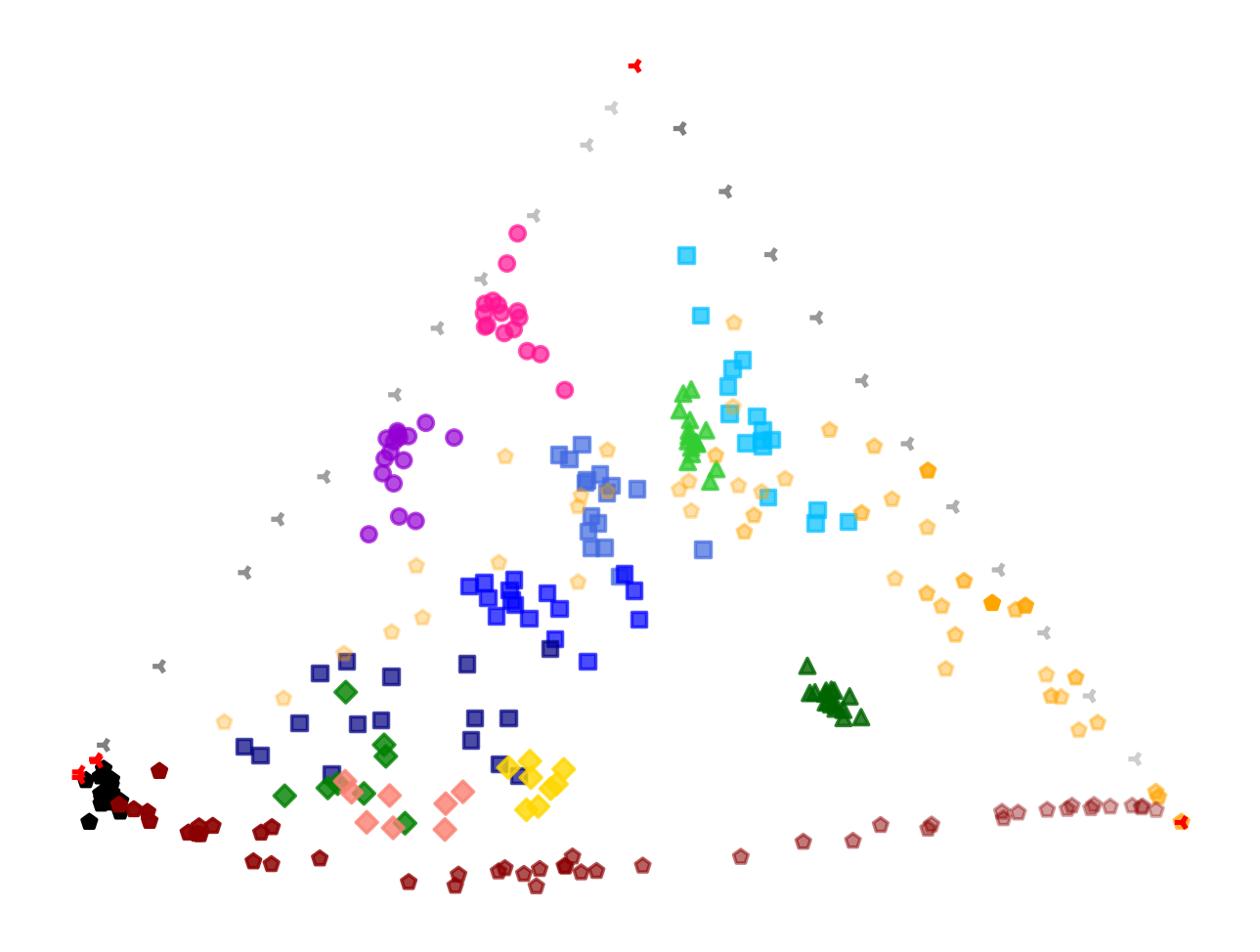}
     \end{subfigure}
     \begin{subfigure}[t]{0.222\textwidth}
         \centering
         \includegraphics[width=\textwidth]{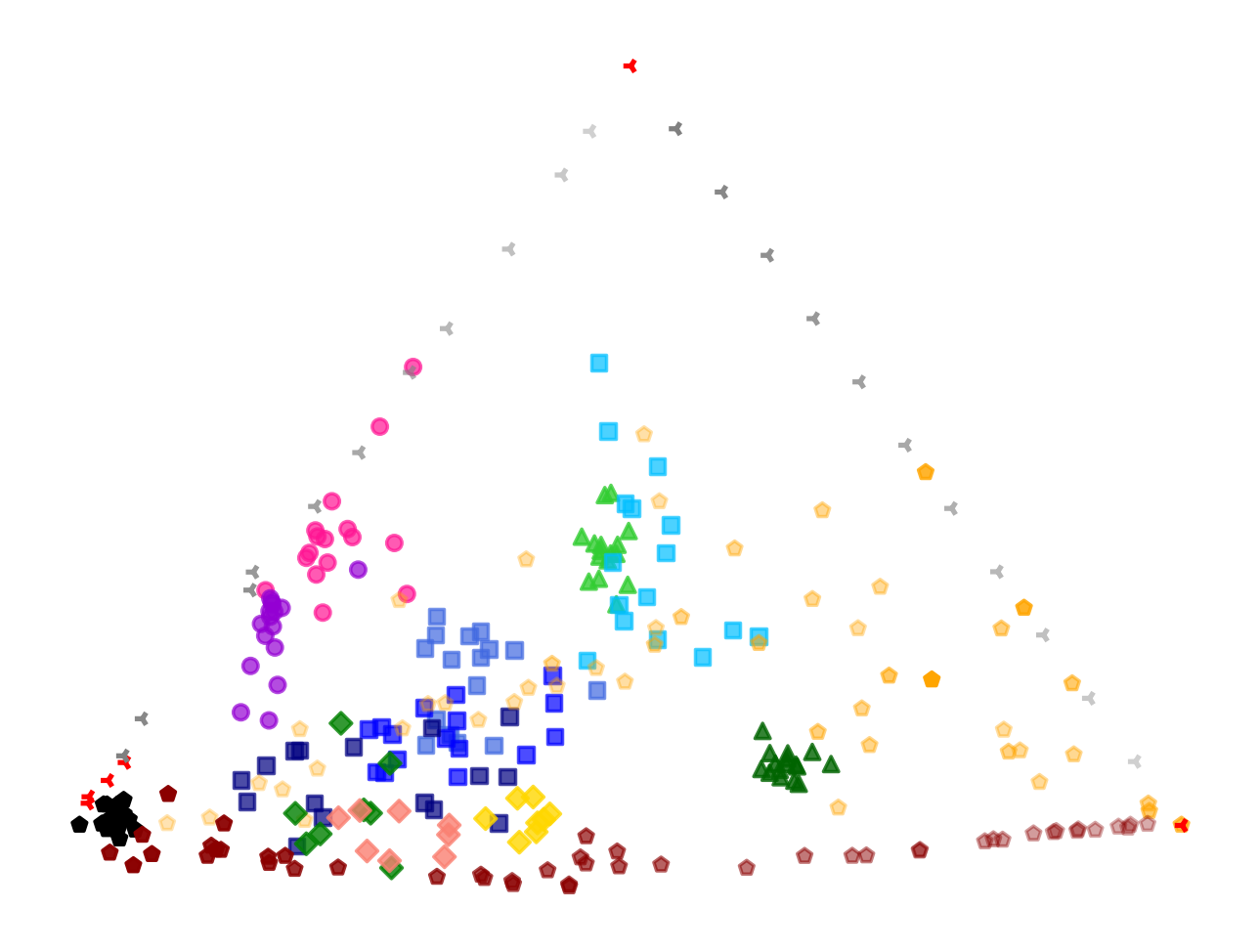}
     \end{subfigure}
     \begin{subfigure}[t]{0.222\textwidth}
         \centering
         \includegraphics[width=\textwidth]{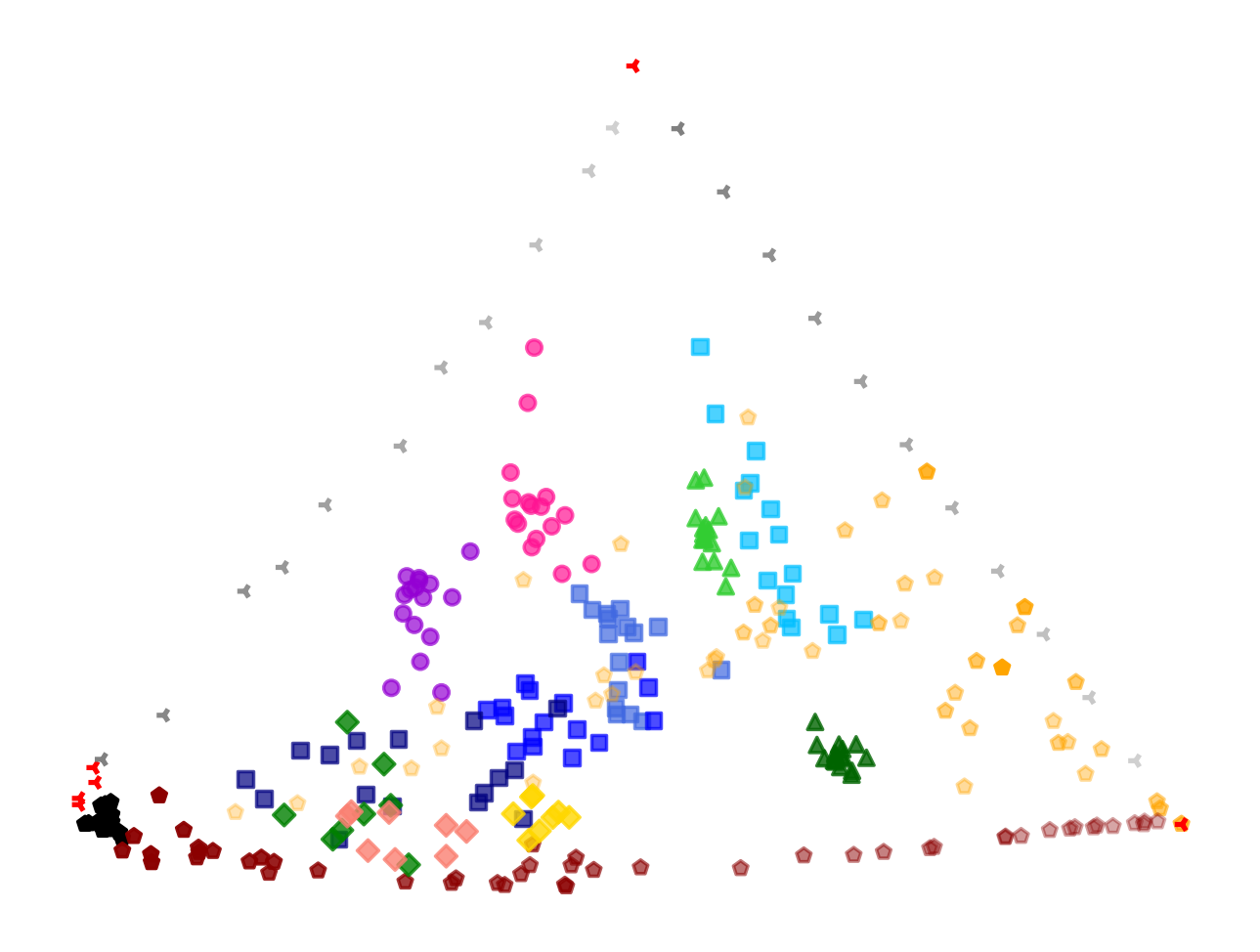}
     \end{subfigure}
\end{figure*}
\begin{figure*}[t]\ContinuedFloat
     \centering
     \begin{subfigure}[t]{0.177\textwidth}
         \centering
         \includegraphics[width=\textwidth]{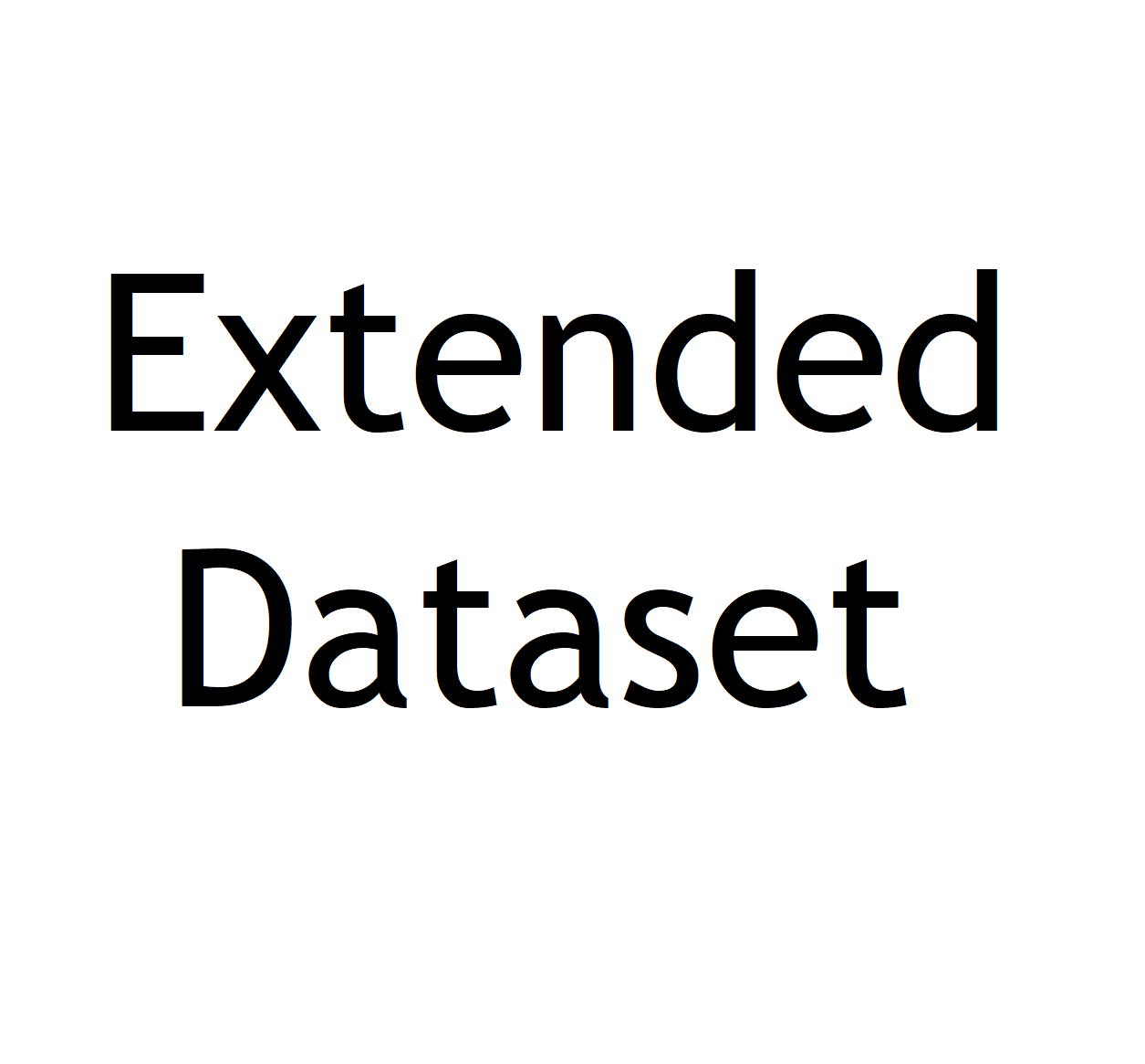}
     \end{subfigure}
     \begin{subfigure}[t]{0.222\textwidth}
         \centering
         \includegraphics[width=\textwidth]{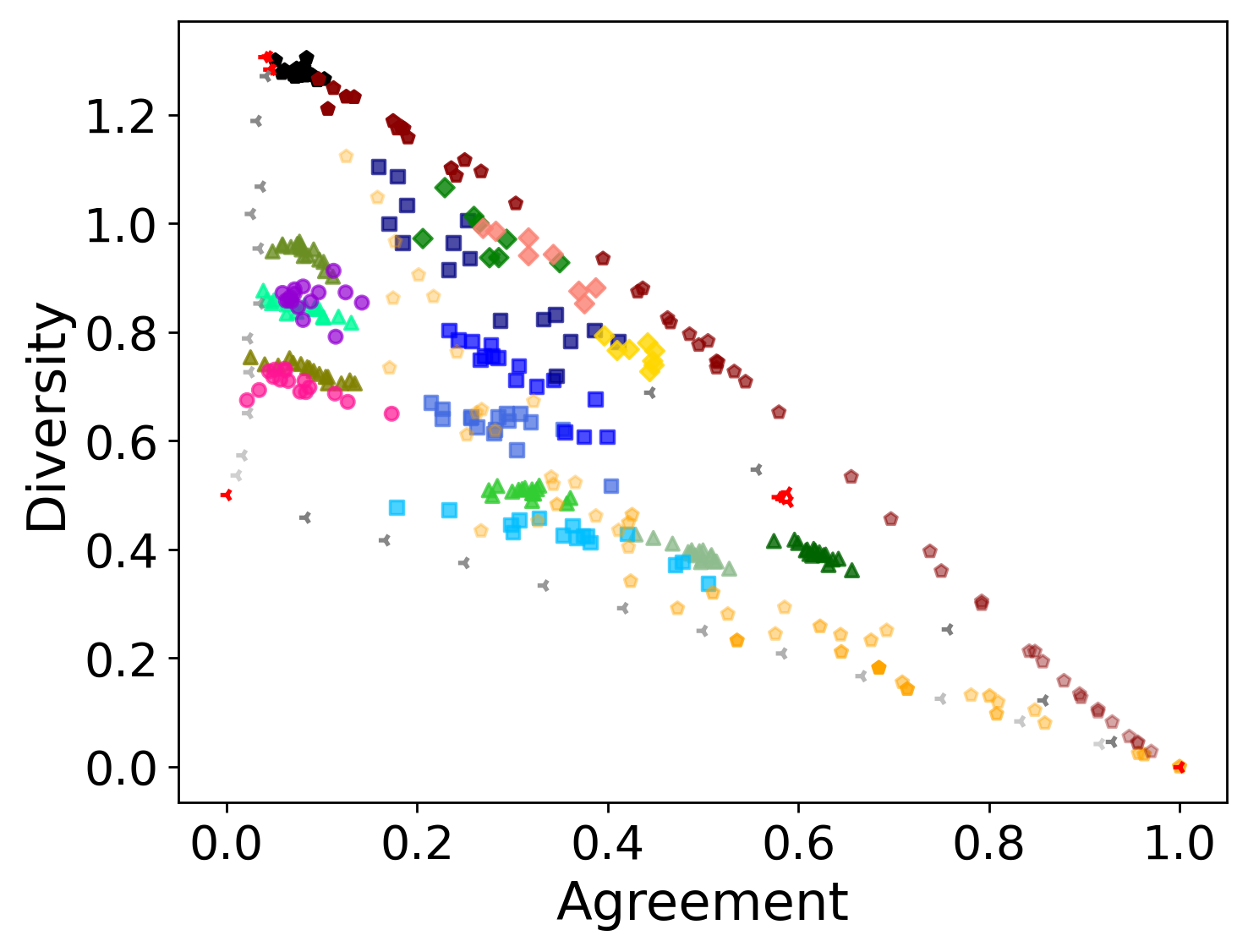}
     \end{subfigure}
     \begin{subfigure}[t]{0.222\textwidth}
         \centering
         \includegraphics[width=\textwidth]{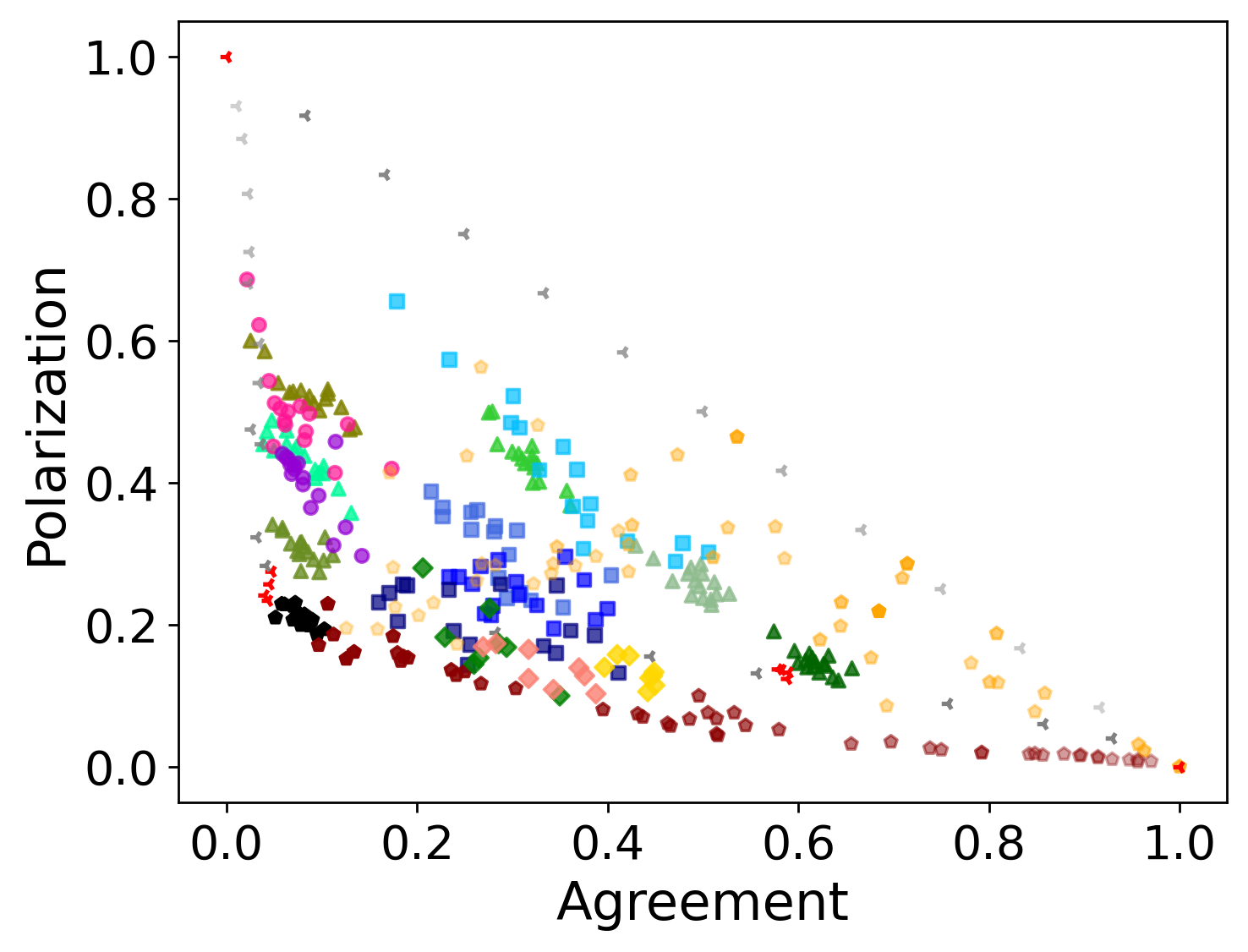}
     \end{subfigure}
     \begin{subfigure}[t]{0.222\textwidth}
         \centering
         \includegraphics[width=\textwidth]{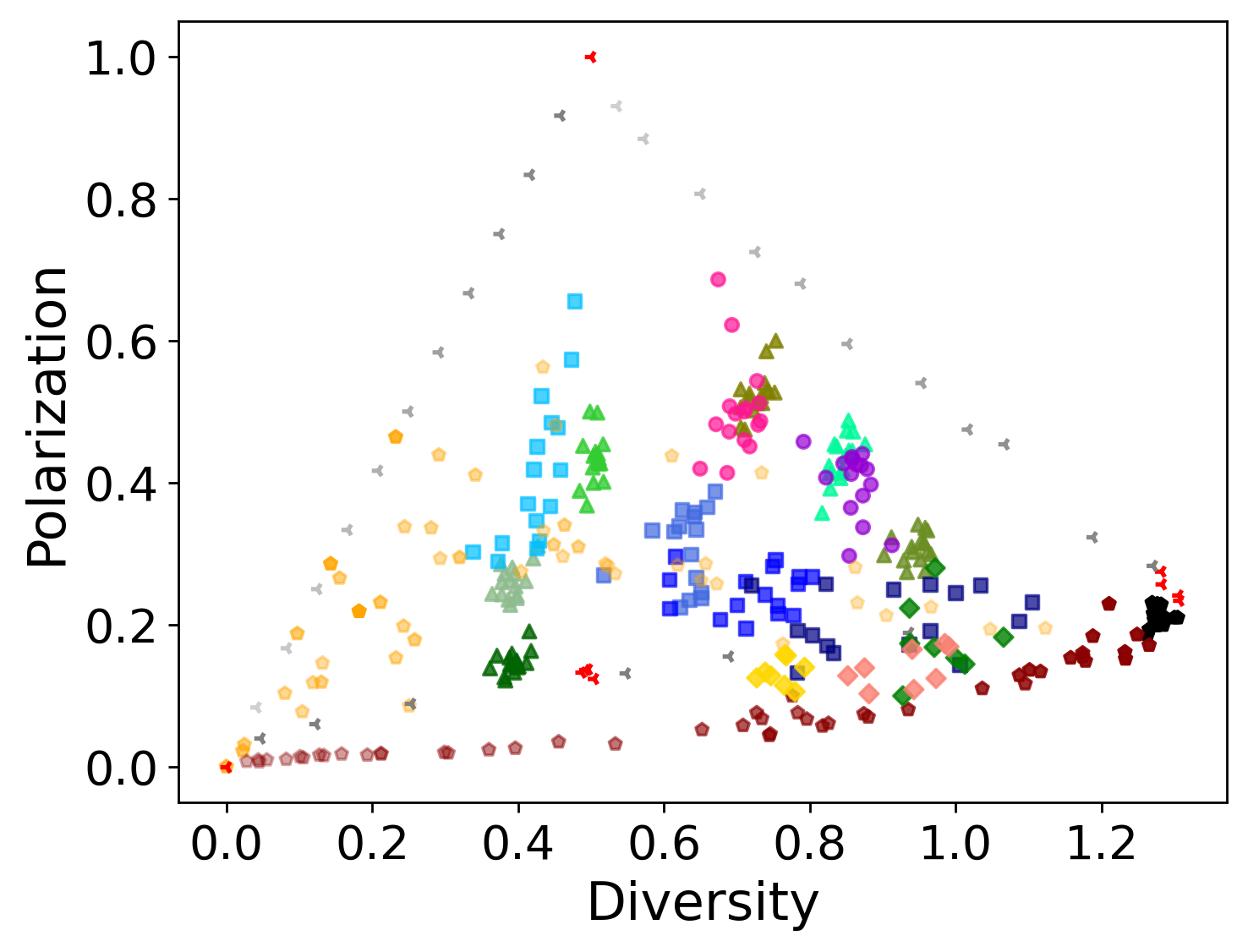}
     \end{subfigure}
\end{figure*}
\begin{figure*}[t]\ContinuedFloat
     \centering
     \begin{subfigure}[t]{0.177\textwidth}
         \centering
         \includegraphics[width=\textwidth]{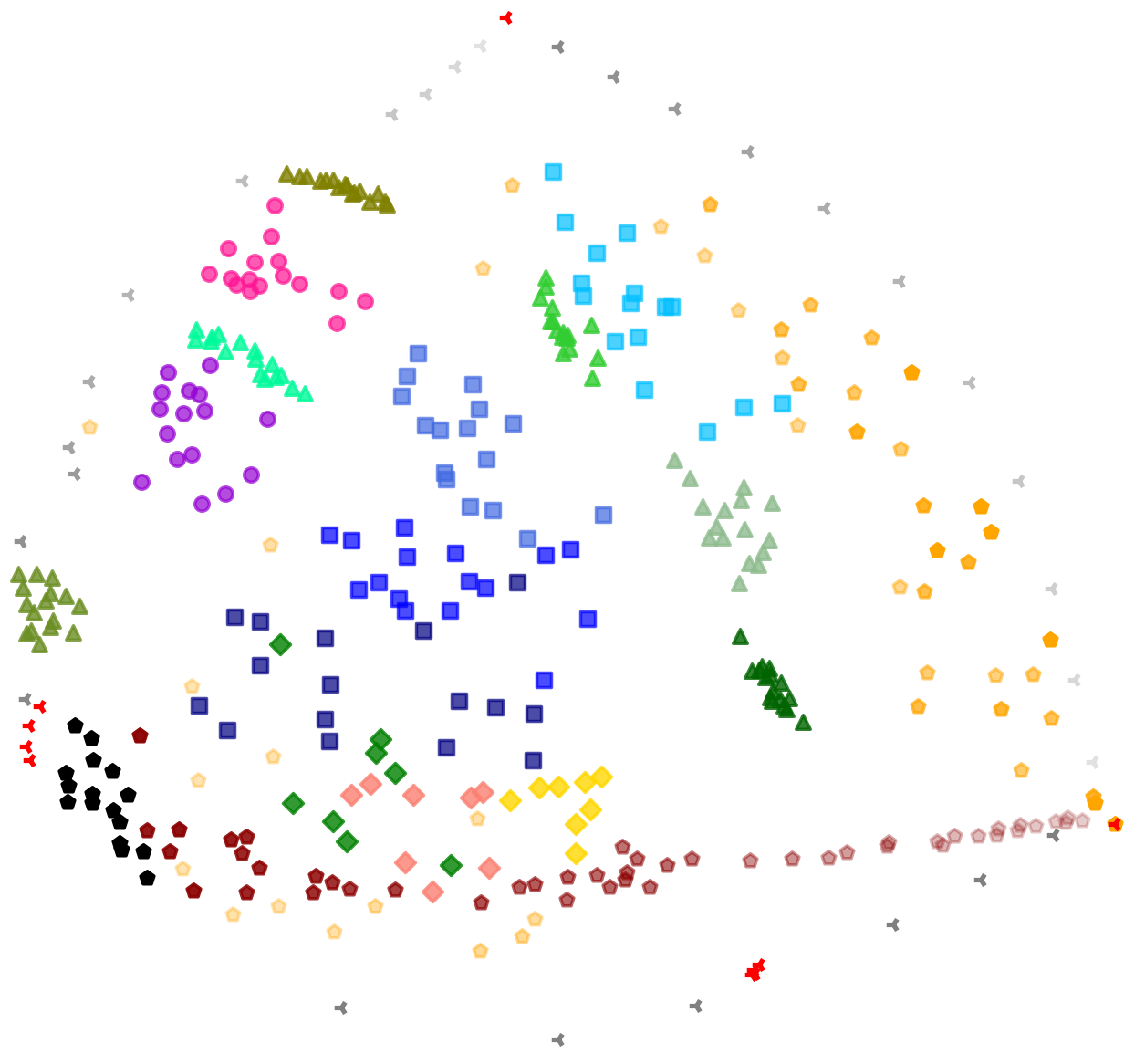}
     \end{subfigure}
     \begin{subfigure}[t]{0.222\textwidth}
         \centering
         \includegraphics[width=\textwidth]{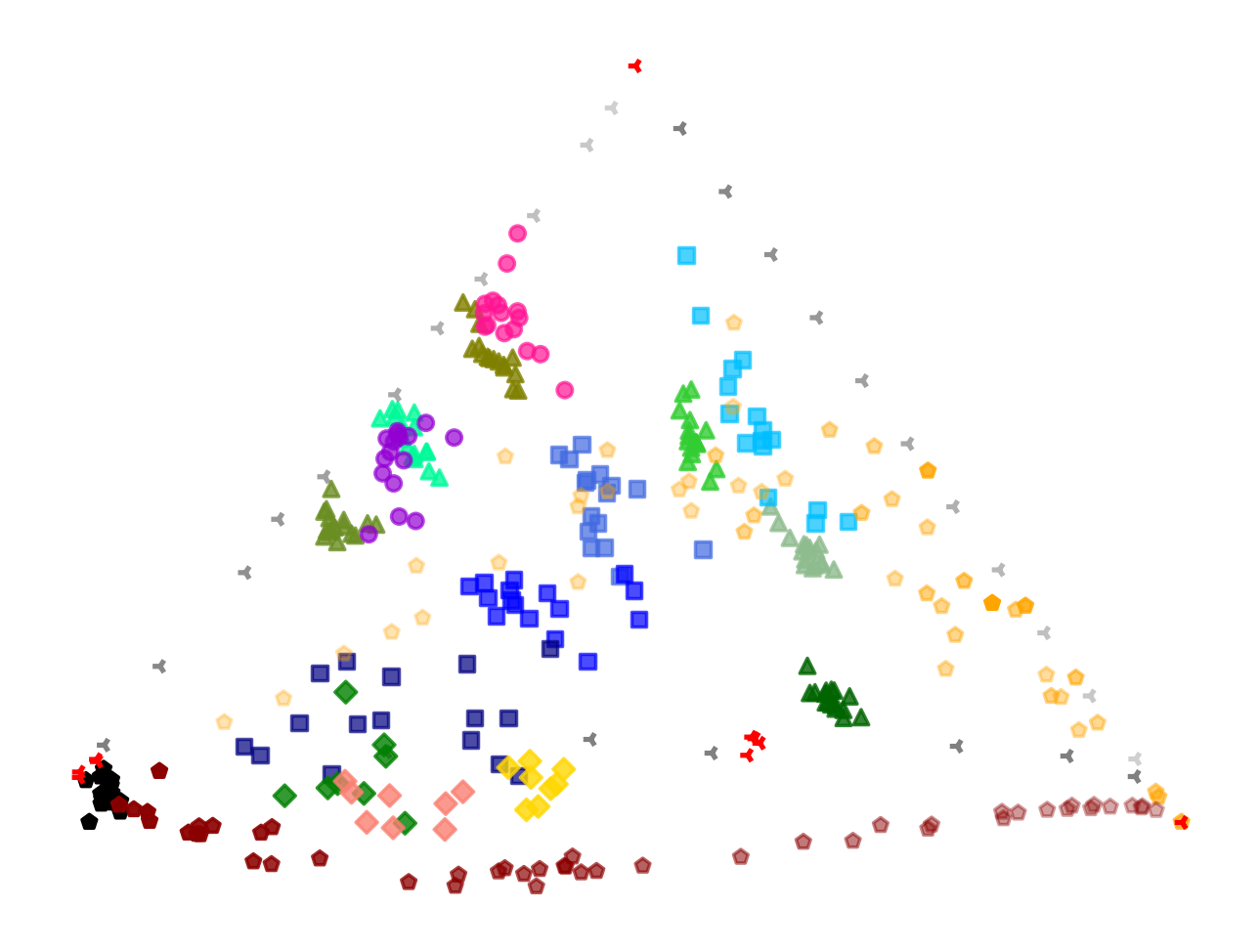}
     \end{subfigure}
     \begin{subfigure}[t]{0.222\textwidth}
         \centering
         \includegraphics[width=\textwidth]{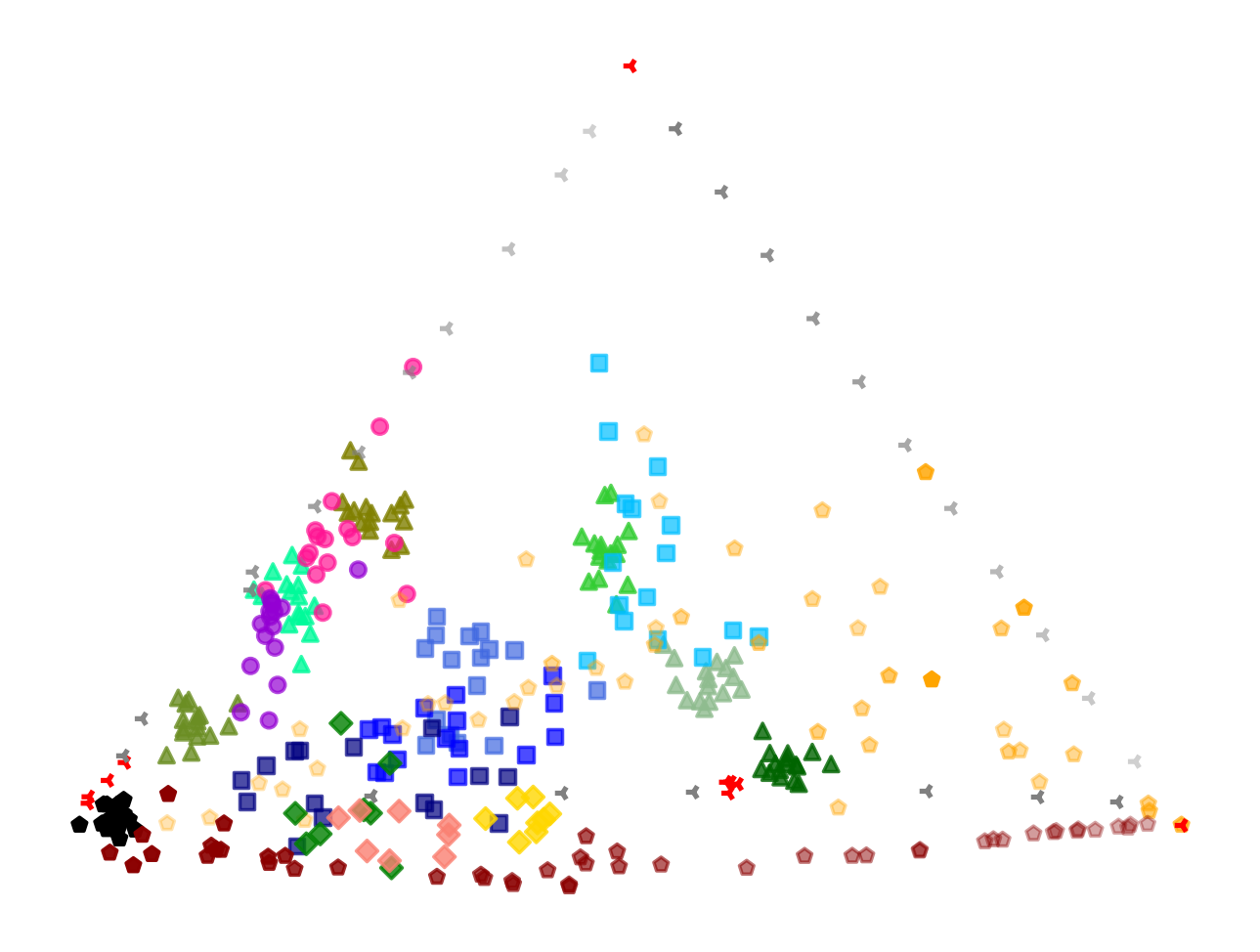}
     \end{subfigure}
     \begin{subfigure}[t]{0.222\textwidth}
         \centering
         \includegraphics[width=\textwidth]{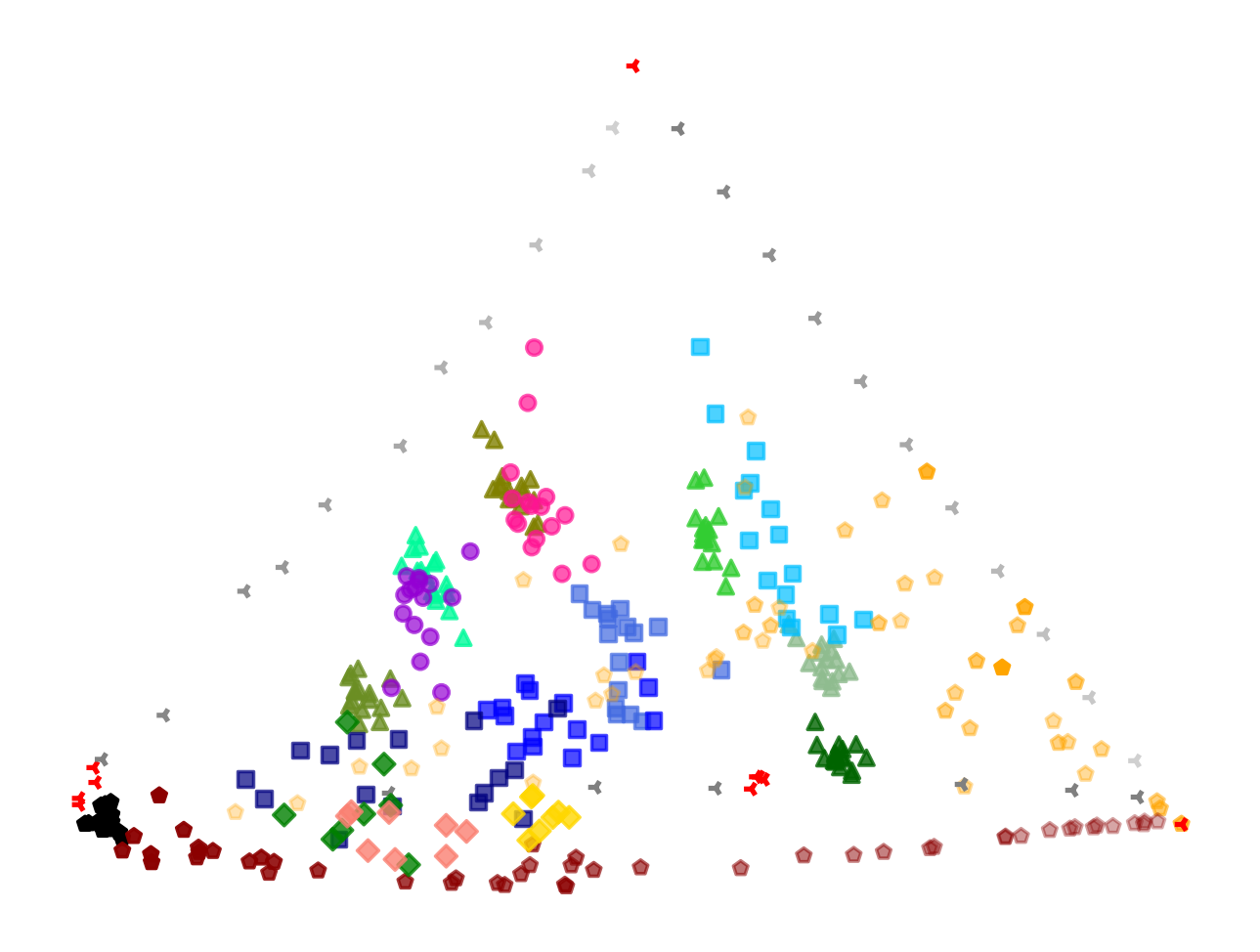}
     \end{subfigure}
\end{figure*}
\begin{figure*}[t]\ContinuedFloat
     \centering
     \begin{subfigure}[t]{0.177\textwidth}
         \centering
         \includegraphics[width=\textwidth]{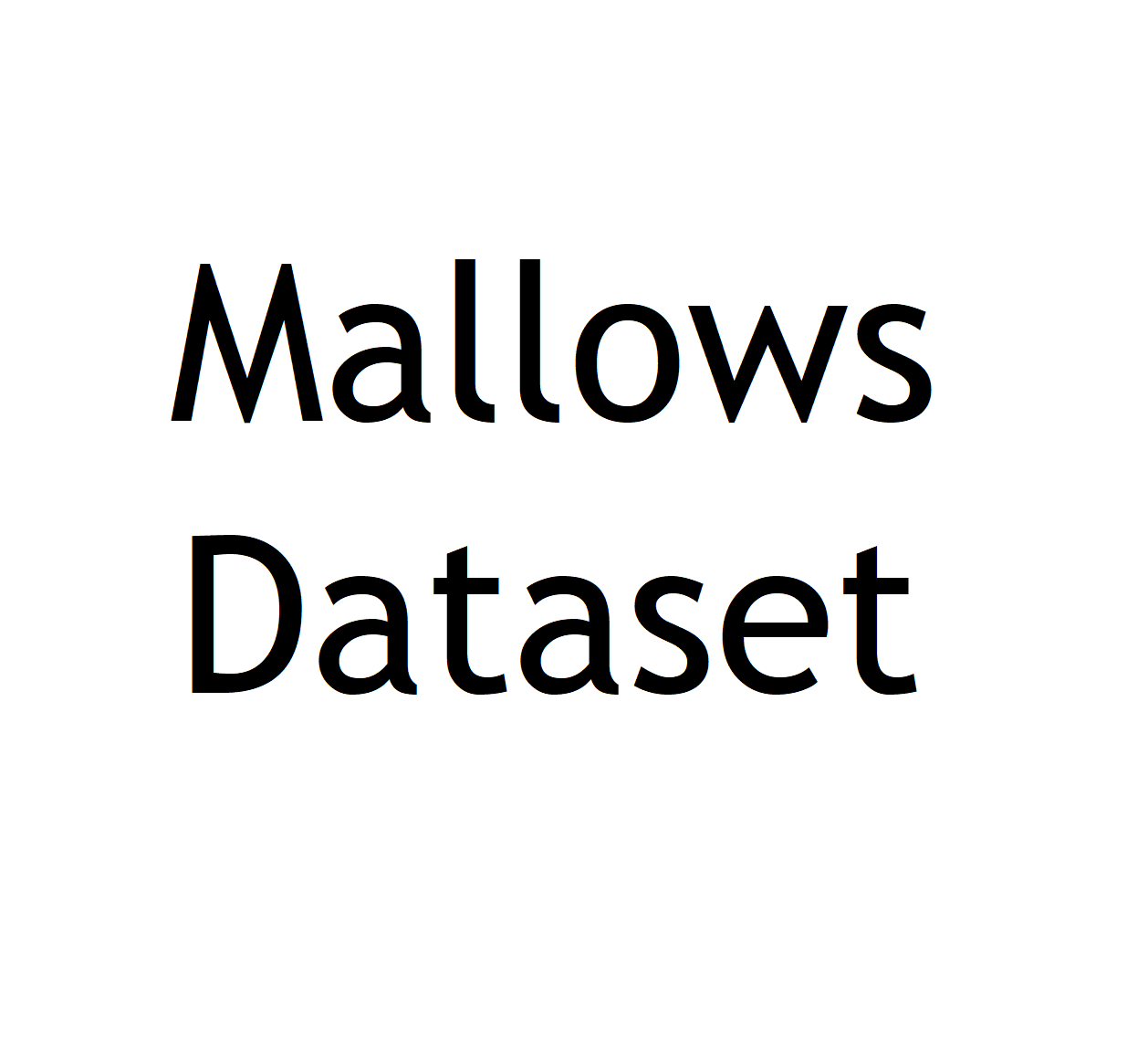}
     \end{subfigure}
     \begin{subfigure}[t]{0.222\textwidth}
         \centering
         \includegraphics[width=\textwidth]{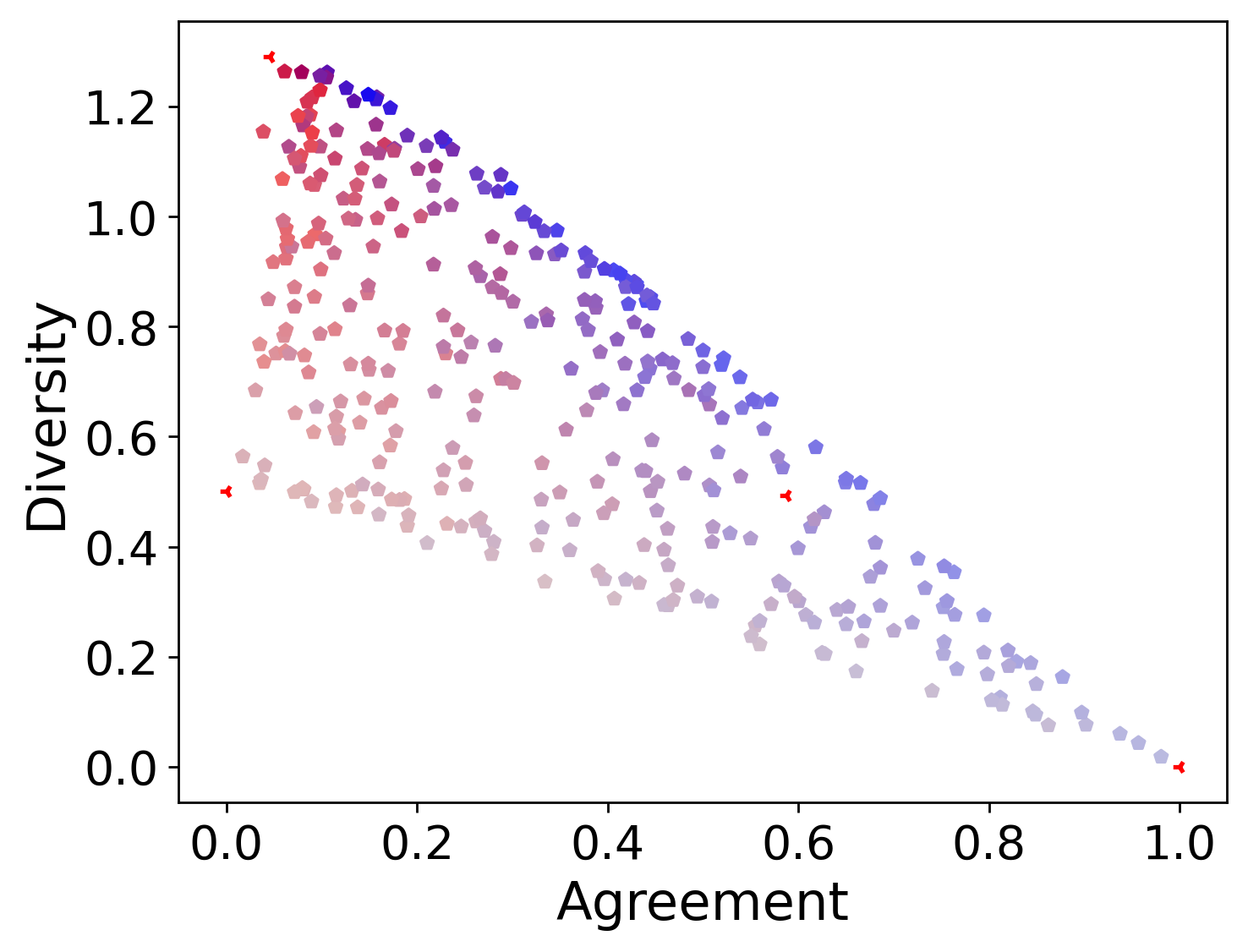}
     \end{subfigure}
     \begin{subfigure}[t]{0.222\textwidth}
         \centering
         \includegraphics[width=\textwidth]{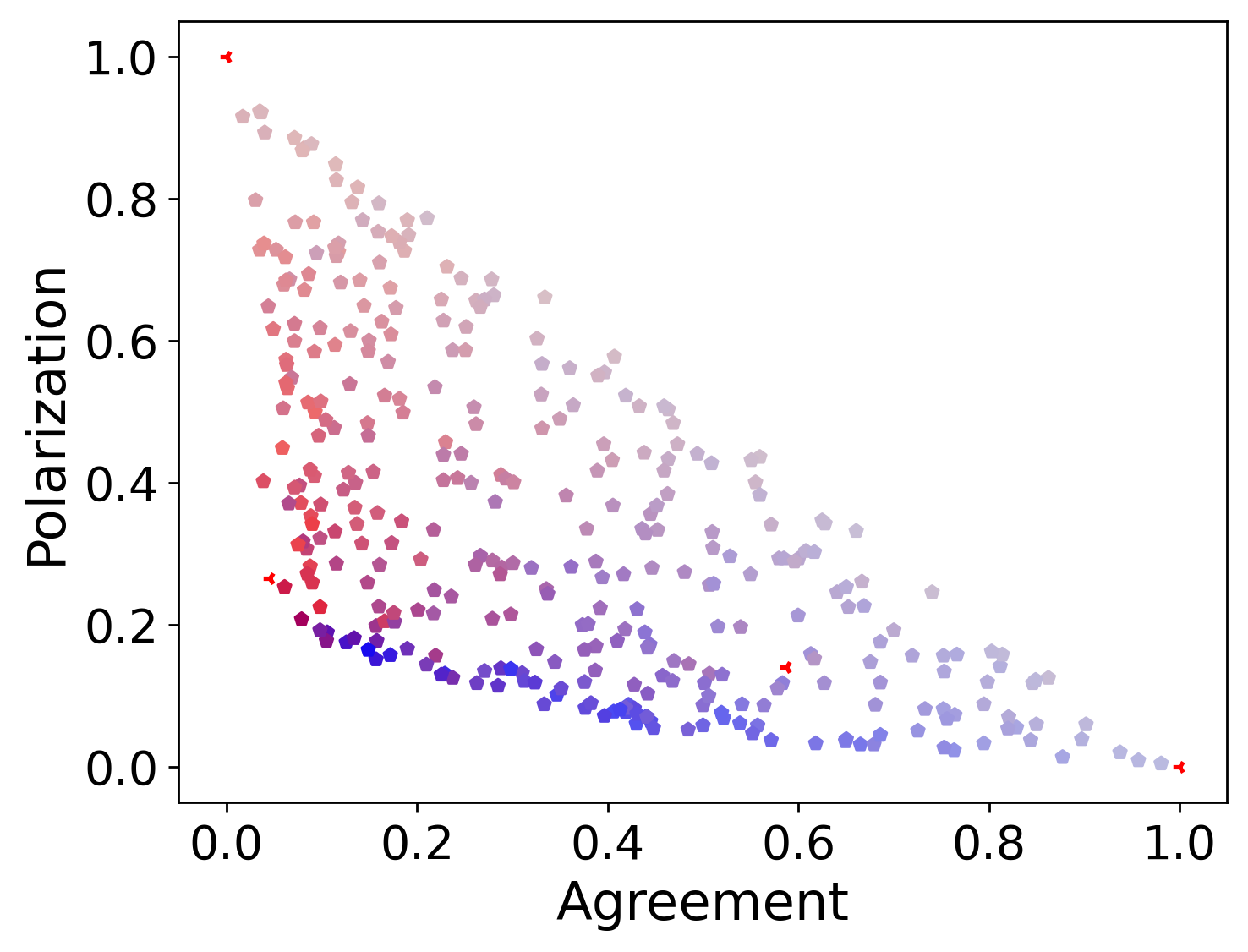}
     \end{subfigure}
     \begin{subfigure}[t]{0.222\textwidth}
         \centering
         \includegraphics[width=\textwidth]{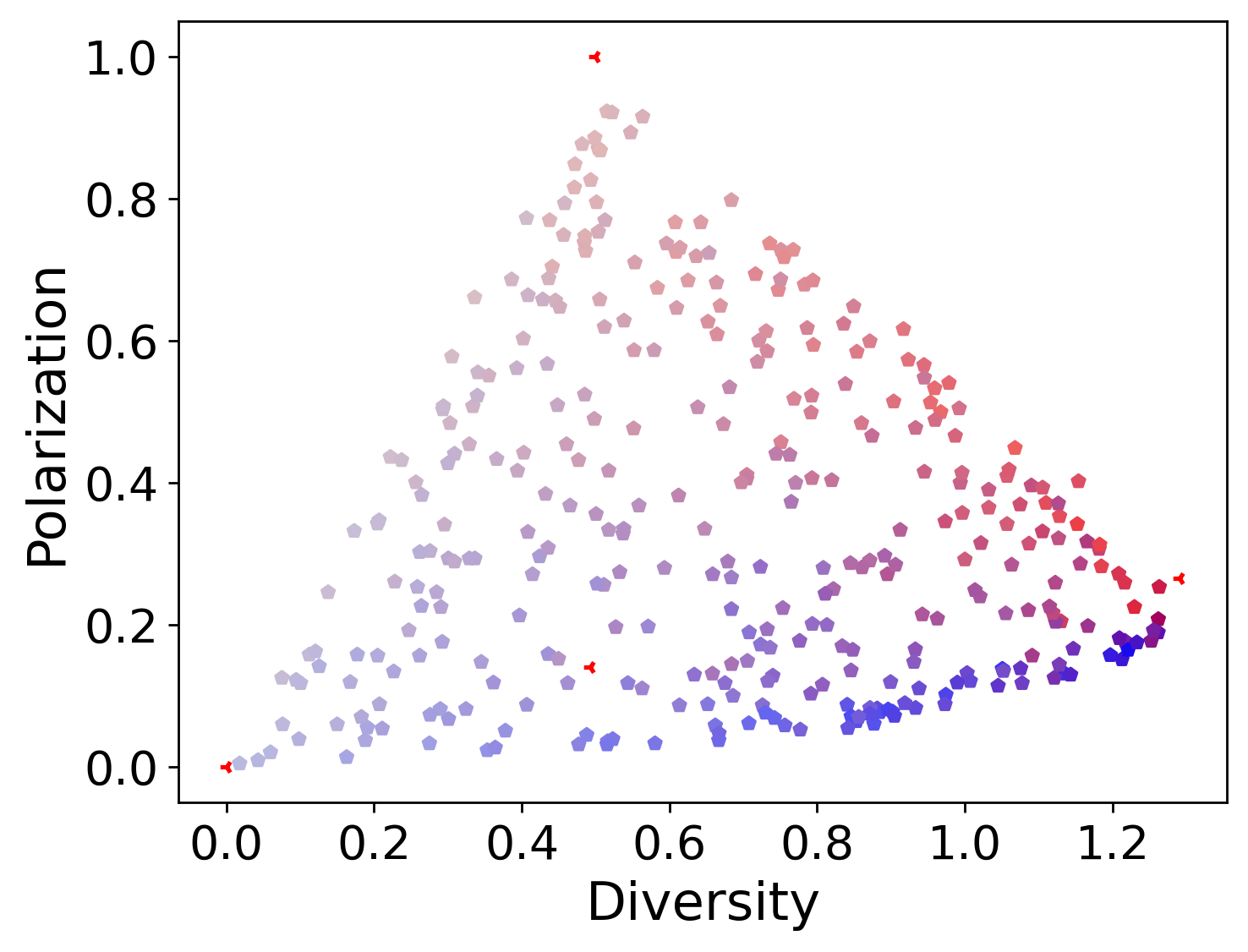}
     \end{subfigure}
\end{figure*}
\begin{figure*}[t]\ContinuedFloat
     \centering
     \begin{subfigure}[t]{0.177\textwidth}
         \centering
         \includegraphics[width=\textwidth]{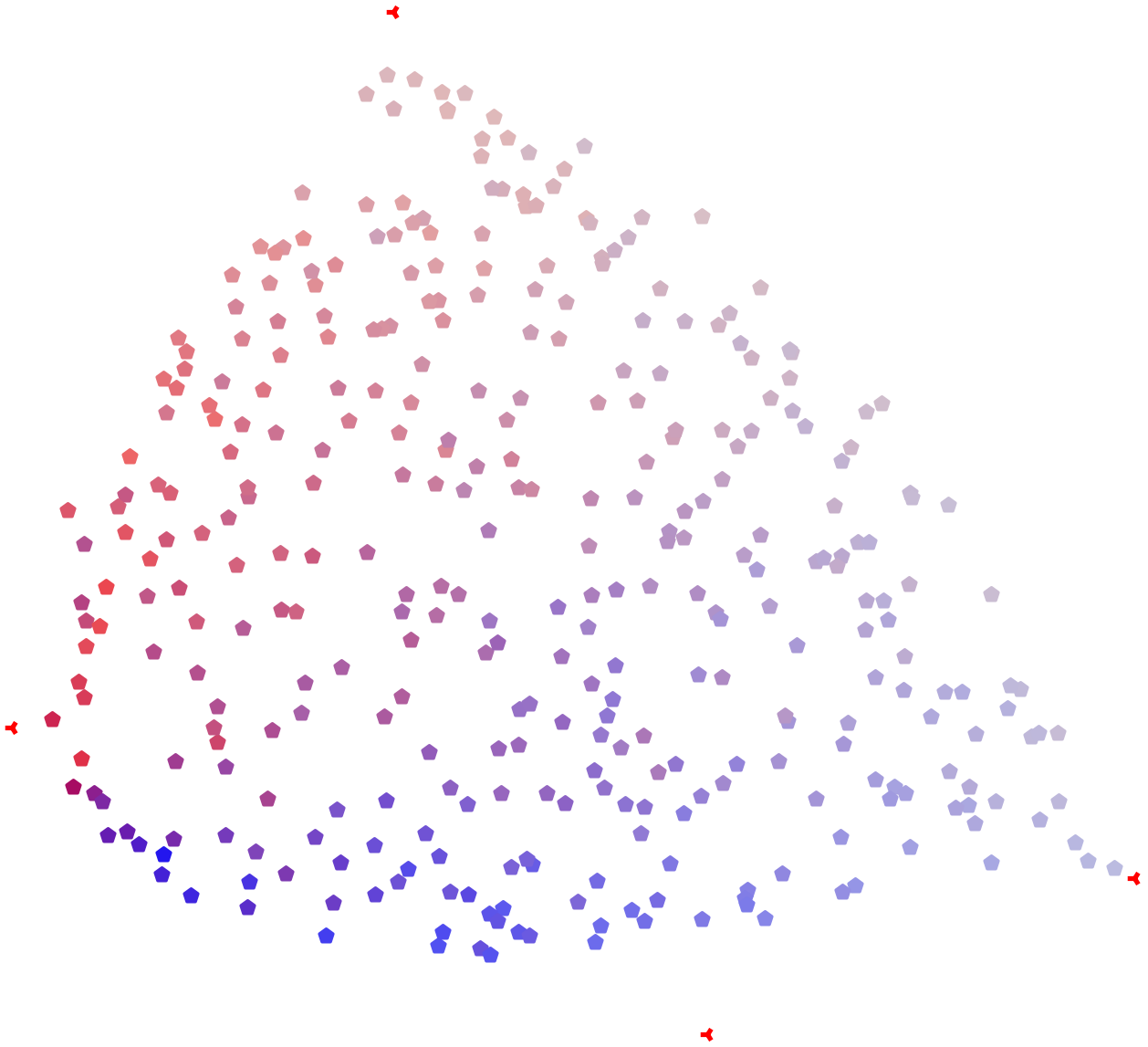}
     \end{subfigure}
     \begin{subfigure}[t]{0.222\textwidth}
         \centering
         \includegraphics[width=\textwidth]{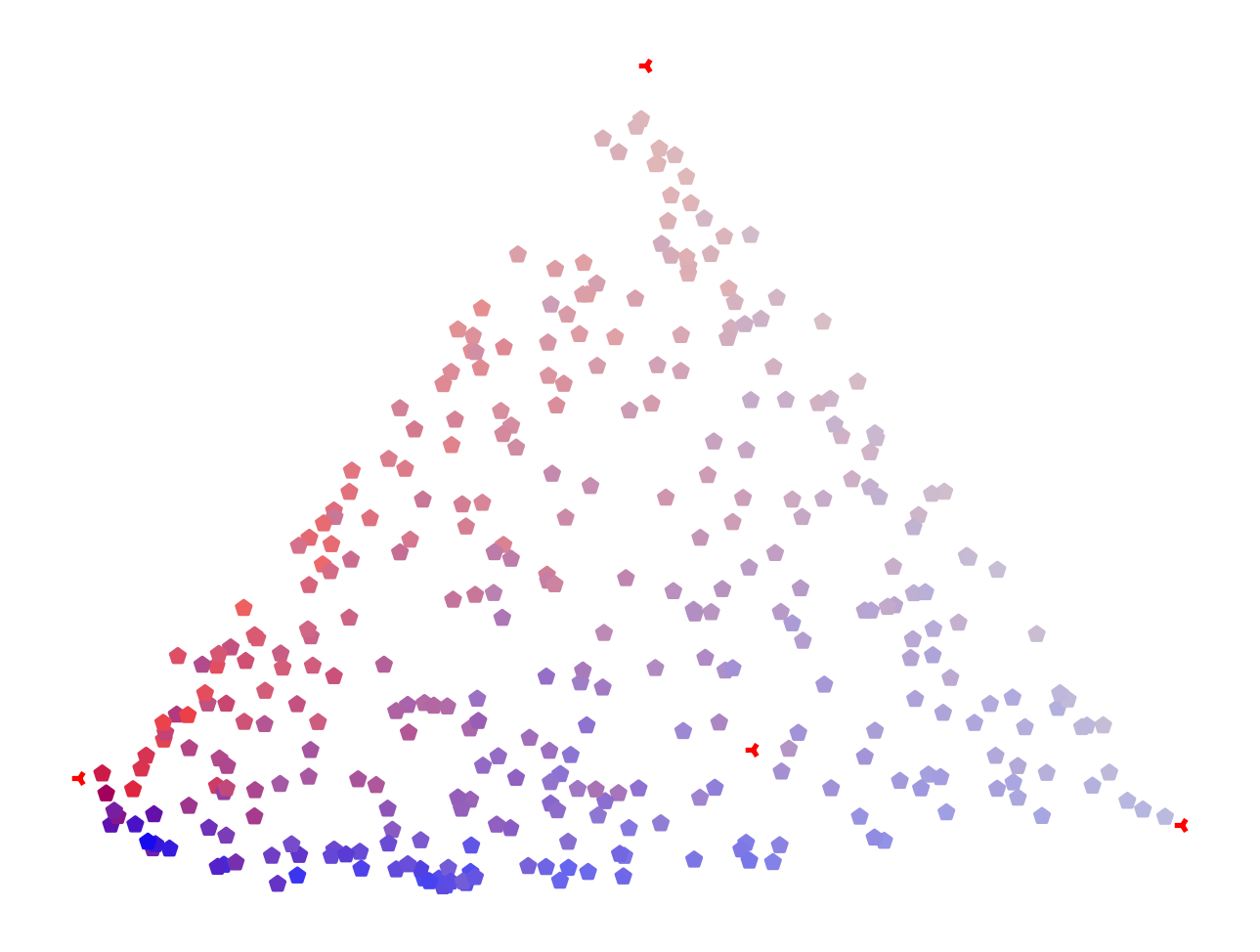}
     \end{subfigure}
     \begin{subfigure}[t]{0.222\textwidth}
         \centering
         \includegraphics[width=\textwidth]{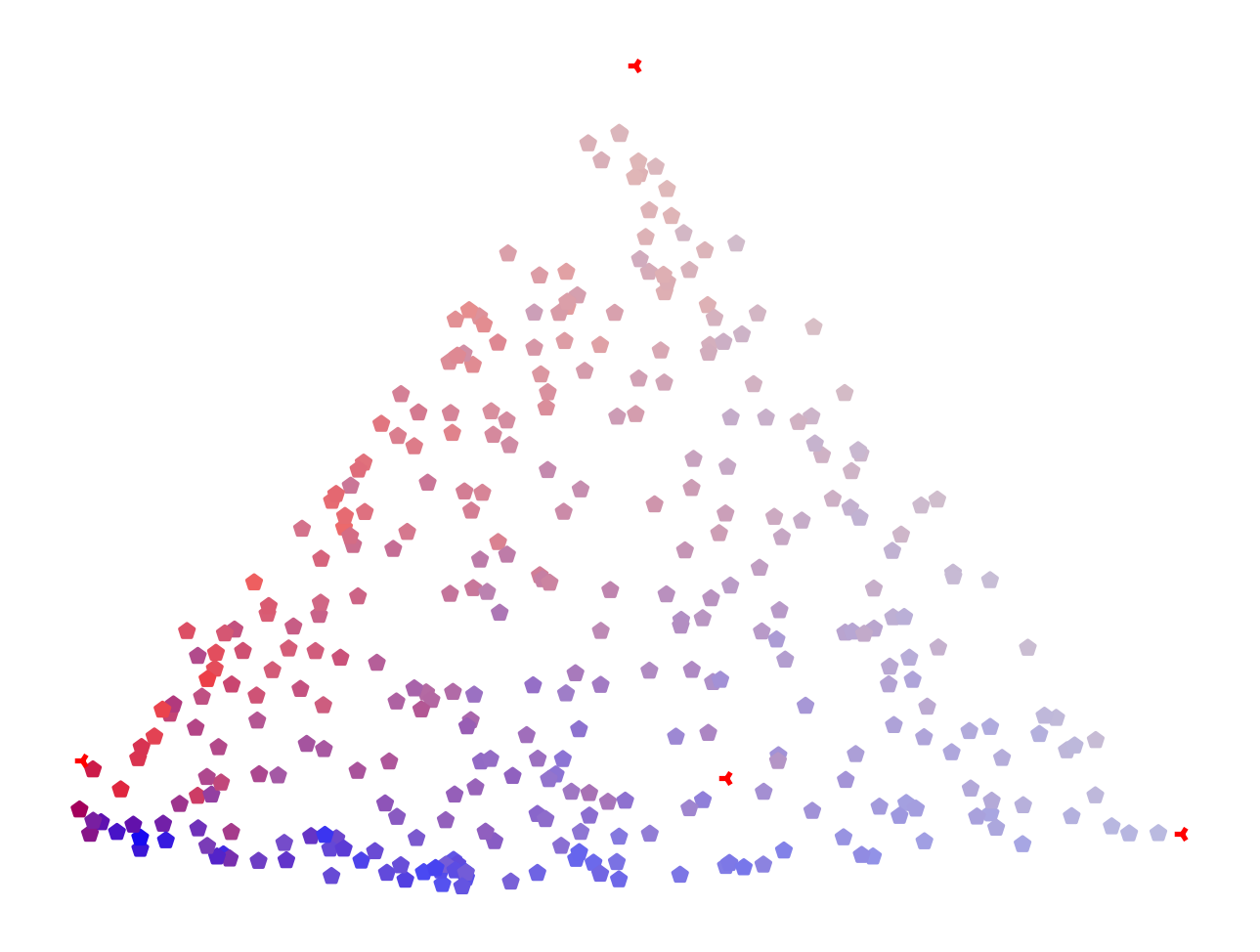}
     \end{subfigure}
     \begin{subfigure}[t]{0.222\textwidth}
         \centering
         \includegraphics[width=\textwidth]{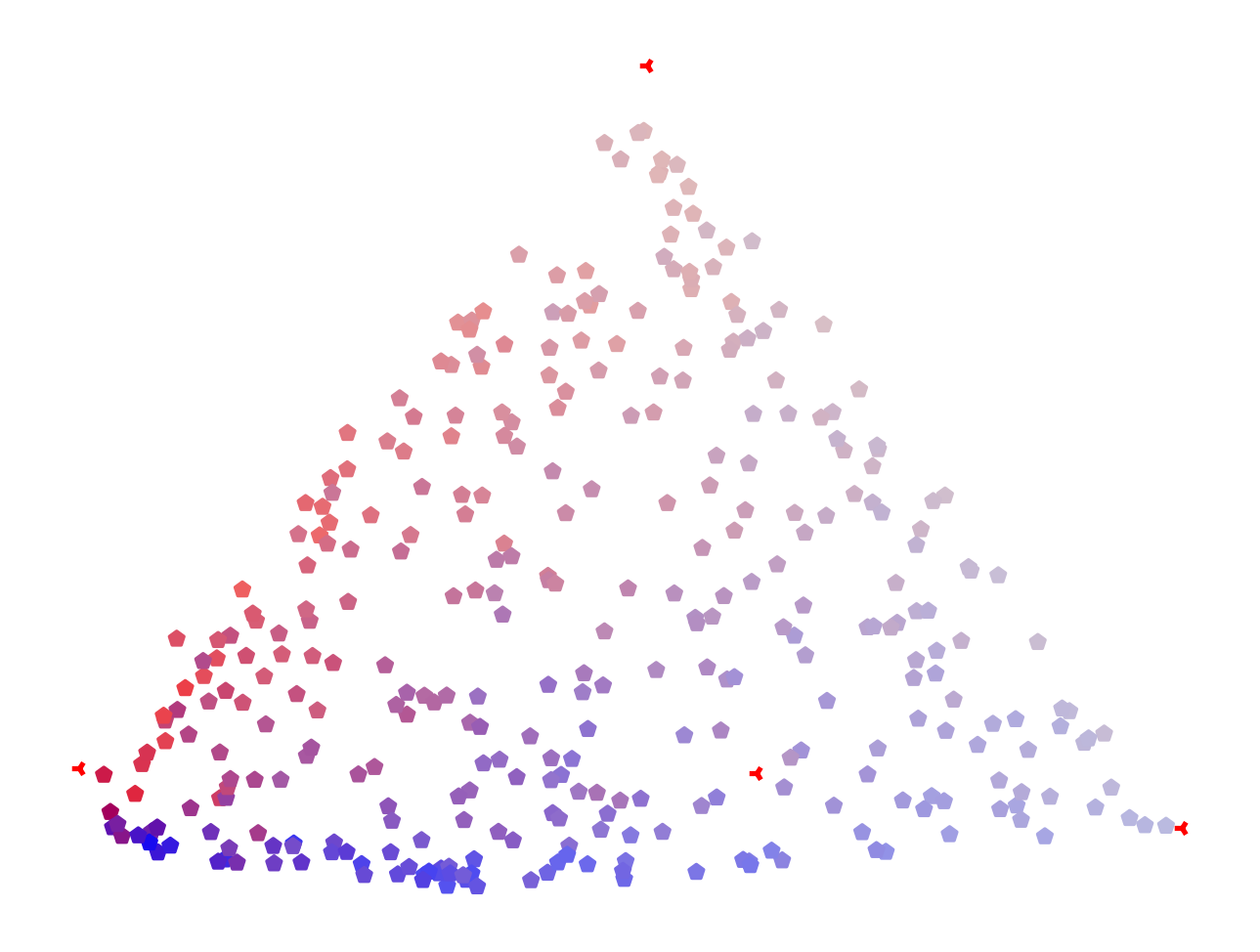}
     \end{subfigure}
    \caption{For each dataset, the first row presents the plots where the position of each dot corresponds to the value of our indices for elections in the dataset.
    In the second row, under each plot, we present its \emph{affine transformation} obtained in a following way: First, we rotate the map in such a way that \ID and \AN form a horizontal line (with \ID on the left hand side). If $\appUN$ is below this line, we take a symmetric reflection with respect to \ID--\AN line.
    Next, we take the dot furthest from \ID--\AN line, $x$, and scale the height of the image, so that the distance from $x$ to \ID--\AN line is approximately 0.87 times the distance from \ID to \AN (the height of the equilateral triangle).
    Then, we make a \emph{shear mapping} to make sure that $x$ is in equal distance to \ID and \AN (i.e., we move $x$ to the right or to the left, so it is in the middle,
    and every other dot we move in the same direction, but less, proportionally to its distance to \ID--\AN line).
    Finally, we rotate the picture by 120 degrees so that $x$ and \ID form a horizontal line.
    For comparison, in the first column we present corresponding maps of elections from 
    Figs.~\ref{fig:swap-map:standard},
    \ref{fig:swap-map:extended}, and~\ref{fig:swap-map:mallows}.
    }
    \label{fig:plots}
\end{figure*}
\clearpage

\begin{figure*}[t]
     \centering
     \begin{subfigure}[t]{0.03\textwidth}
         \centering
         \includegraphics[width=\textwidth]{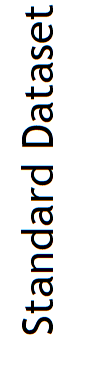}
     \end{subfigure}
     \hfill
     \begin{subfigure}[t]{0.18\textwidth}
         \centering
         \includegraphics[width=\textwidth]{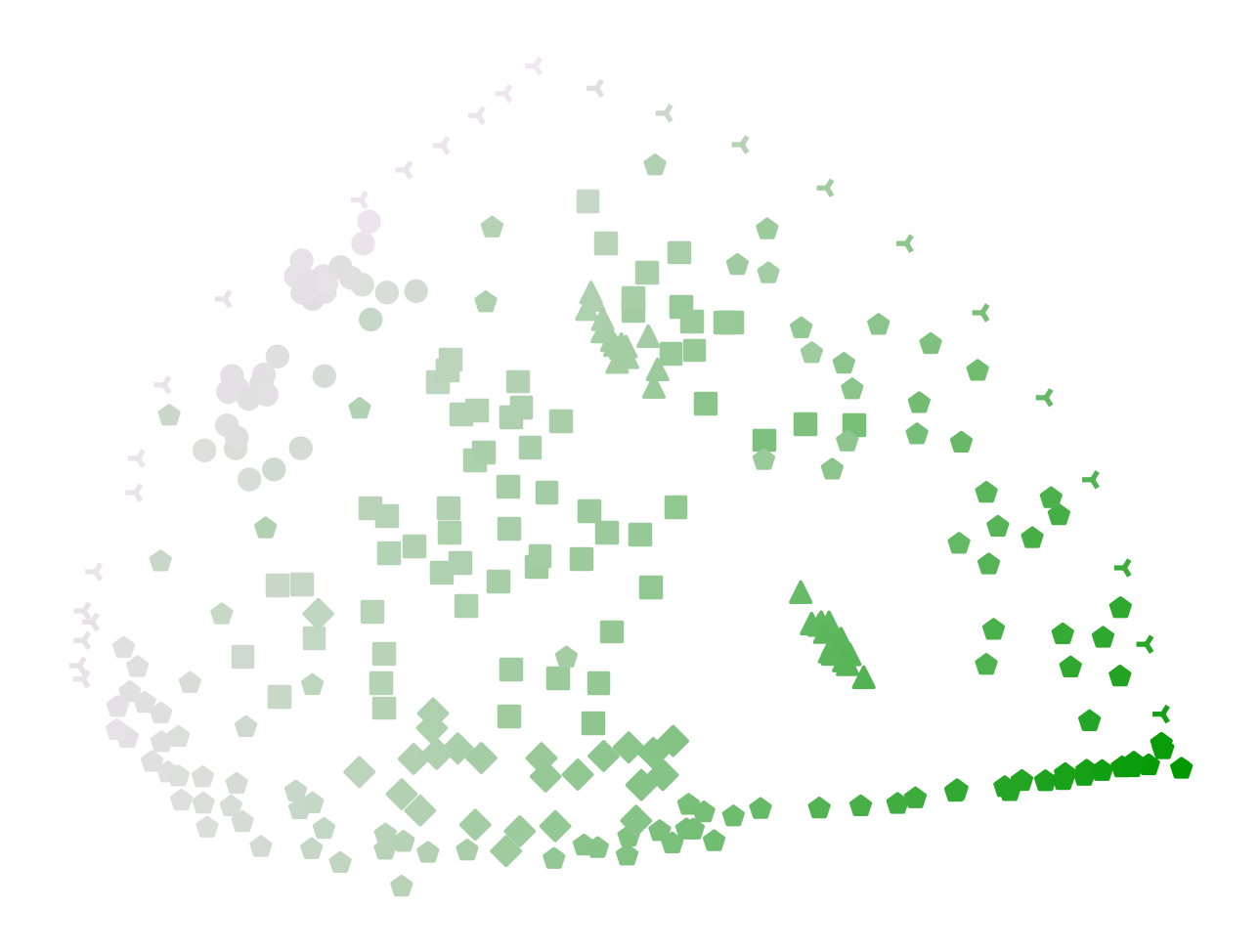}
     \end{subfigure}
     \hfill
     \begin{subfigure}[t]{0.18\textwidth}
         \centering
         \includegraphics[width=\textwidth]{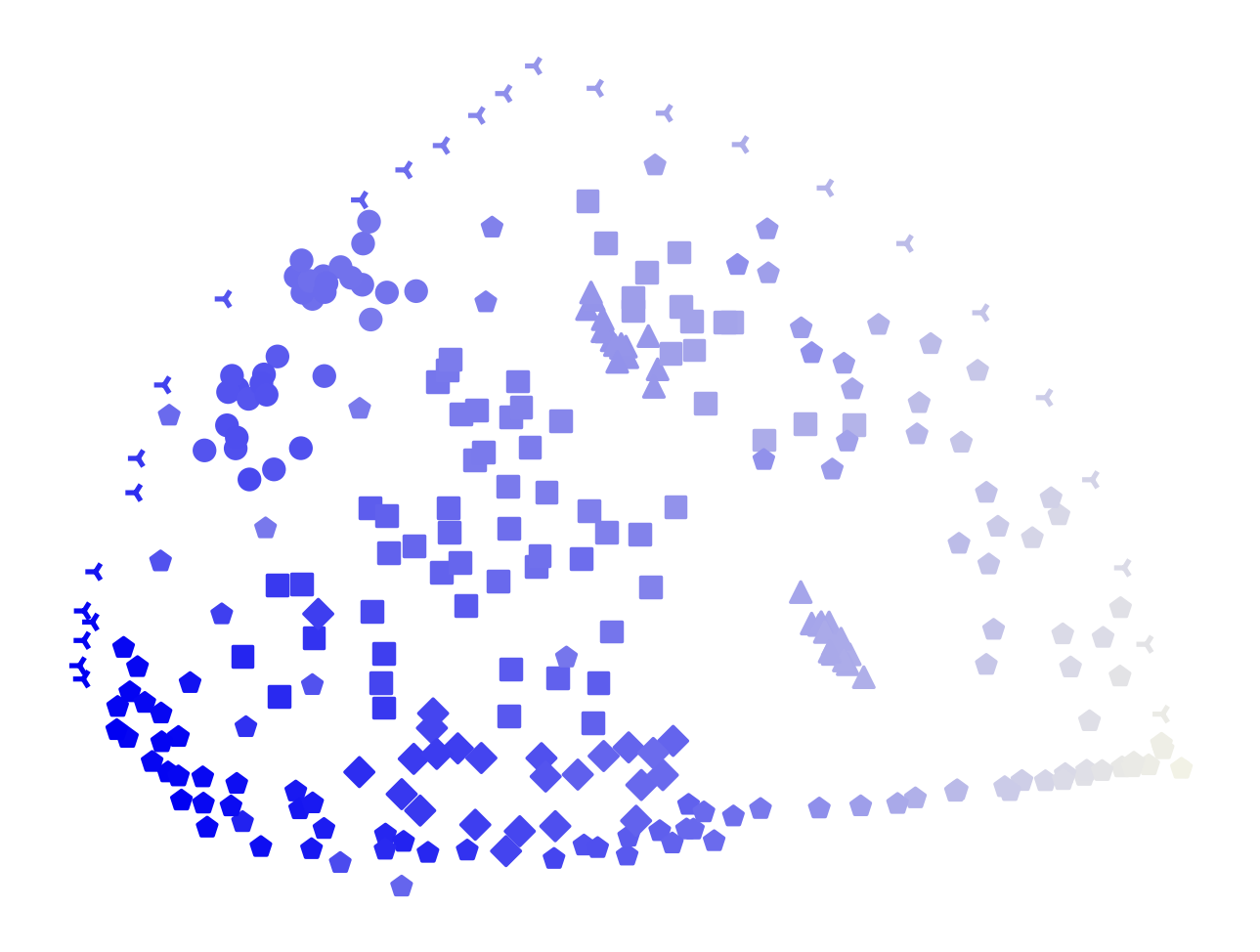}
     \end{subfigure}
     \hfill
     \begin{subfigure}[t]{0.18\textwidth}
         \centering
         \includegraphics[width=\textwidth]{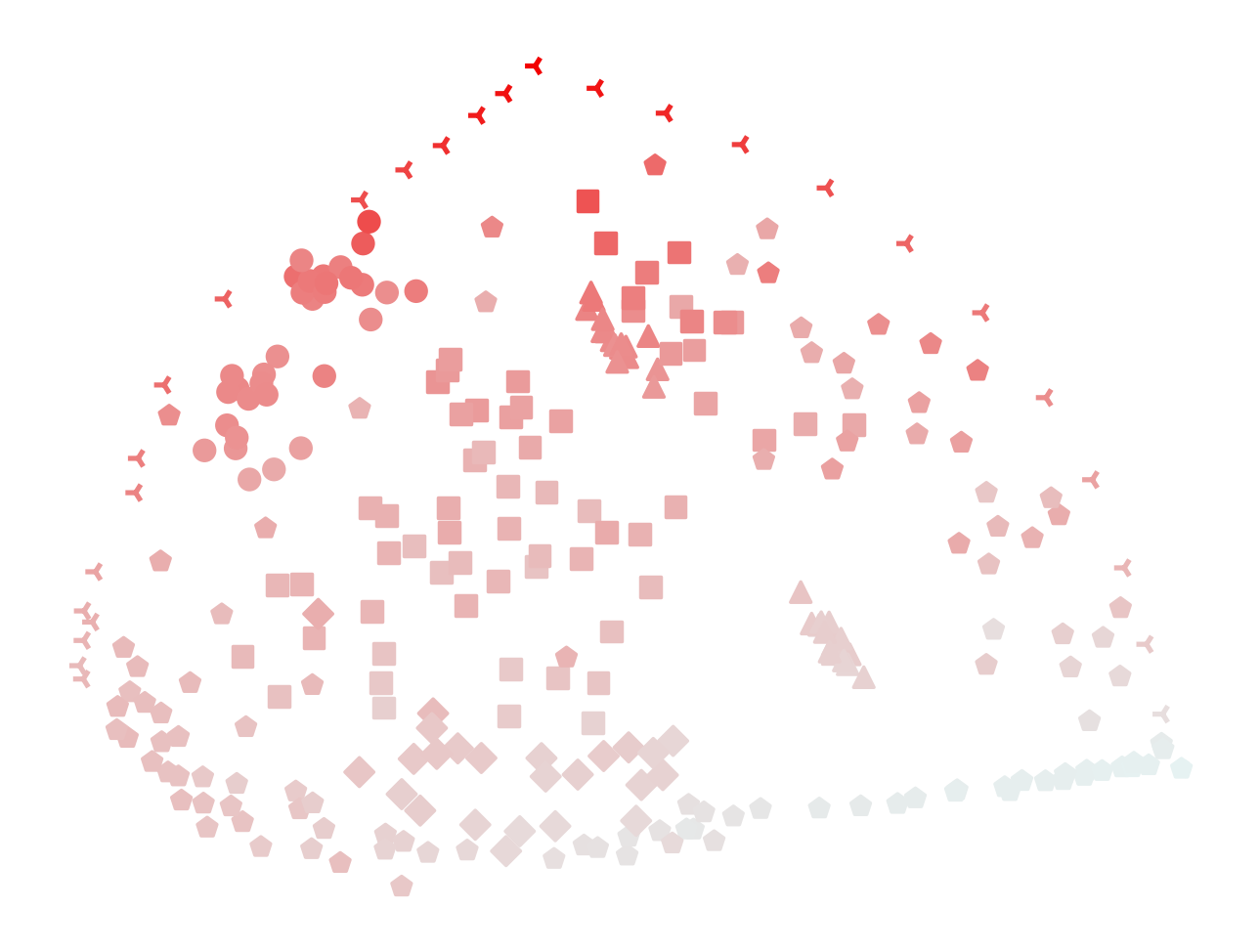}
    \end{subfigure}
    \hfill
    \begin{subfigure}[t]{0.18\textwidth}
     \centering
         \includegraphics[width=\textwidth]{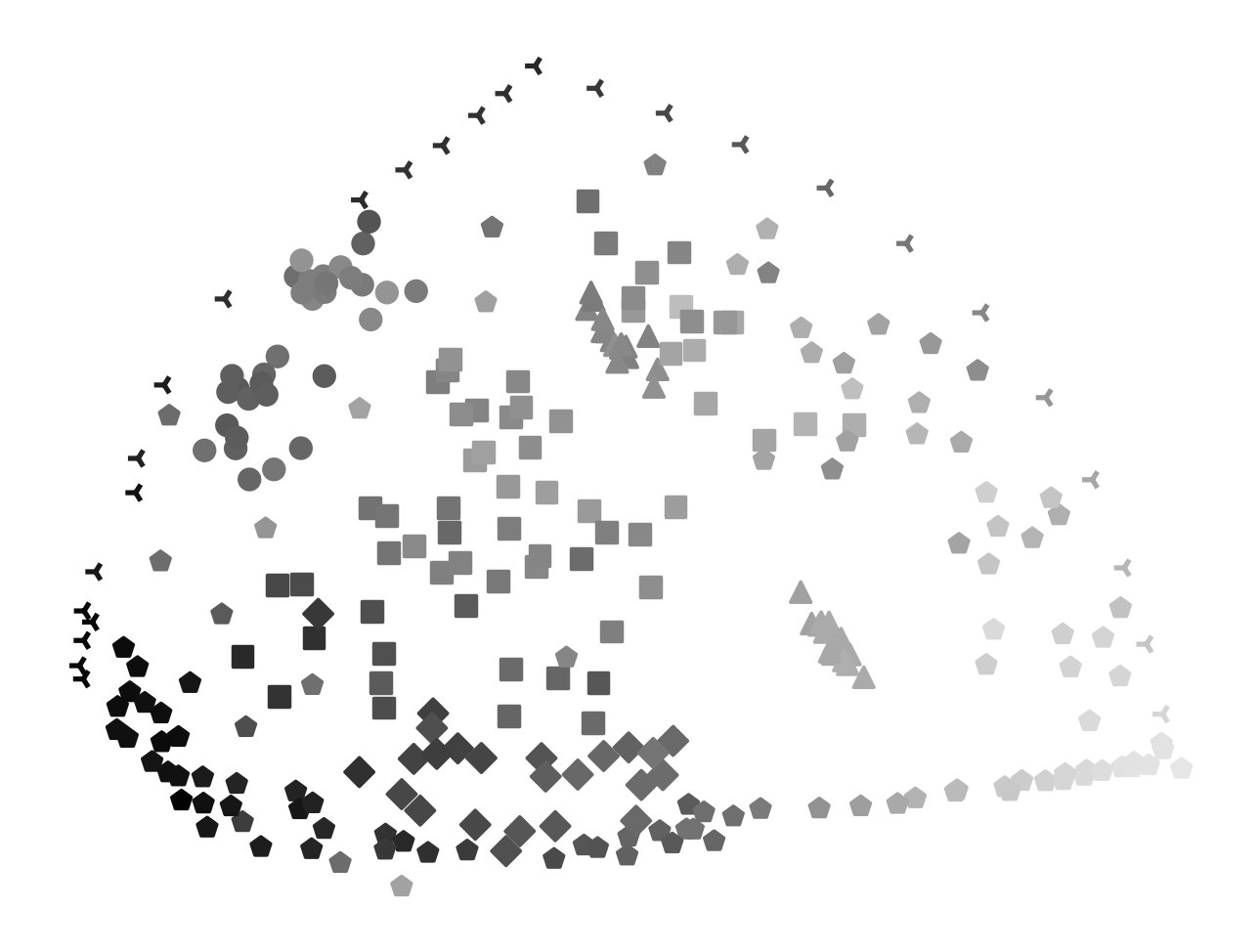}
    \end{subfigure}
    \hfill
     \begin{subfigure}[t]{0.18\textwidth}
     \centering
         \includegraphics[width=\textwidth]{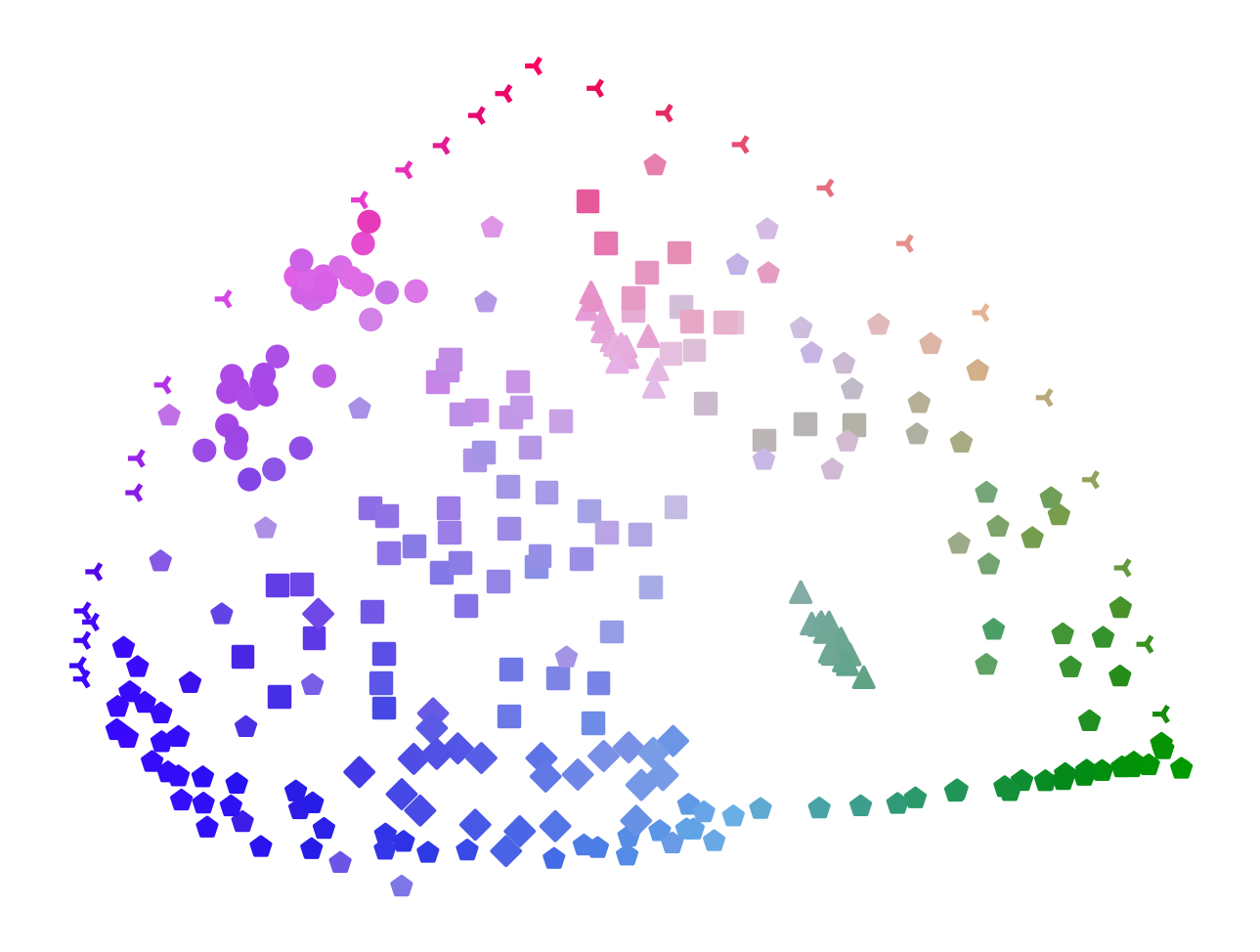}
    \end{subfigure}
\end{figure*}
\begin{figure*}[t]\ContinuedFloat
     \centering
     \begin{subfigure}[t]{0.03\textwidth}
         \centering
         \includegraphics[width=\textwidth]{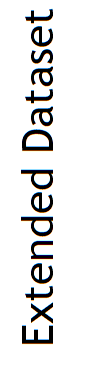}
     \end{subfigure}
     \hfill
     \begin{subfigure}[t]{0.18\textwidth}
         \centering
         \includegraphics[width=\textwidth]{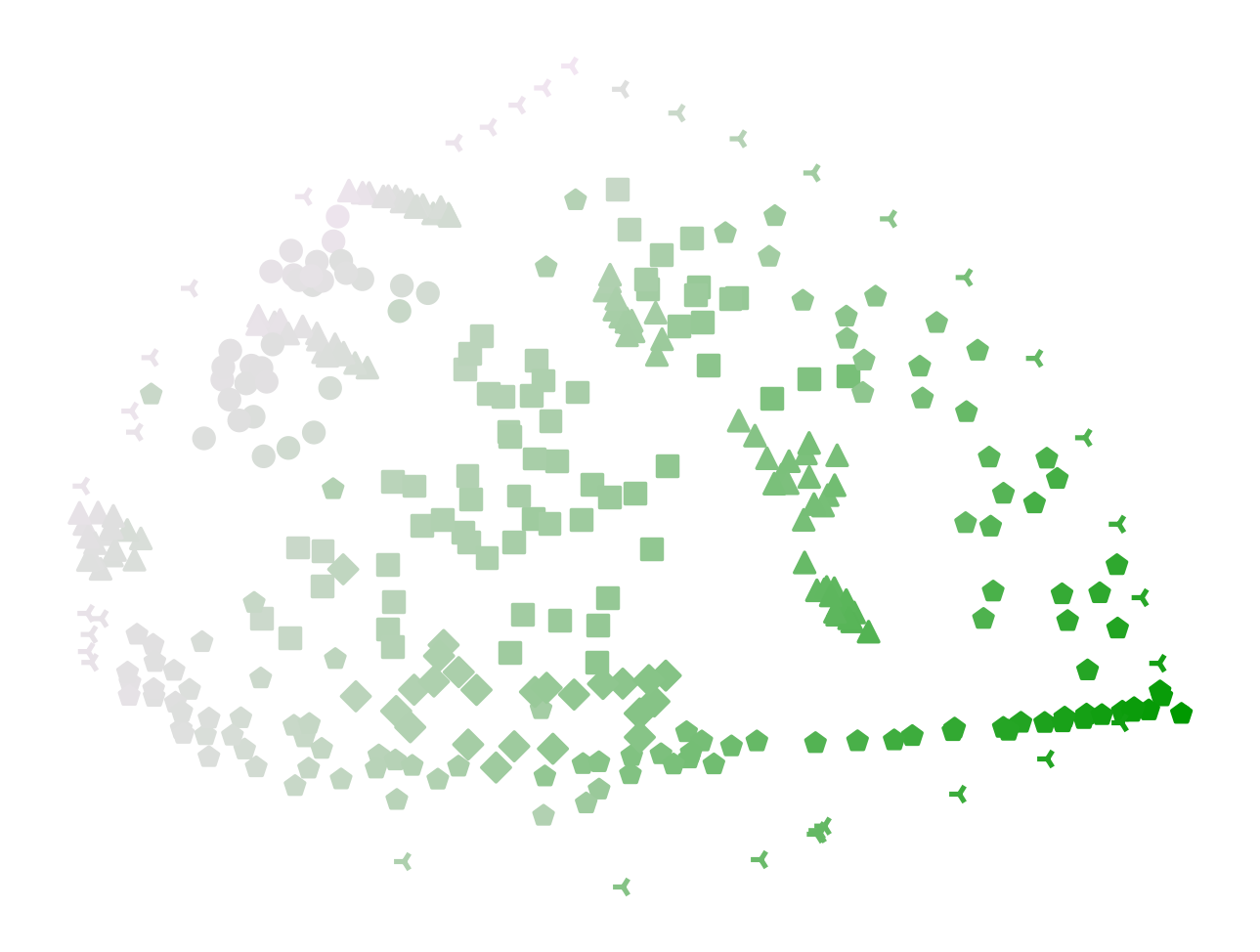}
     \end{subfigure}
     \hfill
     \begin{subfigure}[t]{0.18\textwidth}
         \centering
         \includegraphics[width=\textwidth]{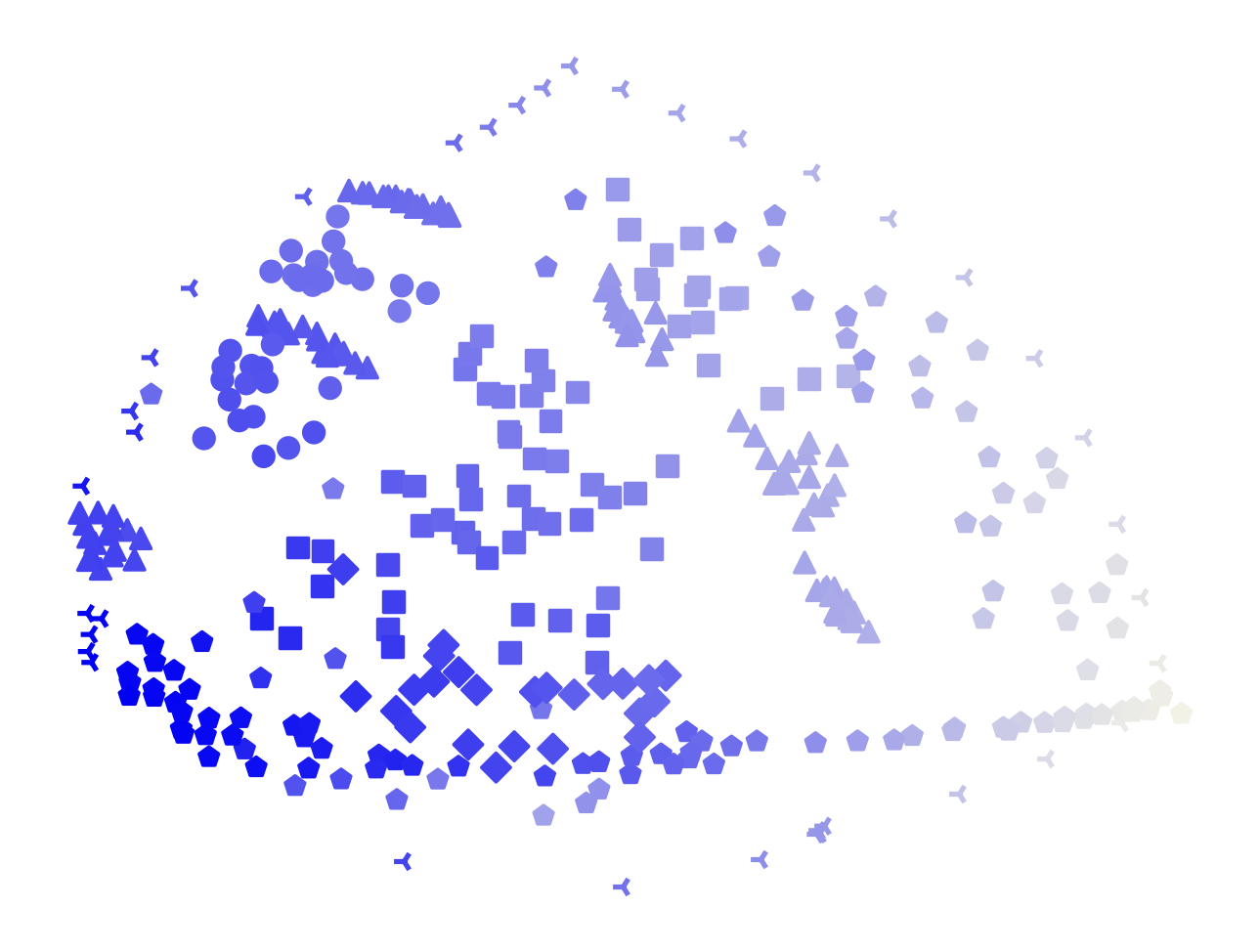}
     \end{subfigure}
     \hfill
     \begin{subfigure}[t]{0.18\textwidth}
         \centering
         \includegraphics[width=\textwidth]{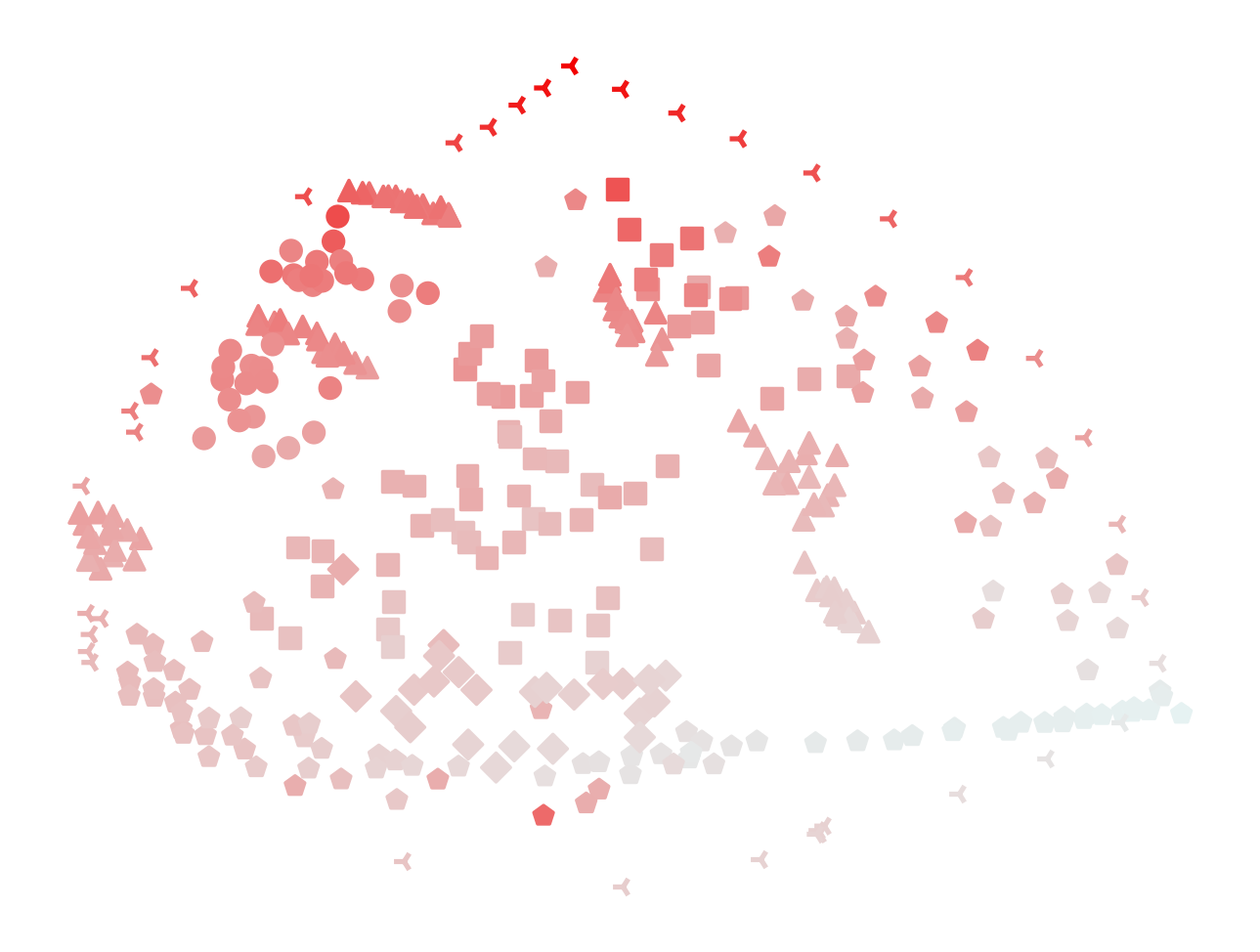}
    \end{subfigure}
    \hfill
    \begin{subfigure}[t]{0.18\textwidth}
     \centering
         \includegraphics[width=\textwidth]{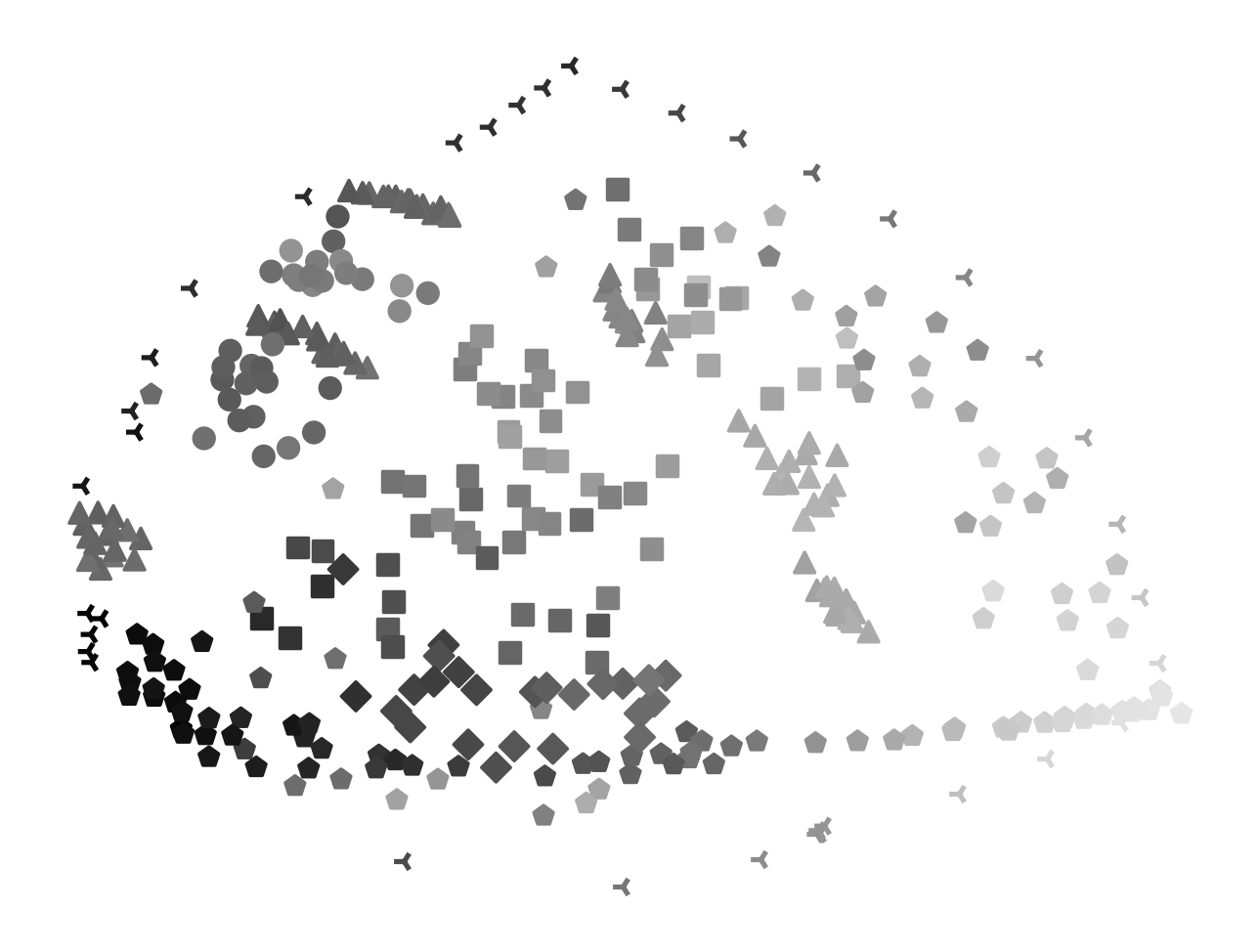}
    \end{subfigure}
    \hfill
     \begin{subfigure}[t]{0.18\textwidth}
     \centering
         \includegraphics[width=\textwidth]{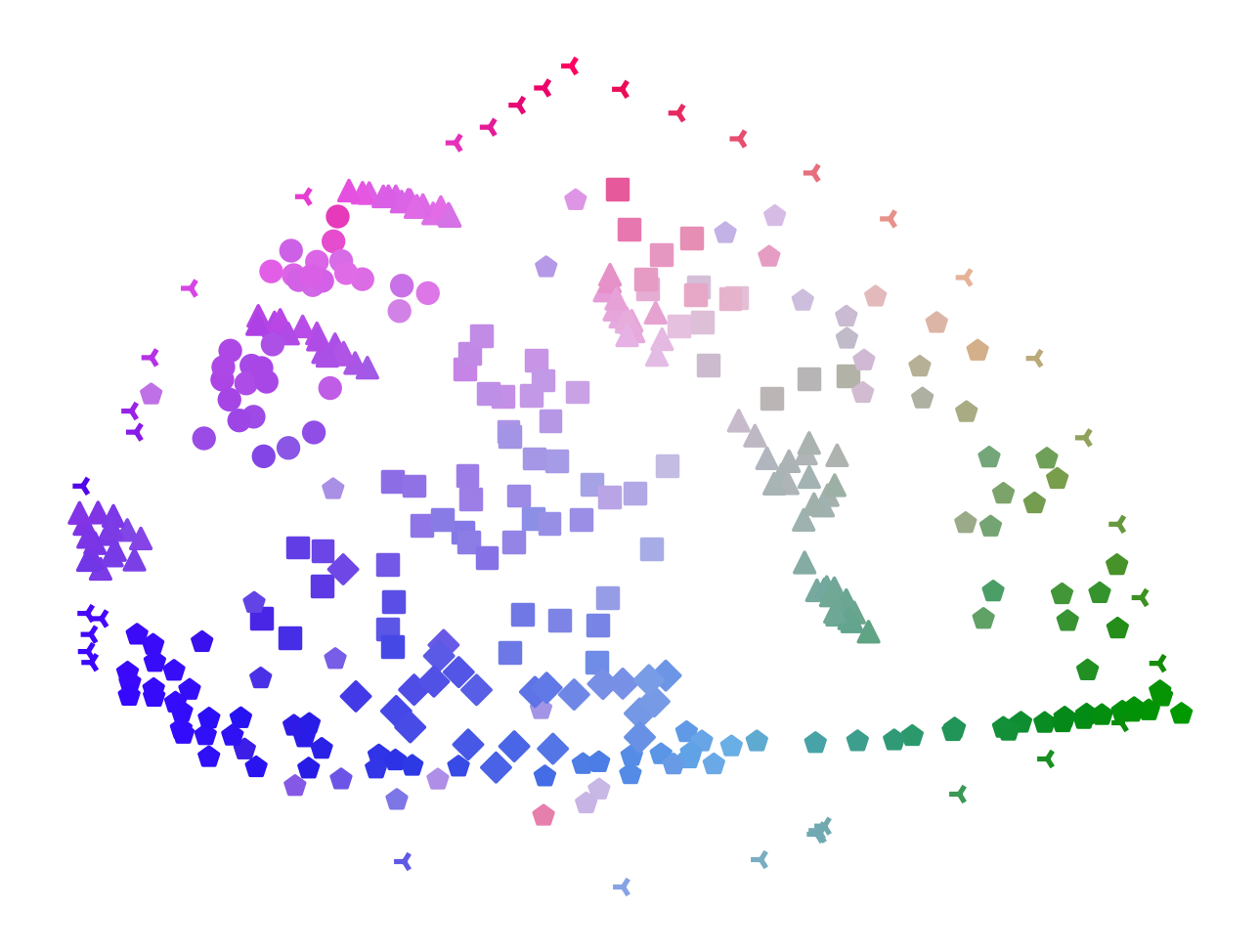}
    \end{subfigure}
\end{figure*}
\begin{figure*}[t]\ContinuedFloat
     \centering
     \begin{subfigure}[t]{0.03\textwidth}
         \centering
         \includegraphics[width=\textwidth]{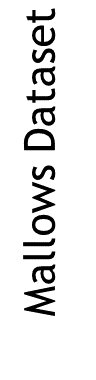}
     \end{subfigure}
     \hfill
     \begin{subfigure}[t]{0.18\textwidth}
         \centering
         \includegraphics[width=\textwidth]{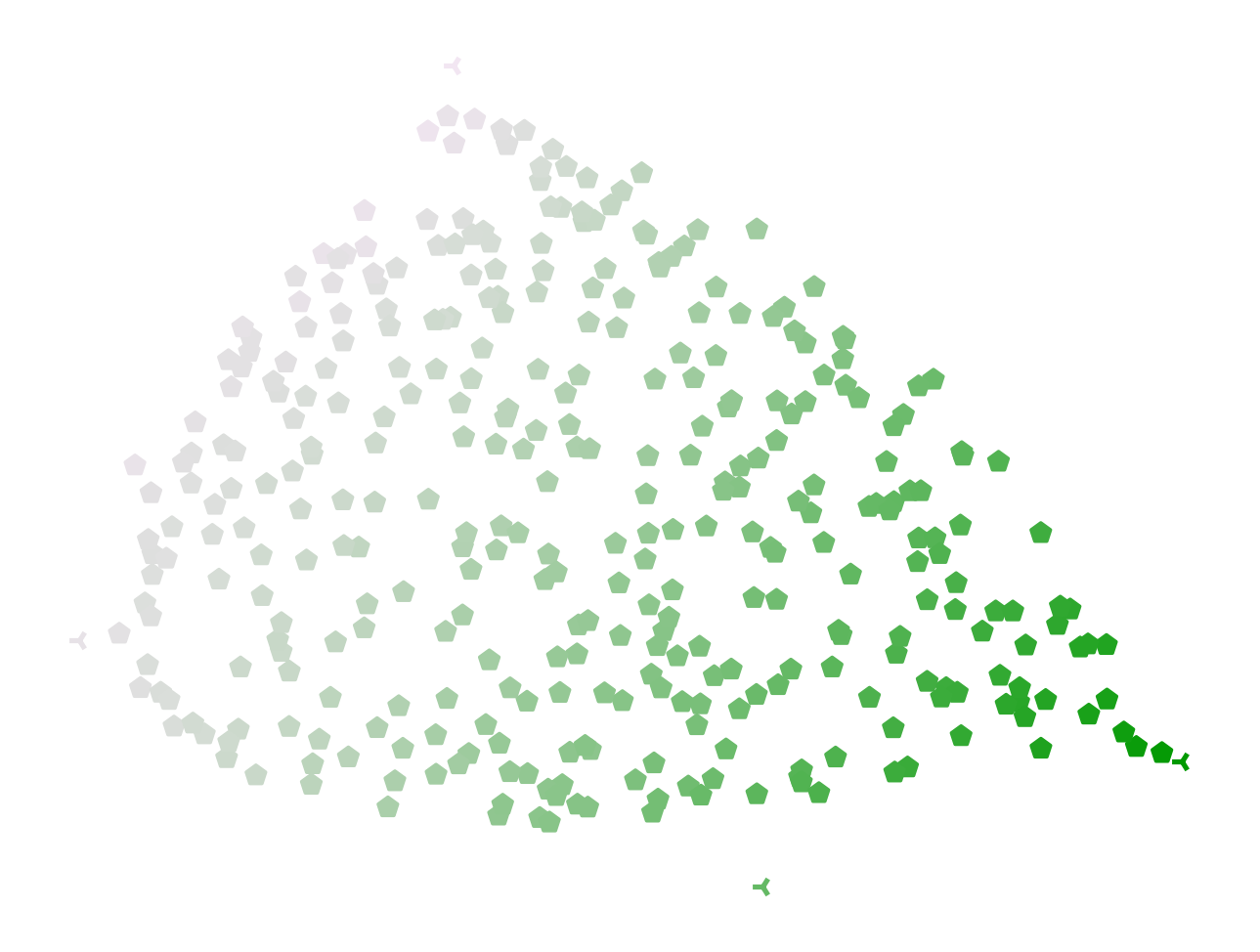}
     \end{subfigure}
     \hfill
     \begin{subfigure}[t]{0.18\textwidth}
         \centering
         \includegraphics[width=\textwidth]{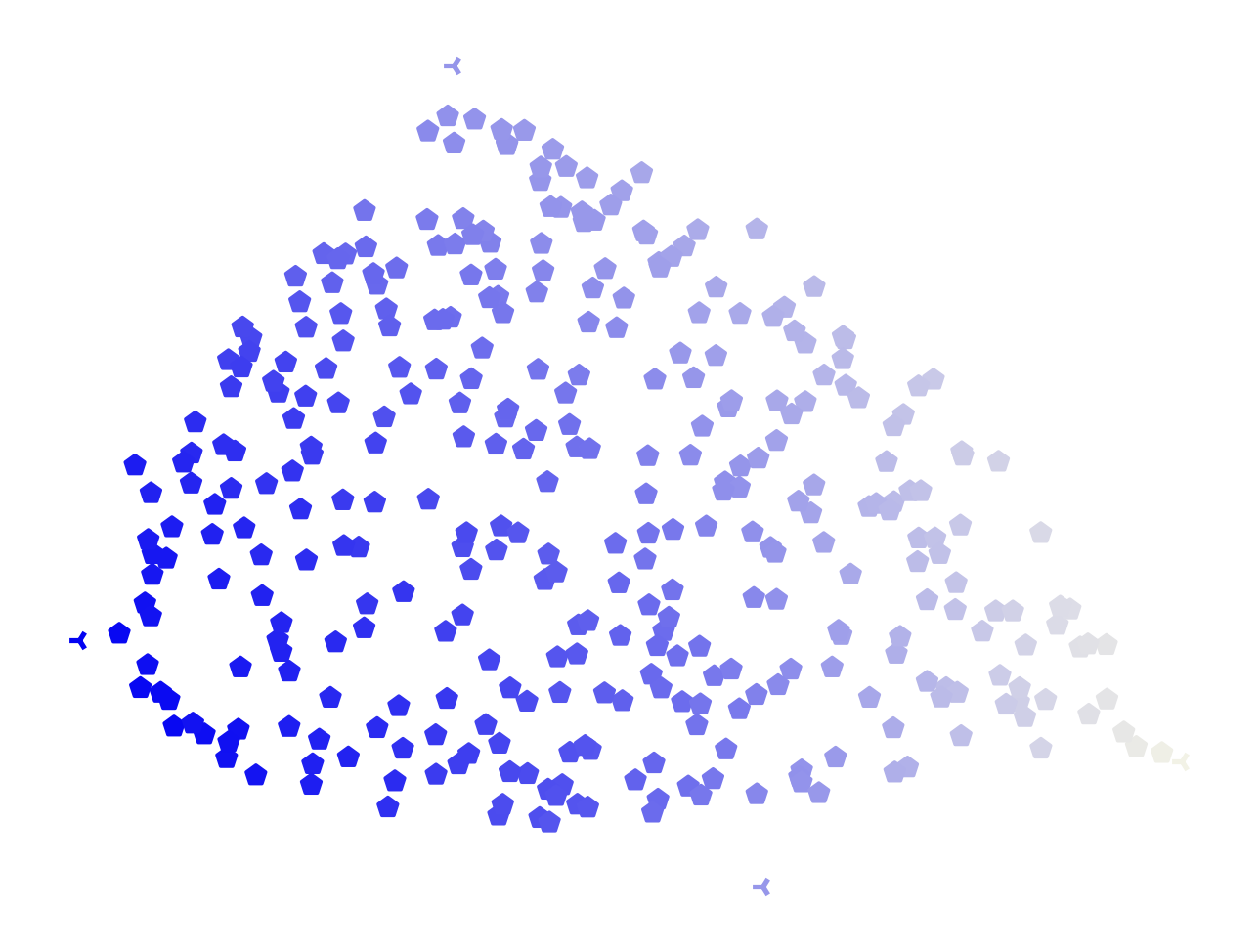}
     \end{subfigure}
     \hfill
     \begin{subfigure}[t]{0.18\textwidth}
         \centering
         \includegraphics[width=\textwidth]{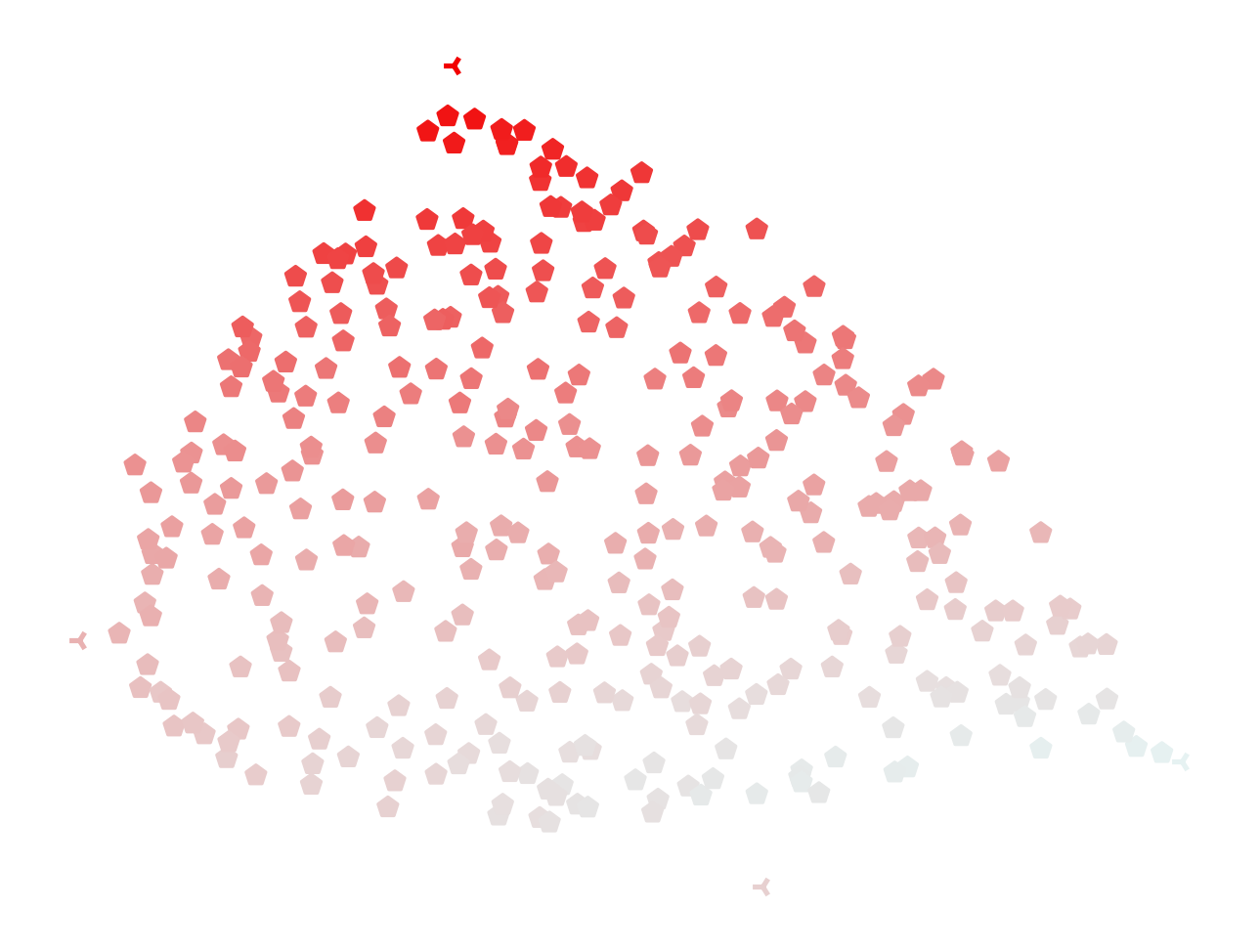}
    \end{subfigure}
    \hfill
    \begin{subfigure}[t]{0.18\textwidth}
     \centering
         \includegraphics[width=\textwidth]{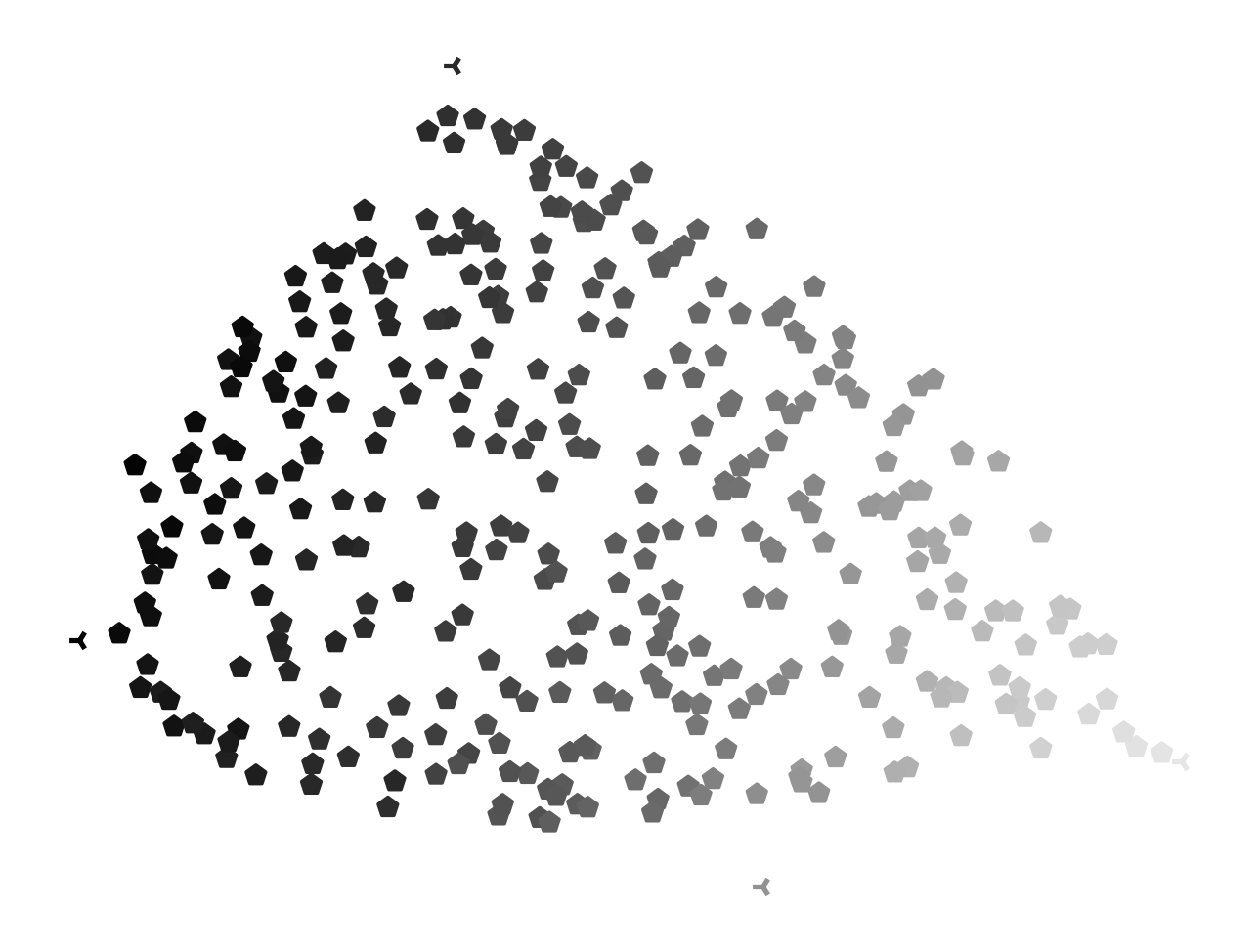}
    \end{subfigure}
    \hfill
     \begin{subfigure}[t]{0.18\textwidth}
     \centering
         \includegraphics[width=\textwidth]{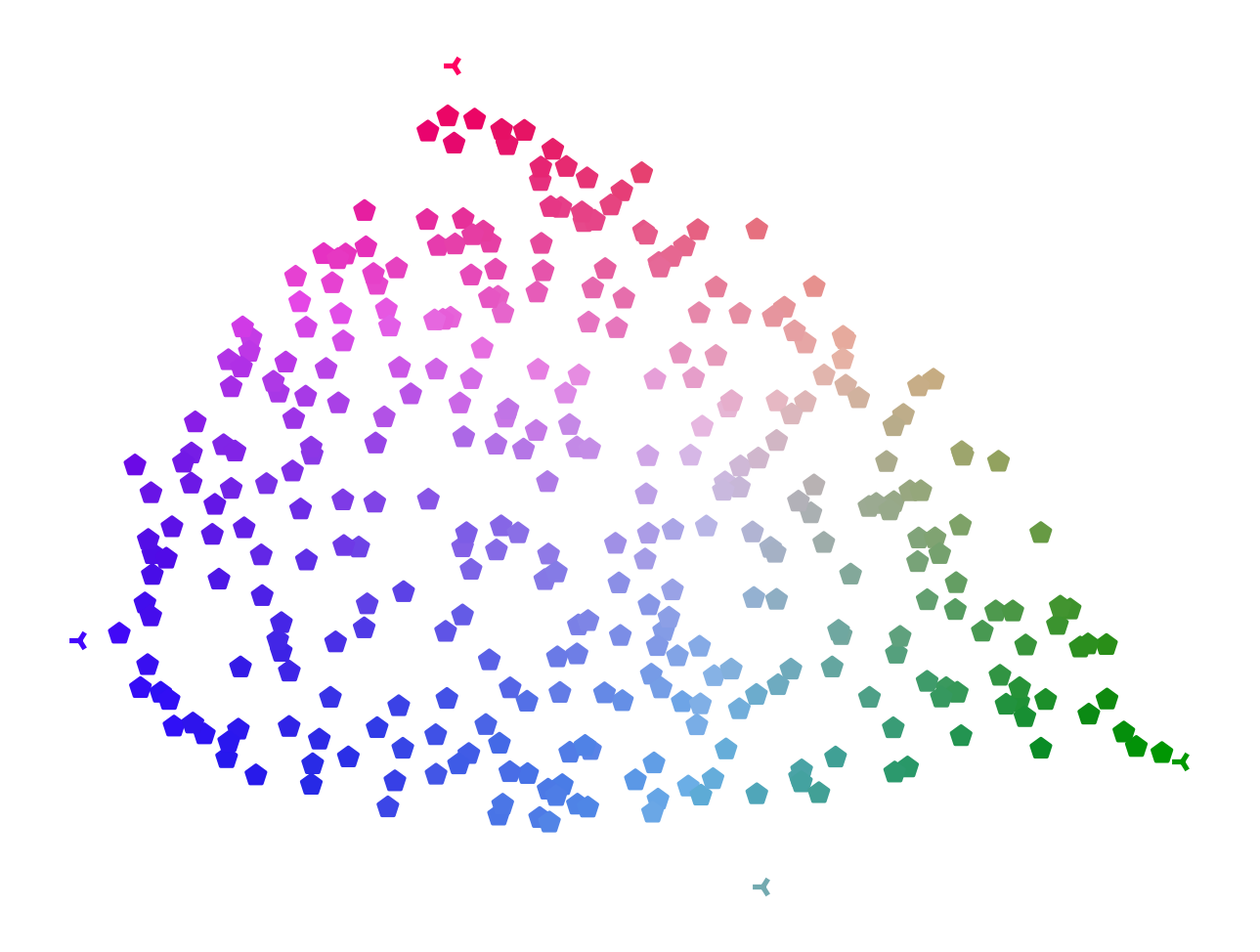}
    \end{subfigure}
\end{figure*}
\begin{figure*}[t]\ContinuedFloat
     \centering
     \begin{subfigure}[t]{0.03\textwidth}
         \centering
         \includegraphics[width=\textwidth]{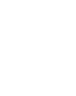}
     \end{subfigure}
     \hfill
     \begin{subfigure}[t]{0.18\textwidth}
         \centering
         \includegraphics[width=\textwidth]{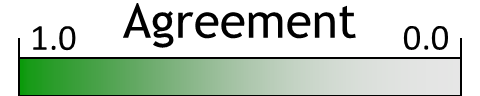}
     \end{subfigure}
     \hfill
     \begin{subfigure}[t]{0.18\textwidth}
         \centering
         \includegraphics[width=\textwidth]{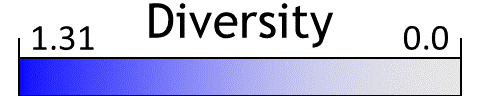}
     \end{subfigure}
     \hfill
     \begin{subfigure}[t]{0.18\textwidth}
         \centering
         \includegraphics[width=\textwidth]{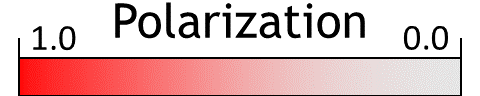}
    \end{subfigure}
    \hfill
    \begin{subfigure}[t]{0.18\textwidth}
     \centering
         \includegraphics[width=\textwidth]{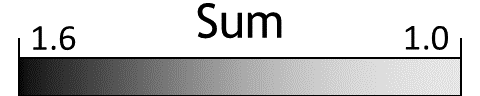}
    \end{subfigure}
    \hfill
     \begin{subfigure}[t]{0.18\textwidth}
     \centering
         \includegraphics[width=\textwidth]{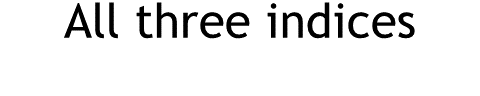}
    \end{subfigure}
    \caption{\label{fig:tenmaps}
    The maps of elections from Figs.~\ref{fig:swap-map:standard},
    \ref{fig:swap-map:extended}, and~\ref{fig:swap-map:mallows},
    where the colors of the dots denote the values of our indices.
    The rows correspond to the datasets, and the
      columns correspond to the indices. The fourth one shows the sum
      of the indices, and the fifth one a superimposition of the
      first three.}
\end{figure*}

\begin{figure*}[b]
     \centering
     \begin{subfigure}[t]{0.05\textwidth}
         \centering
         \includegraphics[width=\textwidth]{img/colored_maps/standard_label.png}
     \end{subfigure}
     \hspace{0.23cm}
     \begin{subfigure}[t]{0.266\textwidth}
         \centering
         \includegraphics[width=\textwidth]{img_new/assorted/diversity_map_8_96_Agreement-dfromID.png}
     \end{subfigure}
     \hspace{0.23cm}
     \begin{subfigure}[t]{0.266\textwidth}
         \centering
         \includegraphics[width=\textwidth]{img_new/assorted/diversity_map_8_96_Diversity-dfromUN_0.png}
     \end{subfigure}
     \hspace{0.23cm}
     \begin{subfigure}[t]{0.266\textwidth}
         \centering
         \includegraphics[width=\textwidth]{img_new/assorted/diversity_map_8_96_Polarization-dfromAN.png}
    \end{subfigure}
\end{figure*}
\begin{figure*}[b]\ContinuedFloat
     \centering
     \begin{subfigure}[t]{0.05\textwidth}
         \centering
         \includegraphics[width=\textwidth]{img/colored_maps/extended_label.png}
     \end{subfigure}
     \hspace{0.23cm}
     \begin{subfigure}[t]{0.266\textwidth}
         \centering
         \includegraphics[width=\textwidth]{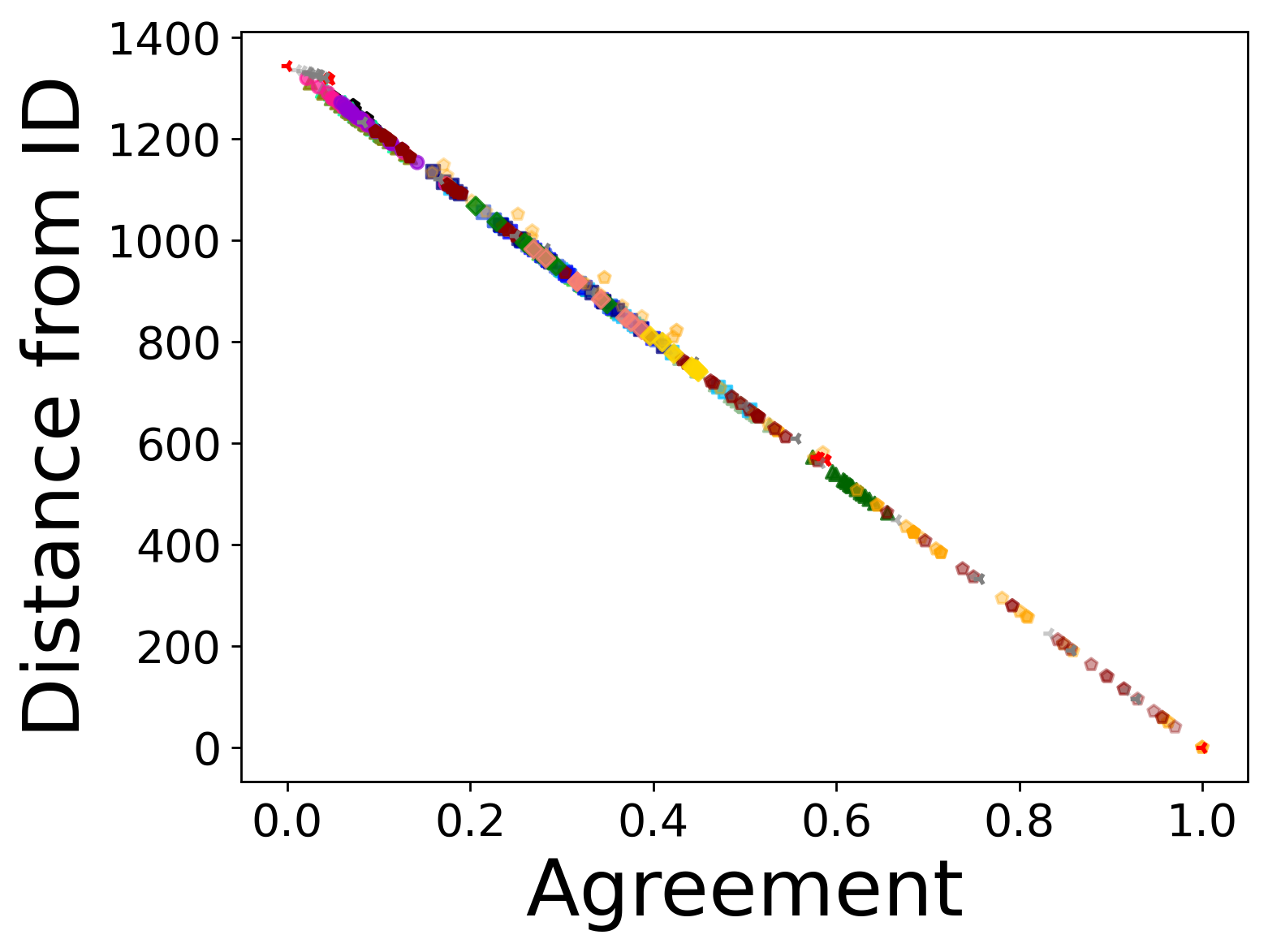}
     \end{subfigure}
     \hspace{0.23cm}
     \begin{subfigure}[t]{0.266\textwidth}
         \centering
         \includegraphics[width=\textwidth]{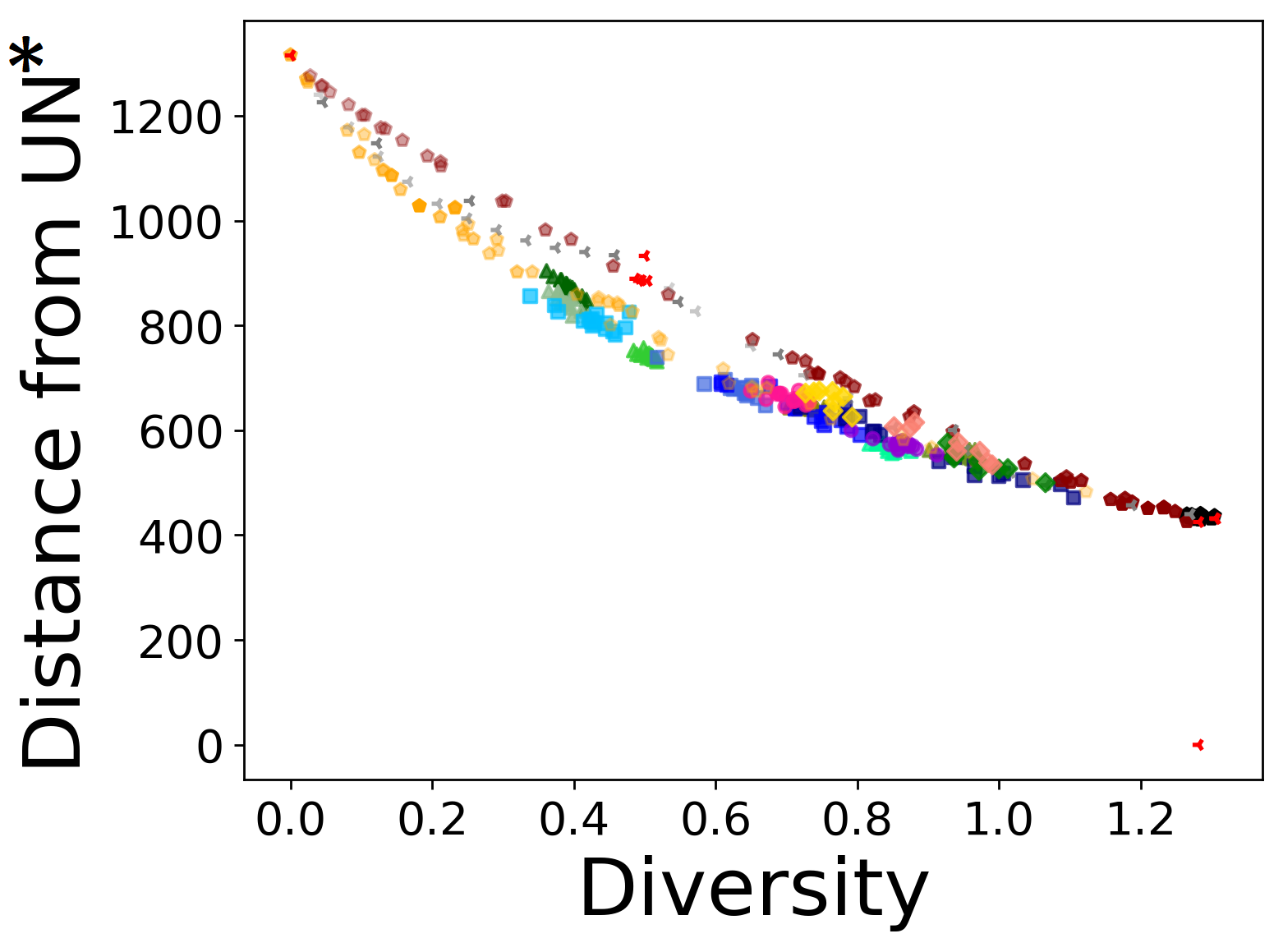}
     \end{subfigure}
     \hspace{0.23cm}
     \begin{subfigure}[t]{0.266\textwidth}
         \centering
         \includegraphics[width=\textwidth]{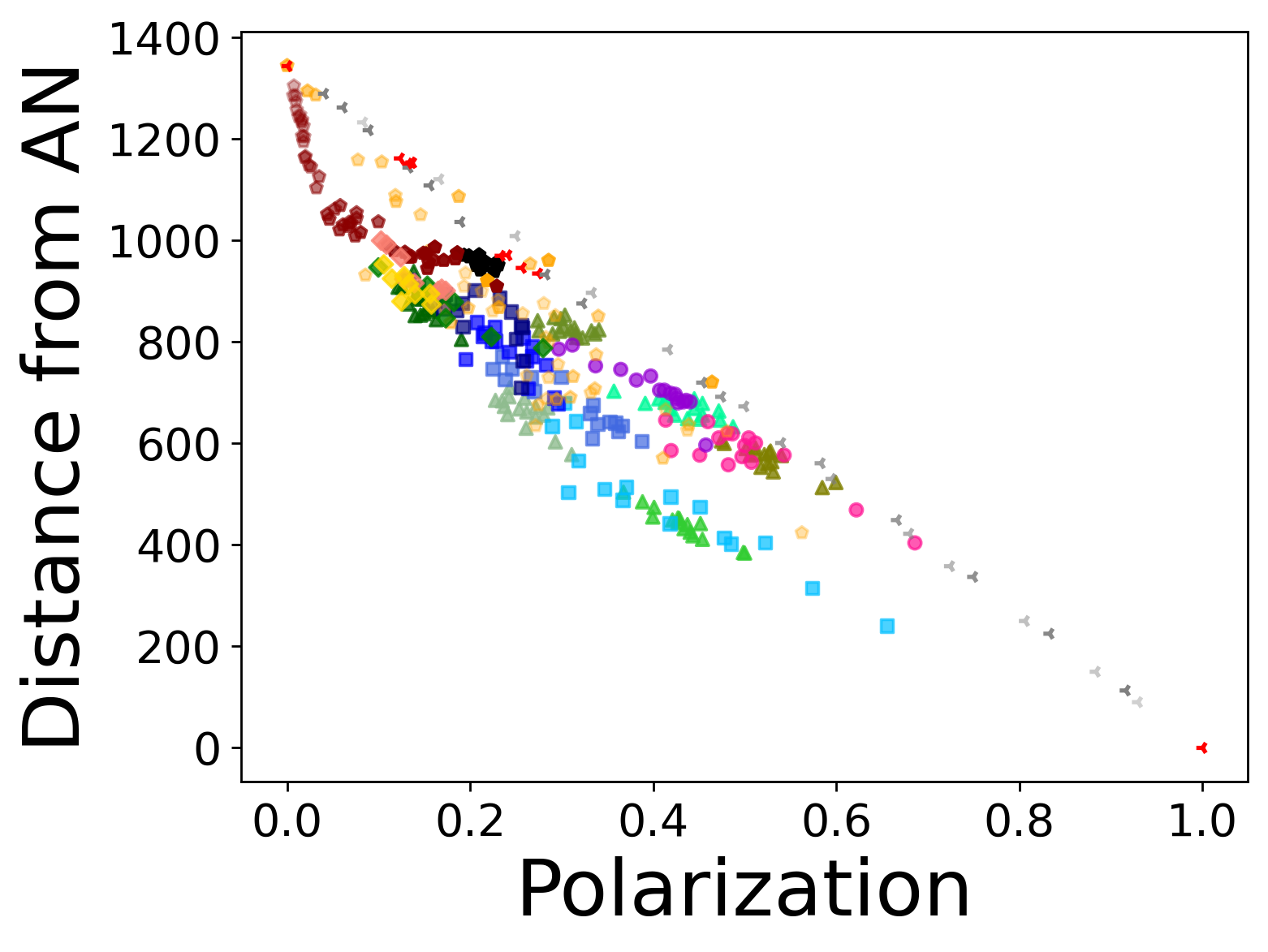}
    \end{subfigure}
\end{figure*}
\begin{figure*}[b]\ContinuedFloat
     \centering
     \begin{subfigure}[t]{0.05\textwidth}
         \centering
         \includegraphics[width=\textwidth]{img/colored_maps/mallows_label.png}
     \end{subfigure}
     \hspace{0.23cm}
     \begin{subfigure}[t]{0.266\textwidth}
         \centering
         \includegraphics[width=\textwidth]{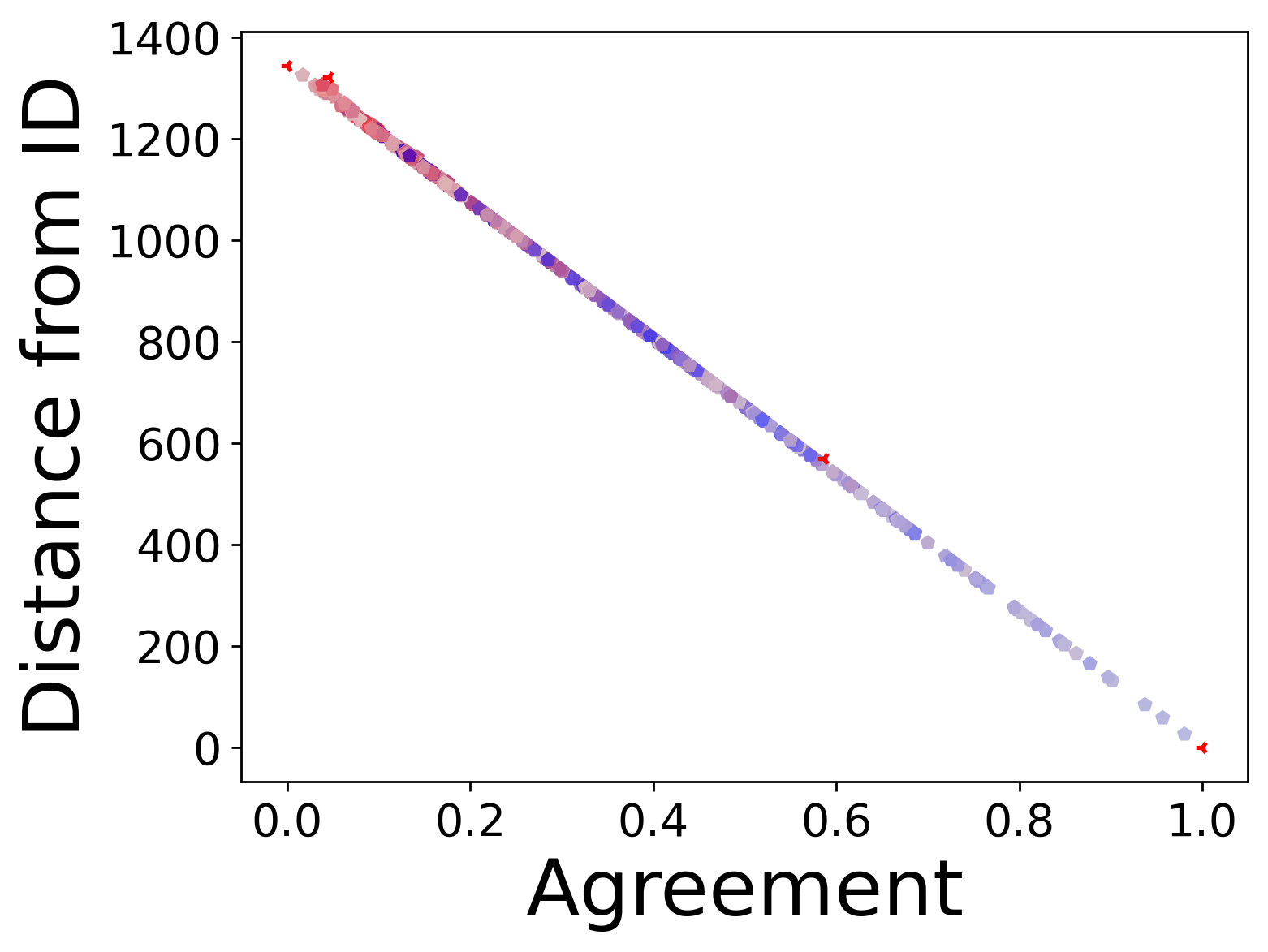}
     \end{subfigure}
     \hspace{0.23cm}
     \begin{subfigure}[t]{0.266\textwidth}
         \centering
         \includegraphics[width=\textwidth]{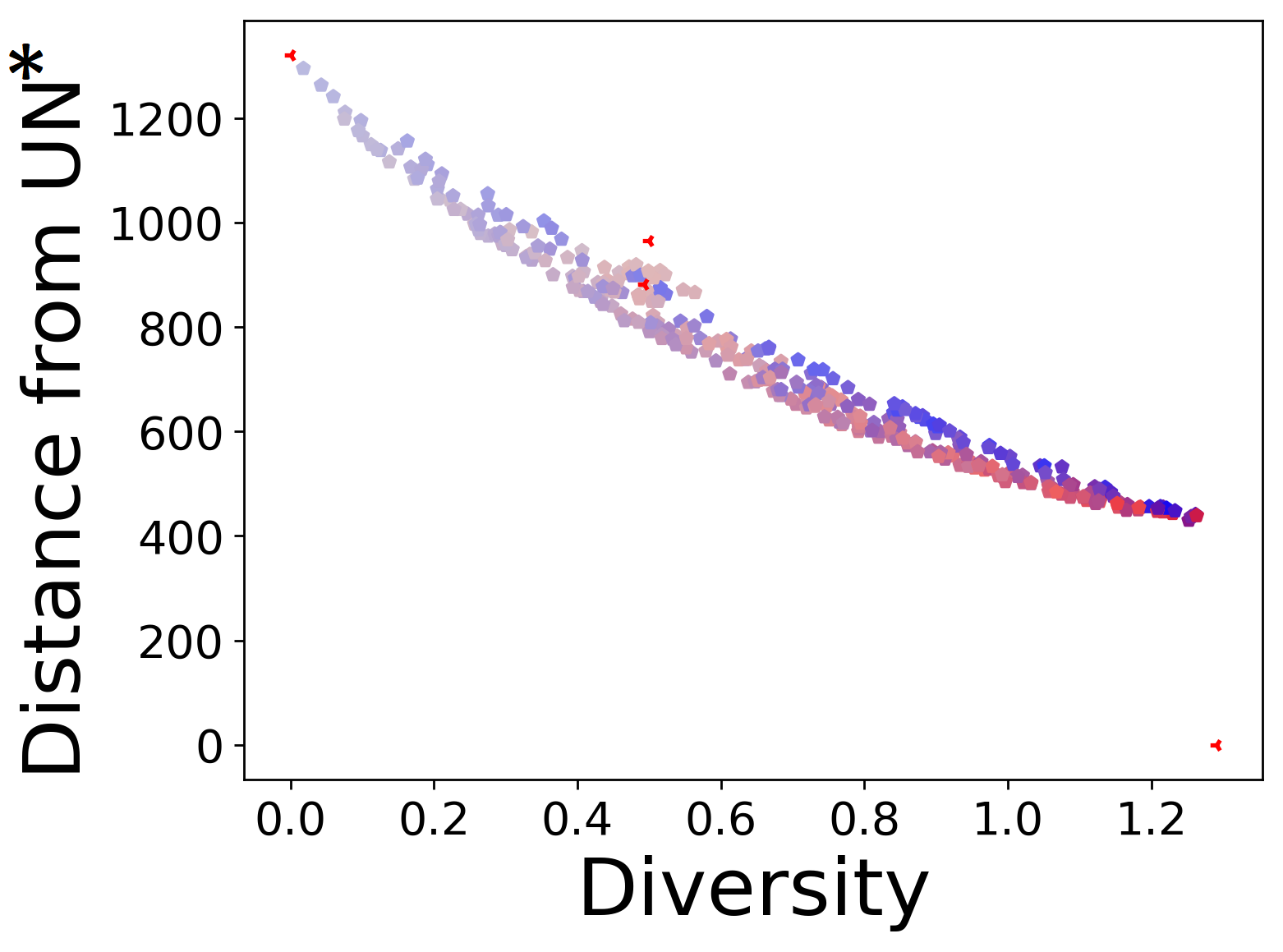}
     \end{subfigure}
     \hspace{0.23cm}
     \begin{subfigure}[t]{0.266\textwidth}
         \centering
         \includegraphics[width=\textwidth]{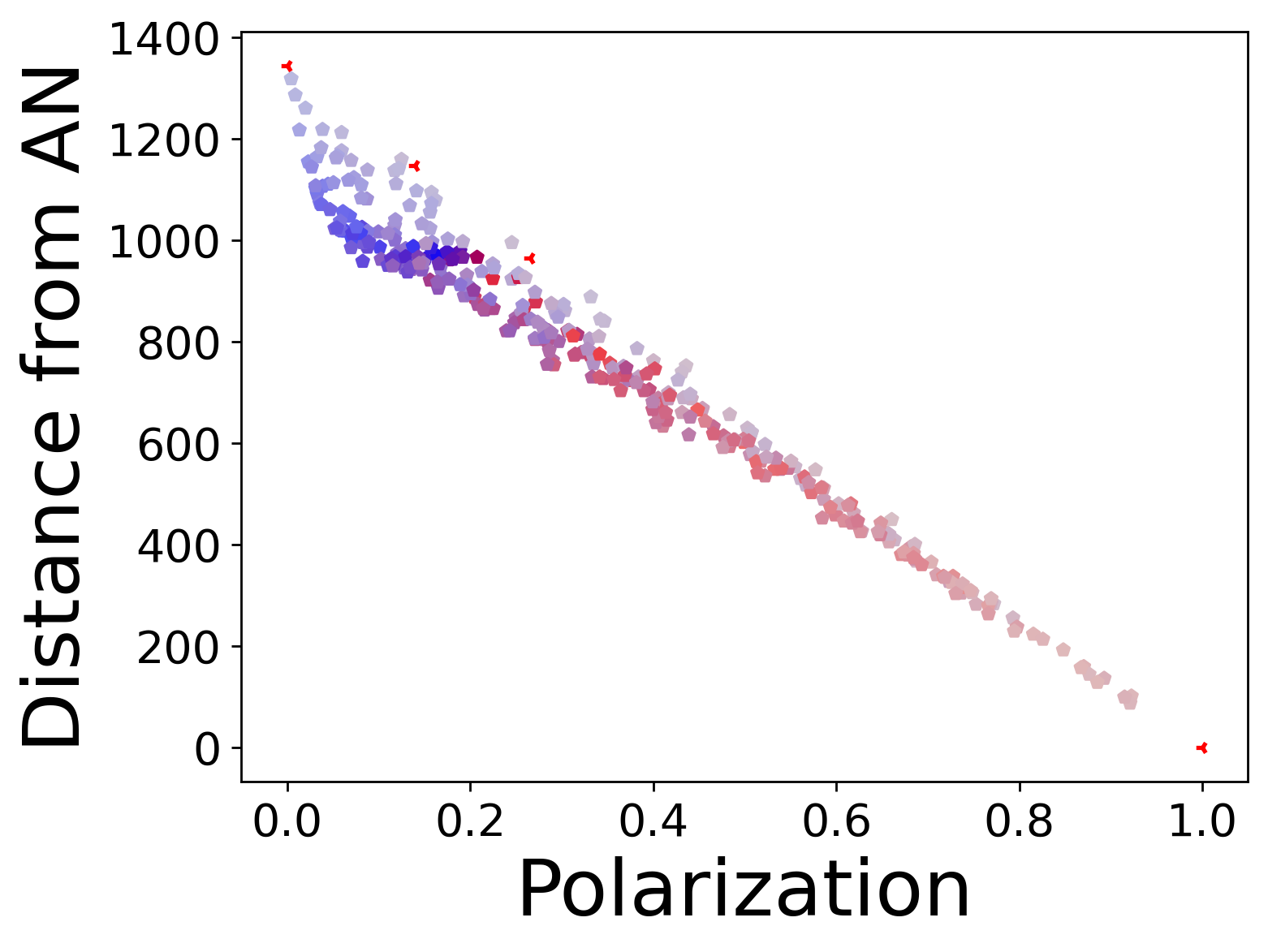}
    \end{subfigure}
    \caption{The plots showing the correlations between agreement, diversity, and polarization indices and distances from \ID, $\appUN$, and \AN, respectively.
    The rows correspond to the datasets and columns to index--distance-from-election pairs.}
    \label{fig:correlations:all}
\end{figure*}

\end{document}